\newcommand{\ket}[1]{\vert#1\rangle}
\newcommand{\bra}[1]{\langle#1\vert}
\newcommand{\ketbra}[2]{\vert #1 \rangle \langle #2 \vert}
\newcommand{\tr}{\text{\normalfont Tr}}
\newcommand{\ie}{\textit{i.e.}}
\newcommand{\eg}{\textit{e.g.}}
\DeclarePairedDelimiter{\ceil}{\lceil}{\rceil}
\DeclarePairedDelimiter{\floor}{\lfloor}{\rfloor}
\newcommand{\doublewidetilde}[1]{{%
  \mathpalette\double@widetilde{#1}%
}}
\newcommand{\double@widetilde}[2]{%
  \sbox\z@{$\m@th#1\widetilde{#2}$}%
  \ht\z@=.9\ht\z@
  \widetilde{\box\z@}%
}
\newcommand{\map}[1]{\widetilde{#1}}  
\newcommand{\smap}[1]{\doublewidetilde{{#1}_{}}}
\renewcommand{\H}{\mathcal{H}}
\newcommand{\I}{\mathcal{I}}
\renewcommand{\O}{\mathcal{O}}
\newcommand{\C}{\mathfrak{C}}
\newcommand{\im}{\mathrm{i}}
\newcommand{\blue}[1]{\textcolor{blue}{#1}}
\newtheorem{theorem}{Theorem}
\newtheorem{lemma}{Lemma}
\newcommand{\todai}{Department of Physics, Graduate School of Science, The University of Tokyo, Hongo 7-3-1, Bunkyo-ku, Tokyo 113-0033, Japan}
\begin{document}

\title{Probabilistic exact universal quantum circuits for transforming unitary operations} 

\author{ Marco Túlio Quintino }
\affiliation{\todai}

\author{   Qingxiuxiong Dong }
\affiliation{\todai}

\author{   Atsushi Shimbo}
\affiliation{\todai}

\author{ Akihito Soeda }
\affiliation{\todai}

\author{ Mio Murao}
\affiliation{\todai}

\date{15th of April 2020}


\begin{abstract}
		This paper addresses the problem of designing universal quantum circuits to transform $k$ uses of a $d$-dimensional unitary input-operation into a unitary output-operation in a  probabilistic heralded manner.
		Three classes of {protocols are considered, parallel circuits, where the input-operations can be performed simultaneously, adaptive circuits, where sequential uses of the input-operations are allowed,  and general protocols, where the use of the input-operations may be performed without a definite causal order.} For these three classes, we develop a systematic semidefinite programming approach that finds a circuit which obtains the desired transformation with the maximal success probability.
		We then analyse in detail three particular transformations; unitary transposition, unitary complex conjugation,  and unitary inversion. For unitary transposition and unitary inverse, we prove that for any fixed dimension $d$, adaptive circuits have an exponential improvement in terms of uses $k$ when compared to parallel ones. For unitary complex conjugation and unitary inversion we prove that if the number of uses $k$ is strictly smaller than $d-1$, the probability of success is necessarily zero. We also discuss the advantage of indefinite causal order protocols over causal ones and introduce the concept of delayed input-state quantum circuits.
\end{abstract}



\maketitle


\section{Introduction}
		
		In quantum mechanics, deterministic transformations between states are represented by quantum channels and probabilistic transformations by quantum instruments, which consist of quantum channels followed by a quantum measurement. Understanding the properties of quantum channels and quantum instruments is a standard and well established field of research with direct impact for theoretical and applied quantum physics \cite{krausbook,chuang}. Similarly to states, quantum channels may also be subjected to universal transformation in a paradigm usually referred as \textit{higher order} transformations. Higher order transformations can be formalised by quantum supermaps \cite{chiribella07,chiribella08} and physically implemented by means of quantum circuits (see Fig.\,\ref{fig:1}). Despite its fundamental value and potential for applications (\eg, quantum circuit designing \cite{chiribella07}, quantum process tomography \cite{bisio08}, testing causal hypothesis \cite{chiribella18}, channel discrimination \cite{chiribella08C},  aligning reference frames \cite{bartlett08,bisio10}, analysing the role of causal order \cite{chiribella09,oreshkov11}), higher order transformations are still not well understood when compared to quantum channels and quantum instruments.
		
	  	Reversible operations play an important role in mathematics and in various physical theories such as quantum mechanics and thermodynamics.  In quantum mechanics, reversible operations are represented by unitary operators \cite{wolfbook_coro,cariello12}. 
	This work considers \textit{universal} transformations between reversible quantum transformations, that is, we seek for quantum circuits which implement the desired transformation for any  unitary operation of some fixed dimension without any further specific details of the input unitary operation. From a practical perspective, this universal requirement ensures that the circuit does not require any readjustments or modification when different inputs are considered and the circuit implements the desired transformation even when the description of the $d$-dimensional reversible operation is unknown. Note that the universal requirement also imposes strong constraints on transformations which can be physically realised. {A well-known example which pinpoints these constraints when considering quantum states is quantum cloning, although it is simple to construct a quantum device that clones qubits which are promised to be in the state $\ket{0}$ or $\ket{1}$, it is not possible to design a universal quantum transformation that clones all qubit states \cite{wootters82}. Another interesting example can be found in Ref.\,\cite{buzek99} where the authors consider universal not gates for qubits.}
	 \begin{figure} [t]

\begin{center}
\includegraphics[scale=.18]{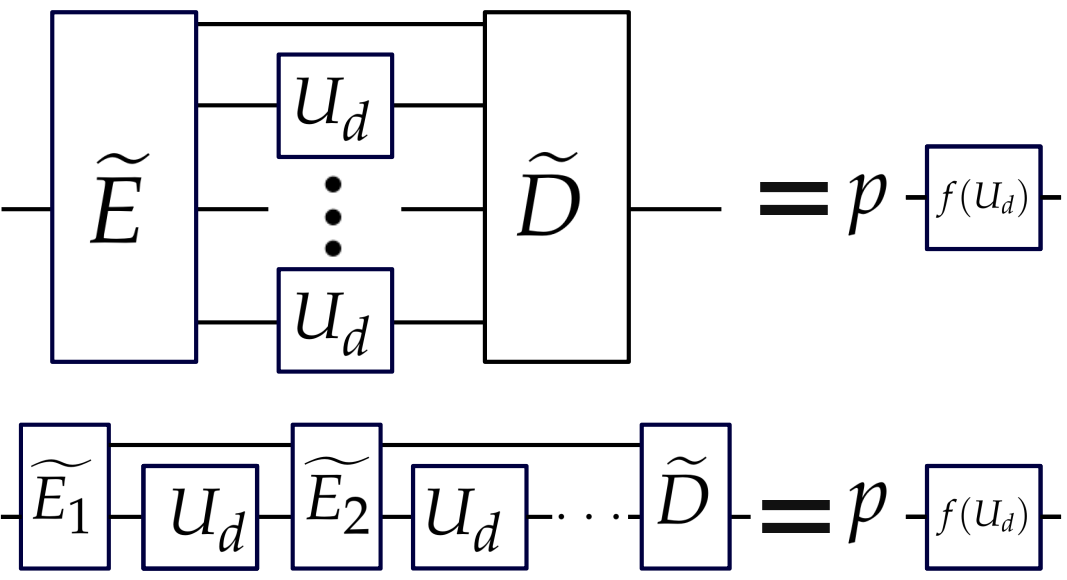} 
\end{center}
\caption{Pictorial representation of parallel and adaptive quantum circuits that transform $k$ uses of a $d$-dimensional arbitrary unitary input-operation described by $U_d$ into another unitary operation described by $f(U_d)$. The circuit elements $\map{E}$ and $\map{E_i}$ are quantum deterministic operations, \ie, quantum channels, that may be interpreted as encoders and the element $\map{D}$ stands for decoder, a probabilistic quantum operation (quantum instrument), that is, a quantum channel with a quantum measurement that when the ``correct''	 outcome is obtained, the target transformation is obtained perfectly. 
}

\label{fig:1}
\end{figure}

	Universal transformations on reversible quantum operations have been studied from several perspectives and motivations such as gate discrimination \cite{acin01,dariano02}, cloning unitary operations \cite{chiribella08-2}, preventing quantum systems to evolve \cite{sardharwalla16,navascues17}, designing quantum circuits \cite{chiribella07}, learning the action of a unitary \cite{vidal01,sasaki02,gammelmark09,bisio10,sedlak18}, transforming unitary operations into their complex conjugate \cite{miyazaki17}, understanding the role of causal order in quantum mechanics \cite{chiribella09,araujo15}, and others \cite{ebler18,salek18}. Probabilistic exact transformations between multiple uses of reversible operations via quantum circuits have been considered in Ref.\,\cite{PRL} where the authors target transforming an arbitrary unitary operation into its inverse and in Ref.\,\cite{bisio13} where the authors consider the case where the unitary input-operation  and the unitary output-operation are two different representations of the same group element. Also, Ref.\,\cite{navascues17,trillo19} consider the probabilistic exact circuits which act only in an auxiliary system which interacts to the target one via some random Hamiltonian.

		 	This paper is focused on designing universal quantum circuits which are not exclusively tailored for a particular class of input-operations. That is, it should attain the desired transformation for any $d$-dimensional unitary operation even if its description is not known. In particular, we focus on  probabilistic heralded transformations between multiple uses of reversible operations. In particular, we focus on  probabilistic heralded transformations between multiple uses of reversible operations. More precisely, we consider circuits which make use of a quantum measurement with an output associated with success and, when the success outcome is obtained, the transformation is implemented perfectly. { We consider three classes of quantum protocols: parallel circuits, where the input-operations can be performed simultaneously, adaptive circuits where the input-operations may be used sequentially, and general protocols which may not be realisable by quantum circuits but are consistent with quantum theory when the use of the input-operations may be performed in an indefinite causal order \cite{chiribella09,oreshkov11,araujo14}. 
		 	We present a systematic approach based on semidefinite programming that allows us to analyse transformations which is linear on quantum operations.  We then analyse in details three particular transformations, unitary transposition, unitary complex conjugation, and unitary inversion.}

		Section\,\ref{sec:intro} reviews results related to quantum circuits such as quantum supermaps, quantum combs, process matrices, and other important concepts.
		Section\,\ref{sec:SDP} presents a general semidefinite programming (SDP) approach to design optimal probabilistic exact quantum circuits.
		Section\,\ref{sec:delayed} introduces the concept of delayed input-states quantum circuits.  
		Section\,\ref{sec:conjugation} analyses circuits for implementing unitary complex conjugation.
		Section\,\ref{sec:trans} analyses circuits for unitary transposition.
		Section\,\ref{sec:inverse}  analyses circuits for unitary inversion and Sec.\,\ref{sec:conclusions} concludes and discusses the main results.



\section{Review on higher order quantum operations and supermaps} \label{sec:intro}

	In this section we establish our notations and review how to represent and analyse quantum circuits and transformations between quantum operations in terms of supermaps. We refer to transformations between quantum states as lower order operations (\ie, quantum channels and quantum instruments) and transformations between quantum operations (\eg, channels, instruments, quantum circuits	) as higher order operations, which will be named as superchannels and superinstruements.

	\subsection{The Pills-Choi-Jamio{\l}kowski Isomorphism}
	 	We start by reviewing the Choi isomorphism \cite{pills67,jamiolkowski72,choi75} (also known as Pills-Choi-Jamio{\l}kowski isomorphism), a useful way to represent linear maps and particularly convenient for completely positive ones.
  	   Let $L(\mathcal{H})$ be the set of linear operators mapping a linear (Hilbert) space $\mathcal{H}$ to another space isomorphic to itself. This work only considers finite dimensional quantum systems, hence all linear spaces $\mathcal{H}$  are  isomorphic to $\mathbb{C}^d$, $d$-dimensional  complex linear spaces.	
		Any map\footnote{Symbols with a tilde represent linear maps.} $\map{\Lambda}:L(\mathcal{H}_{\text{in}})\to L(\mathcal{H}_{\text{out}})$ has a one to one representation via its Choi operator defined by,
	\begin{equation}
	\mathfrak{C}(\map{\Lambda}):= \sum_{ij} \map{\Lambda}(\ketbra{i}{j})\otimes \ketbra{i}{j} \in L(\mathcal{H}_{\text{out}}\otimes\mathcal{H}_{\text{in}}),
	\end{equation}
	where $\{\ket{i}\}$ is an orthonormal basis.  An important theorem regarding the Choi representation is that a map $\map{\Lambda}$ is completely positive (CP) if and only if its Choi operator $\mathfrak{C}(\map{\Lambda})$  is positive semidefinite \cite{choi75}.
	When the Choi operator $\mathfrak{C}(\map{\Lambda})$ of some map is given, one can obtain the action of $\map{\Lambda}$ on any operator  $\rho_{\text{in}}\in L(\mathcal{H}_{\text{in}})$ via the relation
		\begin{equation}
	\map{\Lambda}(\rho_{\text{in}})= \tr_{\text{in}} \left( \mathfrak{C}(\map{\Lambda}) \left[ I_{\text{out}} \otimes\;  \rho^T_{\text{in}  }\right] \right),
	\end{equation}
	where $\rho_{\text{in}}^T$ is the transposition of the operator $\rho_{\text{in}}$ in terms of the $\{\ket{i}\}$ basis of $\H_\text{in}$
	
	We now present a useful mathematical identity regarding the Choi isomorphism. Let $U_d$, $A$ and $B$ be  $d$-dimensional unitary operators\footnote{In principle, this identity also holds even when $U_d$, $A$, and $B$ are not unitary but general $d$-dimensional linear operators.}. 
	Any unitary quantum operation  $\map{U_d}(\rho):=U_d\rho U_d^\dagger$ can be represented by its Choi operator as $ \mathfrak{C}(\map{U_d})$ and a straightforward calculation shows that 
	\begin{equation} \label{eq:idenitity_choi}
			\Big[ A\otimes B \Big] \mathfrak{C}(\map{U_d}) \Big[ A^\dagger \otimes B^\dagger \Big]=  \mathfrak{C}(\map{AU_dB^T}).
    \end{equation}
		
	
	\subsection{Supermaps with single use of the input-operations} \label{sec:superchannel}

		In quantum mechanics, physical states are represented by positive operators: $\rho\in L(\mathcal{H})$, $\rho \geq 0$, with unit trace: $\tr{(\rho)}=1$.  	
	In this language, {universal} transformations between quantum states are represented by linear maps, to which we refer as just maps, $\map{\Lambda}: L(\mathcal{H}_{\text{in}})\to L(\mathcal{H}_{\text{out}})$ that are CP \cite{krausbook,chuang}. {Here, by universal we mean that the map $\map{\Lambda}$ is defined for all quantum states $\rho  \in  L(\mathcal{H}_{\text{in}})$ and the physical transformation can be applied to any of these states.} Quantum channels are deterministic quantum operations and are represented by CP maps that preserve the trace of all quantum states $\rho \in L(\H_{\text{in}})$. Probabilistic heralded {universal transformations between quantum states} are represented by quantum instruments, a set of CP maps $\{\map{\Lambda_i}\}$ that sum to a trace preserving one, \ie, $\map{\Lambda}:=\sum_i \map{\Lambda_i}$ is trace preserving (TP). Quantum instruments describe measurements in quantum mechanics%
\footnote{Every instrument $\{\map{\Lambda_i}\}$ corresponds to a unique positive operator valued measure (POVM) measurement $\{M_i\}$, $M_i\in L(\H_{\text{in}}), {M_i\geq 0,} \sum_i M_i = I_{\H{\text{in}}}$, such that  $\tr \left(\rho M_i \right)= \tr\left(\map{\Lambda_i}(\rho)\right)$ for every state $\rho\in L(\H_{\text{in}})$. The POVM $\{M_i\}$ can be written explicitly as $M_i=\map{\Lambda_i^\dagger}(I_{\H_{\text{out}}})$ where $\map{\Lambda_i^\dagger}$ is the adjoint map of $\map{\Lambda_i}$.}.%
		When the set of instruments $\{\map{\Lambda_i}\}$ is performed, the outcome $i$ is obtained with probability $\tr\left(\map{\Lambda_i}(\rho)\right)$, and the state $\rho$ is transformed to $\frac{\map{\Lambda_i}(\rho)}{\tr\left(\map{\Lambda_i}(\rho)\right)}$.

		An important realisation theorem of quantum channels is given by the Stinespring dilation \cite{stinespring55} which states that every quantum channel $\map{\Lambda}$ can be realised by first applying an isometric operation, \ie, a unitary with auxiliary systems and then discarding a part of the system. More precisely, every CPTP map $\map{\Lambda}: L(\H_{\text{\text{in}}})\to L(\mathcal{H}_{\text{out}})$ can be written as
		\begin{equation}
			\map{\Lambda}(\rho)=\tr_{\mathcal{A}'}  \left( U \left[\rho\otimes \sigma_{\mathcal{A}}\right]  U^\dagger\right)
		\end{equation} 
		 where $\sigma_{A}\in L(\mathcal{A})$ is some (constant) auxiliary state, $U :\mathcal{H}_{\text{\text{in}}}\otimes \mathcal{A}\to \mathcal{H}_{\text{\text{out}}}\otimes \mathcal{A}'$ is a unitary acting on the main and auxiliary system,
		 	and $\tr_{\mathcal{A}'}$ is a partial trace on some subsystem $\mathcal{A}'$ such that $\H_{\text{out}}\otimes \mathcal{A}'$ is isomorphic to $\H_{\text{\text{in}}} \otimes \mathcal{A}$.
		 	 
		 Quantum instruments also have an important realisation theorem that follows from Naimark's dillation \cite{naimark40,krausbook}. Every quantum instrument can be realised by a quantum channel followed by a projective measurement, \ie, a measurement which all its POVM elements are projectors, on some auxiliary system. More precisely, if $\{ \map{\Lambda_i}: L(\mathcal{H}_{\text{\text{in}}})\to  L(\mathcal{H}_{\text{out}}) \}$ represents an instrument, there exist a quantum channel $\map{C}:L(\mathcal{H}_{\text{\text{in}}})\to  L(\mathcal{H}_{\text{out}}\otimes \mathcal{A}) $ and a projective measurement given by $\{\Pi_i\}$ where $\Pi_{i} \in L(\mathcal{A})$ which satisfies:
		 \begin{equation}
		 	\map{\Lambda_i}(\rho)=\tr_{\mathcal{A}'} \left({\map{C}}(\rho) \left[ I_{\H_{\text{out}}}\otimes \Pi_i \right] \right).
		 \end{equation}

		 We now define {universal} transformations between quantum operations in an analogous way in terms of linear supermaps \cite{chiribella08}. Linear supermaps, to which we also refer as just supermaps, are linear transformations between maps. A supermap%
		 \footnote{Symbols with a double tilde represent linear supermaps.}, 
		 \begin{equation}
		 	\smap{\phantom{.}S\phantom{.}}: [L(H_2)\to L(H_3)]\to [L(H_1)\to L(H_4)] 
		 \end{equation}
	    	  represents transformations between input-maps $\map{\Lambda_{\text{\text{in}}}}: L(\mathcal{H}_{2})\to L(\mathcal{H}_{3})$
		 	 to output ones ${\map{\Lambda_{\text{out}}}: L(\mathcal{H}_{1})\to L(\mathcal{H}_{4})}$.
	 For instance, let $\map{U_d}(\rho):=U_d \rho U_d^\dagger$ be the map associated to the $d$-dimension unitary operation $U_d$, the supermap that transforms a unitary operation into its inverse is given by	$\smap{S}(\map{U_d})=\map{U^{-1}_d}$.
	 
	 We say that a supermap $\smap{S}$ is TP preserving (TPP) if it transforms TP maps into TP maps. Similarly, a supermap is CP preserving (CPP) when it transforms CP maps into CP maps, and completely CP preserving (CCPP) if the every trivial extension $\smap{S}\otimes \smap{I}$,  of $\smap{S}$ is CPP, where $\smap{I}(\map{\Lambda}) = \map{\Lambda},\; \forall \map{\Lambda}$. 
	 	 A \textit{superchannel} $\smap{C}$ is a supermap which respects two basic constraints: 1) it transforms valid quantum channels into valid quantum channels (hence, CPP and TPP); 2) when  performed to a part of a quantum channel the global channel remains valid (hence, CCPP).

		Any one single use superchannel $\smap{C}$ has a deterministic realisation in quantum theory and similarly to the Stinespring dilation theorem, it can be shown that every superchannel admits a decomposition in terms of \textit{encoder} and \textit{decoder} of the form \cite{chiribella08}, 
		\begin{equation}
		\smap{C}( \map{\Lambda})=\map{D} \circ \left[ \map{\Lambda} \otimes   \map{I_A} \right] \circ \map{E}
		\end{equation} where $\map{E}:L(\H_1)\to L(\H_2)\otimes L(\H_A) $ is an isometry which maps an input-state  $\rho_\text{in} \in L(\mathcal{H}_1)$ to the space where the map $\map{\Lambda}$ acts and an auxiliary one $L(\mathcal{H}_A)$, $\map{I_\mathcal{A}}$ is the identity map on the auxiliary system (\ie $\map{I_A}(\sigma_A) = \sigma_A, \; \forall \; \sigma_A\in L(\mathcal{H}_A$), $\map{D}:L(\H_3\otimes \H_A)\to L(\H_4)$ is a unitary operation followed by a partial trace on a part of the system (see  Fig. \ref{fig:encoder_decoder}).  
		
		\begin{figure}  
\begin{center}
\includegraphics[scale=.18]{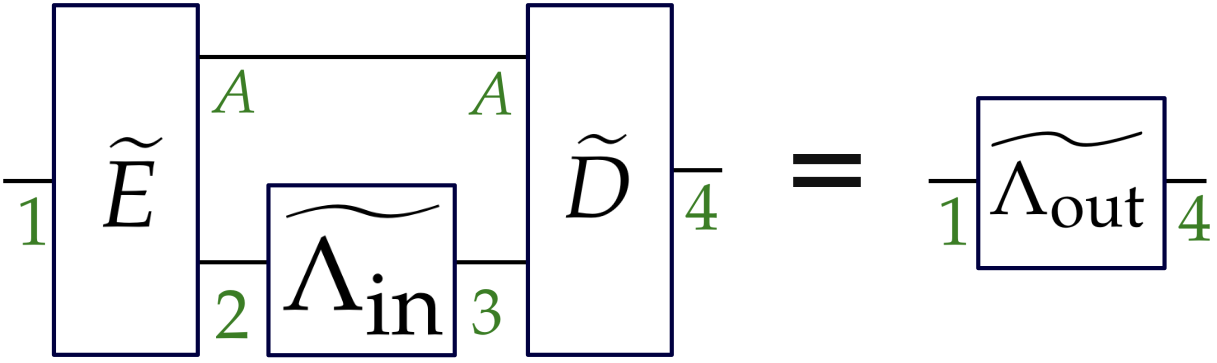} 
\end{center}
\caption{Every superchannel $\smap{C}: [L(\H_2)\to L(\H_3)]\to [L(\H_1)\to L(\H_4)] $ transforming input-channels $ \map{\Lambda_{{\text{\text{in}}}}}: L(\mathcal{H}_{2})\to L(\mathcal{H}_{3})$ into output channels $ \map{\Lambda_{{\text{out}}}}: L(\mathcal{H}_{1})\to L(\mathcal{H}_{4})$ can be decomposed as $	\smap{C}( \map{\Lambda_{{\text{\text{in}}}}})= \map{D} \circ \left[ \map{\Lambda_{{\text{\text{in}}}}} \otimes   \map{I_A} \right]  \circ \map{E}  $ where the encoder operation $\map{E}:L(\H_1)\to L(\H_2)\otimes L(\H_A) $ is an isometry operation,  $L(\mathcal{H}_A)$ is a space for some possible auxiliary system, and the decoder $\map{D}:L(\H_3\otimes \H_A)\to L(\H_4)$ is a unitary operation followed by a partial trace on a part of the system.}\label{fig:encoder_decoder}
\end{figure}
	
	 	The Choi representation allows us to describe any supermap $\smap{S}:[L(\H_2)\to L(\H_3)]\to [L(\H_1)\to L(\H_4)]$ as a map $\map{S}:L(\mathcal{H}_3\otimes\mathcal{H}_2) \to L(\mathcal{H}_4\otimes\mathcal{H}_1) $ acting on Choi operators.
		And by exploiting the Choi representation  again, we can represent any supermap $\smap{S}$ by a linear operator $S:=\mathfrak{C}(\map{S}) \in L(\mathcal{H}_4 \otimes \mathcal{H}_1 \otimes  \mathcal{H}_3 \otimes  \mathcal{H}_2)$, which is useful to characterise the set of supermaps with quantum realisations. In Ref.\,\cite{chiribella07,chiribella08} the authors show that a $\smap{C}$ is a superchannel if and only if its Choi representation $C$ respects 
		\begin{equation} \label{eq:superchannel}
		\begin{split}
		C &\geq0 ;\\
			\tr_{4} C &= \tr_{43} C  \otimes  \frac{I_3}{d_3}  ; \\
			\tr_{234} C &= \tr_{1234} C  \otimes  \frac{I_1}{d_1}  ; \\
			\tr(C) &= d_1 d_3, 
			\end{split}
	\end{equation}
	where $d_i$ is the dimension of the linear space $\H_i$. {We remark that although we introduce the general formalism where the dimensions $d_i$ may depend on $i$, we focus our results to the case where $d_i=d$ is independent of $i$.}

	 Supermaps with probabilistic a heralded quantum realisation are given by \textit{superinstruments} and play a similar role of instruments in {higher order quantum operations, that is, it formalises probabilistic transformations on quantum operations.} Superinstruments are a set of CCPP supermaps $\{\smap{C_i}\}$ that sums to a superchannel. The probability of obtaining the outcome $i$ when the superinstrument $\{ \smap{C_i} \}$ acts on the input-map $\map{\Lambda}$ and input-state $\rho$ is $\tr \left( \left[ \smap{C_i}\left(\map{\Lambda}\right)\right] (\rho) \right)$ and the state $\frac{ \left[ \smap{C_i}\left(\map{\Lambda}\right)\right]  (\rho) }{\tr \left( \left[ \smap{C_i}\left(\map{\Lambda}\right) \right] \; (\rho) \right)}$ is obtained. It follows from Ref.\,\cite{bisio16} that any superinstrument can be realised by a superchannel followed by a projective measurement, or equivalently,
	 \begin{equation} \label{eq:realise_superis}
		\smap{C_i}( \map{\Lambda})=  \map{D_i} \circ \left[ \map{\Lambda} \otimes   \map{I_A} \right] \circ \map{E},
	\end{equation} 
here $\map{E}:L(\H_1)\to L(\H_2)\otimes L(\H_A) $ is an isometry which maps an input-state  $\rho \in L(\mathcal{H}_1)$ to the space where the map $\map{\Lambda}$ acts and an auxiliary one $L(\mathcal{H}_A)$, $\map{I_A}$ is the identity map on the auxiliary system (\ie, $\map{I_A}(\sigma_A) = \sigma_A, \; \forall \; \sigma_A\in L(\mathcal{H}_A)$), and the maps $\map{D_i}:L(\H_3\otimes \H_A)\to L(\H_4)$ form an instrument corresponding to a projective measurement.


  \subsection{Supermaps involving $k$ input-operations} \label{sec:supermaps}
  
  		In the previous section we have introduced supermaps corresponding to protocols involving a single use of an input-operation. We now consider protocols transforming $k$, potentially different, operations into another.
  	  	Let $\smap{C}$ be a superchannel which transforms $k$ input-channels\footnote{We 
  	  	remark that here the subindex $j$ stands for a label for the channel  	  	${\map{\Lambda_j}: L(\mathcal{I}_j) \to L(\mathcal{O}_j)}$, 	  	 not for some instrument element of an instrument $\{\Lambda_j\}$.} 
  	  	  $\map{\Lambda_j}: L(\mathcal{I}_j) \to L(\mathcal{O}_j)$  with $j\in\{1,\ldots,k\}$ into an output one $\map{\Lambda_0}:L(\mathcal{I}_{0})\to L(\O_0)$.  We also define the total input-state space as $\mathcal{I}:=\bigotimes_{j=1}^{k}  \mathcal{I}_j$  and the total output space state $\mathcal{O}:=\bigotimes_{j=1}^{k} \mathcal{O}_j$, hence $\smap{C}: \left[L(\mathcal{I})\to L(\mathcal{O})\right] \to \left[L(\mathcal{I}_0)\to L(\mathcal{O}_0)\right] $.
 

  	\begin{figure}  
	\begin{center}
	\includegraphics[scale=.2]{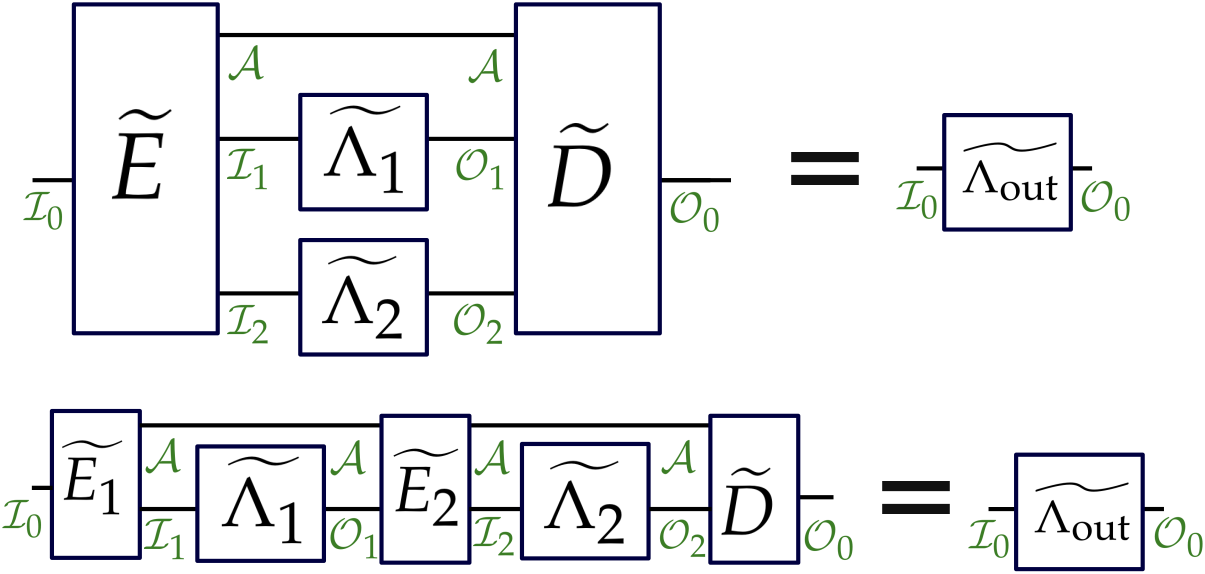} 
	\end{center}
			\caption{Illustration of a parallel (upper circuit) and an adaptive (lower circuit) protocol that transform a pair of quantum operations $\map{\Lambda_1}$ and $\map{\Lambda_2}$ into an output one $\map{\Lambda_{\text{out}}}$.} \label{fig:adapt_non_adapt}
	\end{figure}
  	

  	  	Similarly to the single input-channel case, superchannels transforming $k$ quantum operations are supermaps which: 1) transform $k$ valid quantum channels into a valid quantum channel; 2) when  performed on a part of a quantum channel, the global channel remains valid. Differently from the $k=1$ case, not all superchannels have a deterministic quantum realisation in terms of encoders and decoders in the standard quantum circuit formalism \cite{chiribella09}. This impossibility occurs because the definition of quantum realisation does not require explicitly that the $k$ channels should be used in a definite causal order and it allows protocols which use the input-channels with an indefinite causal order \cite{oreshkov11}.
  	  	
  	  	Protocols that can be implemented in the standard causally ordered circuit formalism are referred to as quantum networks/quantum combs \cite{chiribella07} or channels with memory \cite{kretschmann05}. We divide these ordered circuits in two classes: a) parallel ones where $k$ channels can be used simultaneously; b) adaptive ones where the $k$ channels are explored in a causal sequential circuit (see Fig.\,\ref{fig:adapt_non_adapt}).

  		Parallel protocols transforming $k$ channels are very similar to single-channel superchannels presented in the last subsection. Define $\map{\Lambda}:=\bigotimes_{j=1}^{k}  \map{\Lambda_j}$, a superchannel $\smap{C}$ represents a parallel protocol if it can be written as $\smap{C}(\map{\Lambda})=\map{D}\circ \left[ \map{\Lambda}\otimes \map{I_\mathcal{A}} \right] \circ \map{E}$ for some channels $\map{E}:L(\mathcal{I}_0)\to L(\mathcal{I}\otimes \mathcal{A})$ and $\map{D}:L(\O\otimes \mathcal{A})\to \O_0$. It follows from the characterisation of Eq.\,\eqref{eq:superchannel} that a Choi operator 	$C\in L(\mathcal{I}_0 \otimes \bigotimes_{j=1}^{k} \mathcal{I}_j \otimes \bigotimes_{j=1}^{k} \O_j  \otimes \O_0  )$ represents a parallel protocol if and only if
   	  	
  	\begin{equation}  \label{eq:non_adaptive}
  	 \begin{split}
	C &\geq0 ;\\
	\tr_{\O_0} C &= \tr_{\O\O_0} C \otimes \frac{I_{\mathcal{O}}}{d_{\mathcal{O}}}  ;  \\
	 \tr_{\mathcal{I}\O\O_0} C &=  \tr_{\mathcal{I}_0\mathcal{I}\O\O_0} C  \otimes \frac{I_{\mathcal{I}_0}} {d_{\mathcal{I}_0}} ;  \\
	\tr(C) &= d_{\mathcal{I}_0} d_{\mathcal{O}}.  
	\end{split}  
	\end{equation}

		Adaptive circuits can exploit a causal order relation between the channels $\map{\Lambda_j}$ to implement protocols that cannot be done in a parallel way. A simple example is the supermap that concatenates the channels 
	$\map{\Lambda_1}$ and  $\map{\Lambda_2}$ to obtain  $ \map{\Lambda_2} \circ  \map{\Lambda_1}  $. This supermap has a trivial implementation in an adaptive circuit (just concatenates the channels) but cannot be implemented in a \textit{deterministic} parallel scheme.

		A superchannel ${\smap{C}: \left[ L(\mathcal{I})\to L(\mathcal{O})\right] \to \left[ L(\mathcal{I}_0)\to L(\mathcal{O}_{0})\right]}$ corresponds to an adaptive  circuit if it can be written as\footnote{Note that since we do not restrict the dimension of the auxiliary system $\mathcal{A}$, all parallel protocols can be realised by an adaptive  circuit.}
		\begin{equation}
		\smap{C}(\map{\Lambda})=\map{D}\circ \left[ \map{\Lambda_k} \otimes \map{I_\mathcal{A}} \right] \circ  \map{E_k} \circ \ldots \circ \left[ \map{\Lambda_1} \otimes \map{I_\mathcal{A}} \right]\circ \map{E_1}
		\end{equation}
		for some channels $\map{E_1}:L(\mathcal{I}_0)\to L(\mathcal{I}_1\otimes \mathcal{A})$, $\map{E_i}:L(\O_{i-1}\otimes \mathcal{A})\to (\mathcal{I}_{i}\otimes \mathcal{A})$ with $i\in\{2,\ldots,k\}$, and $\map{D}:L(\O_k\otimes \mathcal{A})\to L(\O_0)$.
		A Choi operator	$C\in L(\mathcal{I}_0 \otimes \bigotimes_{j=1}^{k} \mathcal{I}_j \otimes \bigotimes_{j=1}^{k} \O_j  \otimes \O_0  )$ represents an adaptive  superchannel if and only if \cite{chiribella07,kretschmann05}
		\begin{equation}  \begin{split} \label{eq:adaptive}
	C &\geq0 ;\\
	\tr_{\O_0} C &= \tr_{\mathcal{O}_k\O_0} C \otimes \frac{I_{\mathcal{O}_k}}{d_{\mathcal{O}_k}}  ;  \\
	\tr_{\mathcal{I}_{i}} C^{(i)} &= 	\tr_{\mathcal{I}_{i} \mathcal{O}_{i} } C^{(i)} \otimes \frac{ I_{\mathcal{O}_{i}}}{d_{\mathcal{O}_{i}}},  \quad \forall i \in \{k,\ldots,2\};   \\
	\tr_{\mathcal{I}_1} C^{(1)} &= 	\tr_{\mathcal{I}_0 \mathcal{I}_1} C^{(1)} \otimes \frac{ I_{\mathcal{I}_0}}{d_{\mathcal{I}_0}}	\\
	\tr(C) &=  d_{\mathcal{I}_0} d_{\mathcal{O}},   
	\end{split}  \end{equation}  
	where $C^{(i)}:=\tr_{\mathcal{I}_{i+1} \mathcal{O}_{i+1}} C^{(i+1)}  $ for $i\in \{1,\ldots,k-1\}$ and $C^{(k)}:=\tr_{\O_k \mathcal{O}_0} C$.

			We now consider the most general protocols that transform $k$ quantum channels into a single one. As mentioned before, these superchannels may have an indefinite causal order between the use of these $k$ channels, hence they may not have an implementation in terms of encoders and decoders in the standard quantum circuit formalism. Even without necessarily having a realisation by ordered circuits, it is possible to have a simple characterisation of these general superchannels. Before presenting the necessary and sufficient condition for a general (possibly with an indefinite causal order) superchannels, it is convenient to introduce the trace and replace notation introduced in Ref.\,\cite{araujo15}. Let $A\in L(\H_1\otimes \H_2)$ be a general linear operator, we define $_{\mathcal{H}_2} A:=\tr_{\H_2} A\otimes\frac{I_{\H_2}}{d_{\H_2}}$. A Choi operator $C\in L(\mathcal{I}_0 \otimes \mathcal{I}_1\otimes \mathcal{I}_2 \otimes \O_1 \otimes \O_2  \otimes \O_0  )$ represents a general superchannel transforming $k=2$ channels into a single one if and only if it respects \cite{araujo16}:%
		%
	%
		\begin{equation}  \begin{split} \label{eq:non-causal}
	C &\geq0 ;\\
		{}_{\mathcal{I}_1\mathcal{O}_1  \O_0} C &= {}_{\mathcal{I}_1\mathcal{O}_1\mathcal{O}_2   \O_0}  C   ;  \\
		{}_{\mathcal{I}_2 \mathcal{O}_2 \O_0} C &={}_{\mathcal{O}_1\mathcal{I}_2 \mathcal{O}_2 \O_0}  C   ;  \\
	{}_{\O_0} C+ {}_{\mathcal{O}_1\mathcal{O}_2  \O_0}  C &= {}_{\mathcal{O}_1  \O_0} C  + {}_{\mathcal{O}_2  \O_0} C  ;  \\
	{}_{\I \O \O_0} C &= {}_{\I_0\I \O \O_0} C  ; \\   
	\tr(C) &= d_{\mathcal{I}_0}d_{\O_1}d_{\O_2}  .  
	\end{split}  \end{equation}  
		We remark that the bipartite process matrices presented in Ref.\,\cite{oreshkov11,araujo15} correspond to a particular case of general superchannels with two input-channels presented above. This correspondence is made by setting the dimension of the linear spaces  $\mathcal{I}_0$ and $\mathcal{O}_0$ as one. This occurs because Ref.\,\cite{oreshkov11,araujo15} focus on superchannels that transform pairs of instruments into probabilities, not into quantum operations. 
	{Also, the general superchannels presented in Eq.\,\eqref{eq:non-causal} are equivalent to the general process matrices presented in Ref.\,\cite{araujo16} which uses the terminology common past and common future to denote the spaces $\mathcal{I}_0$ and $\mathcal{O}_0$, respectively.}
		
		{It is also possible to characterise general superchannels transforming $k$ channels on terms of their Choi operators. For that, one can exploit the methods used in Ref.\,\cite{araujo16} and \cite{araujo15} to characterise process matrices (see also Ref.\,\cite{chiribella16}) . Using such methods, we have characterised general superchannels which transforms $k=3$ input-channels into a single output one. A Choi operator $C\in L(\mathcal{I}_0 \otimes \mathcal{I}_1\otimes \mathcal{I}_2 \otimes \mathcal{I}_3 \otimes \O_1 \otimes \O_2 \otimes \O_3   \otimes \O_0  )$ represents a general superchannel that transforms $k=3$ channels into another one if and only if it respects	}	
			\begin{equation}  \begin{split} \label{eq:non-causal3}
	C &\geq0 ;\\
	{}_{ \mathcal{I}_1\O_1\mathcal{I}_2\O_2\O_0} C &={}_{ \mathcal{I}_1\O_1\mathcal{I}_2\O_2 \O_3 \O_0} C   ;  \\
	{}_{\mathcal{I}_2\O_2\mathcal{I}_3 \mathcal{O}_3 \O_0} C &={}_{\mathcal{O}_1 \mathcal{I}_2\O_2\mathcal{I}_3 \mathcal{O}_3 \O_0}  C   ;  \\
	{}_{ \mathcal{I}_1\O_1\mathcal{I}_3\O_3\O_0} C &={}_{\mathcal{I}_1\O_1\O_2\mathcal{I}_3\O_3\O_0} C   ;  \\
	{}_{\mathcal{I}_1\O_1\O_0}C + {}_{\mathcal{I}_1\O_1\O_2\O_3\O_0}C &= {}_{\mathcal{I}_1\O_1\O_2\O_0}C + {}_{\mathcal{I}_1 \O_1 \O_3\O_0}C ;  \\
    {}_{\mathcal{I}_2\O_2\O_0}C + {}_{\O_1\mathcal{I}_2\O_2\O_3\O_0}C &= {}_{\O_1\mathcal{I}_2\O_2\O_0}C + {}_{\mathcal{I}_2\O_2\O_3\O_0}C ;  \\
	{}_{\mathcal{I}_3\O_3\O_0}C + {}_{\O_1\O_2\mathcal{I}_3\O_3\O_0}C &= {}_{\O_1\mathcal{I}_3\O_3\O_0}C + {}_{\O_2\mathcal{I}_3\O_3\O_0}C ;  \\
		{}_{\O_0}C + {}_{\O_1\O_2\O_3\O_0}C =&  {}_{\O_1\O_0}C + {}_{\O_2\O_0}C + {}_{\O_3\O_0}C +    \\
		+{}_{\O_1\O_2\O_0}C  +& {}_{\O_1\O_3\O_0}C +  {}_{\O_2\O_3\O_0}C ;   \\
{}_{\I \O \O_0} C &= {}_{\I_0\I \O \O_0} C \; ; \\  
	\tr(C) &= d_{\mathcal{I}_0}d_{\O_1}d_{\O_2} d_{\O_3} .  
	\end{split}  \end{equation}

		Similarly to the single use case, probabilistic heralded protocols are also represented by superinstruments. Superinstruments also admit a simple representation in terms of their induced Choi operators. A set of parallel/adaptive /general superinstruments transforming $k$ channels into another is given by a set of positive semidefinite operators $C_i\geq 0$ where $C:= \sum_i C_i$ is a valid parallel/adaptive /general superchannel. The probability of obtaining the outcome $i$ when performing the superinstrument $\left\{\smap{C_i} \right\}$ on $k$ input-channels represented by    
		$\map{\Lambda}:=\bigotimes_{j=1}^k \map{\Lambda_j}$ and the input-state $\rho$ is given by   %
		$\tr \left(  \left[ \smap{C_i}\left(\map{\Lambda}\right) \right] (\rho) \right)$.



\section{Optimal universal quantum circuits via SDP} \label{sec:SDP}

		In this section we construct a systematic method to design probabilistic heralded quantum circuits for transforming multiple uses of the same unitary operations. {Let  $U_d:L(\mathbb{C}^d)$ be a $d$-dimensional unitary operator and  $\map{U_d}$ be a linear map representing the operation associated to $U_d$, that is, when the operation $\map{U_d}$ is applied into a quantum state $\rho \in L(\mathbb{C}^d)$ the output is given by $\map{U_d}(\rho)=U_d\rho U_d^\dagger$. We consider linear supermaps given by $f:\map{U_d}\mapsto f(\map{U_d})$  mapping unitary operations into unitary operations.} Our goal is to transform $k$ uses of an arbitrary $\map{U_d}$ into ${f}(\map{U_d})$ with the highest heralded constant probability $p$. 
		
		From the results of the previous section, this transformation can be implemented via quantum circuits when there exists a superinstrument element \ie, a CCPP linear supermap, $\smap{S}$ such that $\smap{S}(\map{U_d^{\otimes k}}) = p \,	 f(\map{U_d})$ for every unitary operation $\map{U_d}$ (see Sec. \ref{sec:superchannel}).  {We stress that, even thought we have presented an explict characterisation of superinstruments in Sec.\,\ref{sec:superchannel}, finding the optimal success probability for this transformation and its associated quantum circuit is, in general, a nontrival task.} First, note that action of the supermap $f$ is only described for unitary channels\footnote{Since we have imposed that $f$ is linear, $f$ is also implicitly defined for linear combination of unitary operations.} but the action of a superinstrument element $\smap{S}$ must be defined for any CP linear map. The supermap  $\smap{S}$ can then be any CCPP linear supermap that extends the action of $f$ from unitary operations to general CP maps (see Ref.\,\cite{chefles03,heinosaari12} for a related lower-order version problem which consists of finding CP extentions of linear maps defined on subspaces).  Second, since $k$ uses of the input-operation are available, it may be the case that even if $f$ does not have a linear CCPP extention for some number of uses $k_0$ but it has for $k>k_0$ (see Ref.\,\cite{buzek99,dong18} for a lower-order analogue of this problem where multiple copies of the input-\textit{state} can be used to implement a linear positive non-CP \textit{map}).
		
		 Before presenting our general approach we illustrate the subtleties of this extention problem by discussing the universal channel complex conjugation studied in Ref.\,\cite{miyazaki17}.
		 Let $\map{\Lambda} : \H_{\text{in}}\to  \H_\text{out}$ be a quantum channel with the Kraus decomposition given by $\map{\Lambda}(\rho)=\sum_i K_i \rho K_i^\dagger$, we define the complex conjugate of 
		$\map{\Lambda}$ as the map which respects $\map{\Lambda^*}(\rho)=\sum_i K_i^* \rho K_i^{*^\dagger}$ for every $\rho$ where the complex conjugation of $K_i$ is made in a fixed orthonormal basis \eg, the computational basis. One can show that, for any linear spaces $\H_\text{in}$ and $\H_\text{out}$ with dimension greater than or equal two,  CCPP supermaps respecting $\map{\Lambda}^{\otimes k} \mapsto p \map{\Lambda^*}$ for all channels $\map{\Lambda}$ necessarily have $p=0$ for any number of uses $k\in\mathbb{N}$ \cite{miyazaki_thesis}. Hence it is not possible to design a universal quantum circuit for probabilistic channel adjoint. However, if one relaxes the requirements of general channels and seek for a quantum circuit that transforms only unitary operations into their adjoints, universal complex conjugation can be implemented deterministically in a parallel circuit with makes $k=d-1$ uses of the input-channel presented in Ref.\,\cite{miyazaki17}. In Sec.\ref{sec:conjugation} we prove that $k=d-1$ uses are not only sufficient but also necessary. We then see that the notion of CCPP extension and the number of uses play a crucial role in finding superinstruments that implement some desired transformation given by $f$.

	    We now present our SDP approach.
		 Let $\left\{ \smap{S},\smap{F} \right\}$ be a superinstrument where the outcome of the element $\smap{S}$ indicates success and the outcome of $\smap{F}$ indicates failure.
	The problem of maximising the success probability of transforming $k$ uses of an arbitrary $d$-dimensional unitary input-operation $\map{U_d}$ into $f(\map{U_d})$ can be phrased as:
		\begin{equation}  \begin{split}
		&\text{max } p  \\ 
		\text{s.t. }  \quad & \smap{S} \left(\map{U_d}^{\otimes k} \right) = p f(\map{U_d}), \quad \; \forall U_d;   \\ 
		&\left\{ \smap{S},\smap{F} \right\} \text{ is a valid superinstrument},
	\end{split}  \end{equation}  
	where the valid superinstrument representing a  parallel, adaptive, or general protocols.
	Using the characterisation presented in Sec.\,\ref{sec:intro}, we can rewrite the above maximisation problem only in terms of linear and positive semidefinite constraints as:
\begin{small}
		\begin{equation}  \begin{split} \label{eq:SDP}
						 &\text{max } p  \\ 
		\text{s.t. }  \quad & \tr_{\mathcal{IO}}  \left( S \left[ I_{\mathcal{I}_0} \otimes \C \left(\map{U_d^{\otimes k}}\right) ^T_{\mathcal{I}\O} \otimes I_{\O_0} \right]\right) = p \C(f(\map{U_d})) \; \forall U_d;   \\ 	
		&S,F\in L(\mathcal{I}_0 \otimes \mathcal{I} \otimes \mathcal{O} \otimes \mathcal{O}_0 ),\quad  S \geq0, \; F\geq 0   ;\\ 
		&S+F  \text{ is a valid superchannel}   .
	\end{split}  \end{equation}  
\end{small}
		Note that the maximisation problem presented in Eq.\,\eqref{eq:SDP} must hold for all unitary operators $U_d$ and has infinitely many constraints.  This issue can be bypassed by noting that due to linearity, it is enough to check these constraints only for a set that spans the set spanned by Choi operators of unitary operations. That is, if we can write 
		$\C(\map{U_d^{\otimes k}})= \sum_i \alpha_i\, \C(\map{U_{d,i}^{\otimes k}})$ and  it is true that%
				\begin{equation}  \begin{split} 
	 \tr_{\mathcal{IO}}  \left( S \left[ I_{\mathcal{I}_0} \otimes \C \left(\map{U_{d,i}^{\otimes k}}\right) ^T_{\mathcal{I}\O} \otimes I_{\O_0} \right]\right) = p \C(f(\map{U_{d,i}})) \; \forall i;
	\end{split}  \end{equation}  
		we have that
				\begin{equation}  \begin{split} \label{eq:SDP_proof}
\tr_{\mathcal{IO}}  &\left( S \left[ I_{\mathcal{I}_0} \otimes \C \left(\map{U_d^{\otimes k}}\right) ^T_{\mathcal{I}\O} \otimes I_{\O_0} \right]\right)  \\ = &
\sum_i  \tr_{\mathcal{IO}}  \left( S \left[ I_{\mathcal{I}_0} \otimes \alpha_i  \C \left( \map{U_{d,i}^{\otimes k}}\right) ^T_{\mathcal{I}\O} \otimes I_{\O_0} \right]\right)  \\=
 & p\sum_i \alpha_i   \C(f(\map{U_{d,i}})) \\=
 &  p\C(f(\map{U_{d}})).
	\end{split}  \end{equation}  
		Also,  one can always find a finite set, in particular, a basis, of unitary operations $\{\map{U_{d,i}}\}$ that spans the set spanned by Choi operators of $d$-dimensional unitary operations, \ie:
		\begin{equation}  \begin{split}
		\text{span}\Big(\C(\map{U_d^{\otimes k}}) & \;| \; U_d \text{ is unitary} \Big) =  \\
	&	\text{span} \left(\C(\map{U_{d,i}^{\otimes k}}) \; | \;U_{d,i} \in \{ U_{d,i} \}\right). \\	
		\end{split}  \end{equation}  

		Explicitly obtaining a basis for the subspace $\text{span}\left(\C(\map{U_d^{\otimes k}}) \;| \;U_d \text{ is unitary}  \right)$ is, in general, not straightforward. For numerical purposes, this problem can be tackled by sampling a large number of unitaries $U_d$ uniformly randomly (according to the Haar measure). If the dimension of this subspace is $D$, $D$ unitaries sampled uniformly will be linearly independent with unit probability. Since checking linear independence can be done in an efficient way, we can construct a basis for this set by sampling unitaries randomly until we cannot find more linearly independent ones. 
		
		Also note that the dimension $D$ of the subspace $\text{span}\left(\C(\map{U_d^{\otimes k}}) \;| \;U_d \text{ is unitary}  \right)$ may grow very fast with $k$ and $d$, this will increase the number of constraints in the SDP we have presented. Since having a large number of constraints may make the SDP intractable for practical purposes (it may take a very long time to run the code or to consume a very large amount of Random-access memory (RAM)), it is worth noticing that if one runs the SDP \eqref{eq:SDP} with a set of operators $\{U_{d,i}\}$ that do not form a basis for $\text{span}\left(\C(\map{U_d^{\otimes k}}) \;| \;U_d \text{ is unitary}  \right)$, the solution $p$ of the SDP is not the maximal success probability but an upper bound on the maximal success probability (it is the same SDP with fewer constraints). We also point out that since the methods to solve an SDP also provide the instrument element $S$ that attains the maximal success probability $p$, even if the set $\{U_{d,i}\}$ does not form a basis for $\text{span}\left(\C(\map{U_d^{\otimes k}}) \;| \;U_d \text{ is unitary}  \right)$, it may still be the case that the solution obtained {is also the global optimal value}\footnote{We thank Alastair Abbott for pointing this fact to us.}. In order to check this hypothesis we can extract the superinstruement element $S$ of the SDP in which the operators $\{U_{d,i}\}$ that do not form a basis for $\text{span}\left(\C(\map{U_d^{\otimes k}}) \;| \;U_d \text{ is unitary}  \right)$. Then, we generate a basis  $\{U'_{d,j}\}$ for $\text{span}\left(\C(\map{U_d^{\otimes k}}) \;| \;U_d \text{ is unitary}  \right)$ and verify that 
		\begin{equation}
		\tr_{\mathcal{IO}}  \left( S \left[ I_{\mathcal{I}_0} \otimes \C \left(\map{U_{d,j}^{'\otimes k}}\right) ^T_{\mathcal{I}\O} \otimes I_{\O_0} \right]\right) = p \C(f(\map{U'_{d,j}}))  
		\end{equation}
		for every\footnote{When $d=3$, $k=2$, we have applied this technique to tackle the unitary transposition and inversion problem. In this case, we have run our numerical SDPs only for a subset of the space of unitary channels generated by $\text{span}\Big(\C(\map{U_3^{\otimes 2}})  \;| \; U_3 \text{ is unitary} \Big) $. Numerically, we can see that the linear space spanned by $\Big(\C(\map{U_3^{\otimes 2}})  \;| \; U_3 \text{ is unitary} \Big) $ has $994$ linearly independent  unitary channels but we have only considered a random subset containing $200$ linear independent elements of the form $\C(\map{U_3^{\otimes 2}})$ in our calculations. After obtaining an upper bound to the problem, we have verified that the superinstrument element $S$ transforms the full basis with  $994$ linearly independent  unitary channels into their inverses, ensuring that the previous upper bound is tight.}  $j$.

		For the particular cases where the desired operation is unitary inversion, \ie, $f(\map{U_d})=\map{U_d^{-1}}$ or  the desired operation is unitary transposition	, \ie, $f(\map{U_d})=\map{U_d^{T}}$, the optimal success probability $p$ is always attainable by instruments where the Choi operators $S$ and $F$ only have real number components. 
 That is, the operators $S$ and $F$ can be restricted to the field of real numbers with no loss of generality. To prove that, we first note that for every unitary $U_d$ we have 
	 $\C \left(\map{U_d^{\otimes k}}\right)^* = \C \left(\map{{U_d^*}^{\otimes k}}\right) $
	where ${}^*$ is complex conjugation in the computational basis. We present the explicit proof for the unitary transposition case and the unitary inverse follows from the same steps. Assume that there exists a superinstrument with a success Choi operator $S$ such that
	\begin{equation}
\tr_{\mathcal{IO}}  \left( S \left[ I_{\mathcal{I}_0} \otimes \C \left(\map{U_d^{\otimes k}}\right) ^T_{\mathcal{I}\O} \otimes I_{\O_0} \right]\right) = p \C(\map{U_d^T}) 
	\end{equation} 
	holds for all unitaries $U_d$. Direct calculation shows that the instrument defined by $S^*$ attains the same performance of $S$:
	\small
	\begin{align}
	& \tr_{\mathcal{IO}}  \left( S^* \left[ I_{\mathcal{I}_0} \otimes \C \left(\map{U_d^{\otimes k}}\right) ^T_{\mathcal{I}\O} \otimes I_{\O_0} \right]\right) \\
	=&	 {{\tr_{\mathcal{IO}}  \left( S^* \left[ I_{\mathcal{I}_0} \otimes \C \left(\map{U_d^{\otimes k}}\right) ^T_{\mathcal{I}\O} \otimes I_{\O_0} \right]\right)}^*}^* \\
	=&	 {{\tr_{\mathcal{IO}}  \left( {S^*}^* \left[ I_{\mathcal{I}_0}^* \otimes \C {\left(\map{U_d^{\otimes k}}\right) ^T_{\mathcal{I}\O}}^* \otimes I_{\O_0}^* \right]\right)}^*} \\ 
	=&	 {\tr_{\mathcal{IO}}  \left( S \left[ I_{\mathcal{I}_0} \otimes \C {\left(\map{{U_d^*}^{\otimes k}}\right) ^T_{\mathcal{I}\O}} \otimes I_{\O_0} \right]\right)}^* \\ 
	=& p \C {\left(\map{{{U_d^*}^T}}\right)} ^* \\
	=& p \C \left(\map{U_d^T}\right).
	\end{align}
\normalsize
	Since $\{S^*, F^*\}$ represents a valid superinstrument, we can construct the operators $S':=\frac{S+S^*}{2}$ and $F':=\frac{F+F^*}{2}$ which only have real number components.

		We have implemented our code using MATLAB \cite{MATLAB} with the interpreter CVX \cite{cvx} and tested with the solvers MOSEK, SeDuMi, and SDPT3 \cite{MOSEK,SEDUMI,SDPT3}.  In Table \ref{table:SDP2} of Sec.\,\ref{sec:SDP_trans} we apply this method to obtain the maximal success probability to transform $k$ uses of a $d$-dimensional unitary operation, \ie,  $f(\map{U_d})=\map{U_d^T}$ under different constraints. In Sec.\,\ref{sec:SDP_inv}, we reproduce Table 1 of Ref.\,\cite{PRL} which contains results for the maximal probability for unitary inversion \ie,  $f(\map{U_d})=\map{U_d^{-1}}$ .		
	All our code are available at Ref.\,\cite{MTQ_github_unitary} and can be freely used, edited, and distributed under the MIT license \cite{MIT_license} and  make extensive use of the toolbox QETLAB \cite{QETLAB}.

 \section{ Delayed input-state protocols} \label{sec:delayed}

		In this section we define a particular subclass of quantum circuits in which we refer to delayed input-state protocols. This class consists of circuits where the input-state is provided after the input-operation which will be transformed (see Fig.\,\ref{fig:delayed}). The concept of delayed-input-state generalises the class of supermaps considered in the context of unitary learning and unitary store-and-retrieve problems \cite{vidal01,sasaki02,gammelmark09,bisio10,sedlak18}. As we will show next, parallel quantum circuits used for unitary transposition and unitary inversion can be assumed to be in the delayed input-state form without loss of generality and the definition of delayed input-state protocols is useful to prove various theorems presented in this paper.
		
		  Consider a scenario where Alice has $k$ uses of a general unitary operation $\map{U_d}$ until some time $t_1$. In a later time $t_2$, where $\map{U_d}$ cannot be accessed anymore, she would like to implement $f(\map{U_d})$ on some arbitrary quantum state chosen at time $t_2$. This scenario can be seen as a particular case of the general unitary transformation problem where the input-state is only provided after the operation $\map{U_d}$.		
		Let us start with the $k=1$ case where only a single use of the general input-operation $\map{\Lambda_\text{in}}$ is allowed (see Fig.\,\ref{fig:delayed}). 
		%
		 	%
  	%
  	%
	\begin{figure}
	\begin{center}
	\includegraphics[scale=.18]{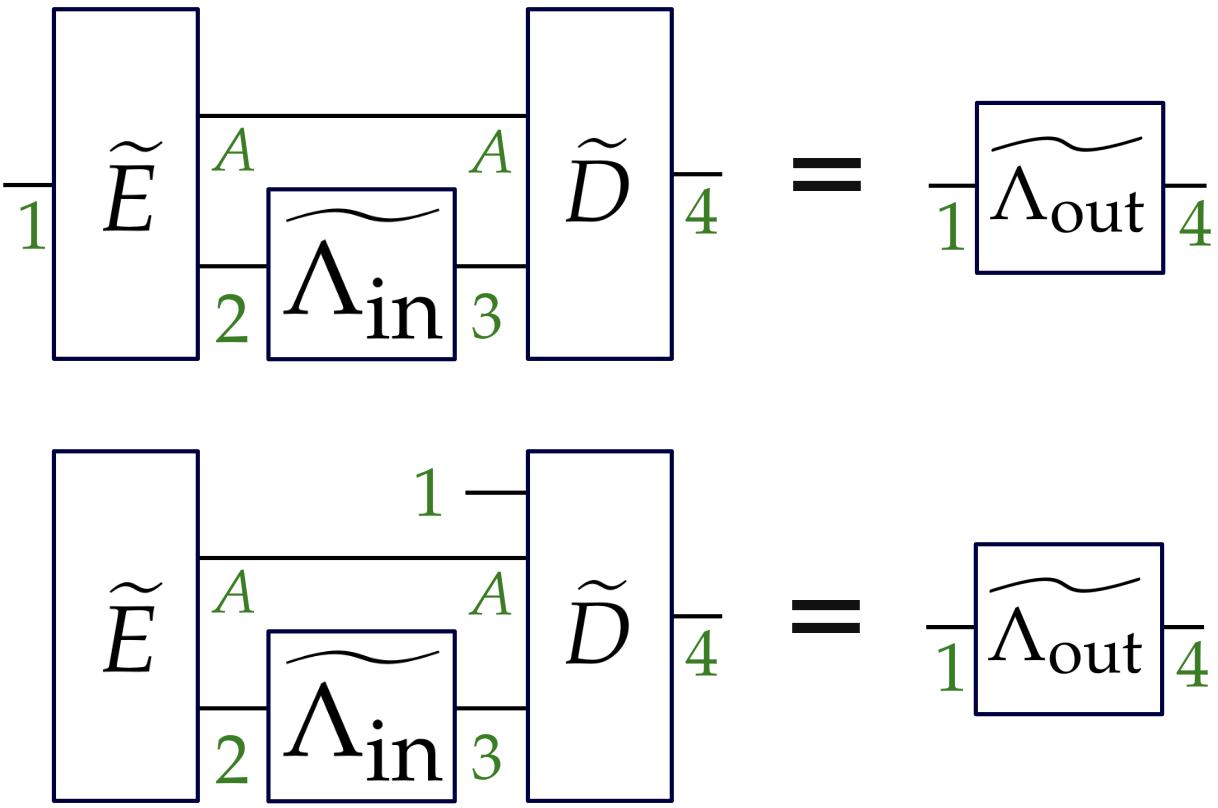}
	\end{center}			\caption{Comparison between a standard quantum circuit (upper circuit) and a delayed input-state protocol (lower circuit) that transforms general operations. In a delayed input-state protocol, the input-state labelled by the space $1$ is not used by the encoder operation $\map{E}$. The encoder only prepares a (potentially entangled) state which partially goes to the input-channel $\map{\Lambda_{\text{\text{in}}}}$, and {then to the decoder channel $\map{D}$,} which can perform a joint operation between the input-state and the auxiliary system. }  \label{fig:delayed}
	\end{figure}
  	%
  	%
  	%
  	%
  		In this single use case, every superchannel admits a realisation in terms of a quantum circuit with an encoder and a decoder \cite{chiribella08}. Let $\smap{C}$ be a superchannel transforming an input-operation $\map{\Lambda_\text{in}}:L(\H_2)\to L(\H_3)$ into $\smap{C}(\map{\Lambda_\text{in}})=\map{\Lambda_\text{out}}:L(\H_1)\to L(\H_4)$ and $\rho_\text{in}\in L(\H_1)$ be the input-state on which she would like to apply $\map{\Lambda_\text{out}}$. A protocol to implement the superchannel $\smap{C}$ can be realised as following:
\begin{enumerate}
\item  Alice performs an encoder channel $\map{E}:L(\H_1)\to L(\H_2\otimes \H_A)$ on the input-state $\rho_\text{in}\in L(\H_1)$.
\item  The input-operation $\map{\Lambda_\text{in}}:L(\H_2)\to L(\H_3)$ is performed on a  part of the state  $\map{E}(\rho_\text{in})\in L(\H_2\otimes \H_A)$.
\item The decoder $\map{D}:L(\H_3\otimes \H_A)\to L(\H_4)$ is applied to the state 
		$\left[ \map{\Lambda_\text{in}}\otimes \map{ \phantom{.}I_\text{A}	 }\right] \left(\map{E}(\rho_\text{in})\right)$ 
to obtain the final output-state
\begin{equation}
		\left[\smap{C}(\map{\Lambda_\text{in}})\right](\rho_\text{in})=\map{\Lambda_\text{out}}(\rho_\text{in}).
\end{equation}
\end{enumerate}
	In a delayed input-state protocol, the  encoder channel $\map{E}$ does not have access to the input-state $\rho_\text{in}$, since this state is only provided after the use of the operation $\map{\Lambda_\text{in}}$. Instead of having an encoder channel, Alice must then prepare a fixed state $\phi_E\in L(\H_2\otimes \H_A)$ that is independent of $\rho_\text{in}$. More precisely, a superchannel $\smap{C_D}$ represents a $k=1$ delayed input-state protocol if it can be realised by the following protocol:
\begin{enumerate}
\item  Alice prepares a state $\phi_E\in L(\H_2\otimes \H_A)$.
\item  The input-operation $\map{\Lambda_\text{in}}:L(\H_2)\to L(\H_3)$ is performed on a part of the state  $\phi_E\in L(\H_2\otimes \H_A)$ prepared by Alice.
\item The decoder $\map{D_D}:L(\H_1\otimes \H_3\otimes \H_A)\to L(\H_4)$ is applied to the state
 $\rho_\text{in}\otimes \left[ \left[\map{\Lambda_\text{in}}\otimes  \map{\phantom{.} I_\text{A} }\right] (\phi_E )\right]$
  to obtain the final output-state
\begin{equation}
		\left[\smap{C_D}(\map{\Lambda_\text{in}})\right](\rho_\text{in})=\map{\Lambda_\text{out}}(\rho_\text{in}).
\end{equation}
\end{enumerate}

  			We now consider parallel delayed input-state protocols with $k>1$ uses of the input-channel $\map{\Lambda_\text{in}}$. 
		 By definition, a parallel superchannel $\smap{C}: \left[ L(\mathcal{I})\to L(\mathcal{O})\right] \to \left[ L(\mathcal{I}_0)\to L(\mathcal{O}_{0})\right]$ that transforms $k$ identical input-operations into another can be represented by an encoder channel 	$\map{E}:L(\mathcal{I}_0)\to L(\mathcal{I} \otimes \mathcal{A})$ and a decoder channel $\map{D}:L(\mathcal{O} \otimes \mathcal{A})\to L(\O_0)$ such that
		 \begin{equation}
		 		\smap{C}(\map{\Lambda^{\otimes k}})= \map{D}\circ \left[ \map{\Lambda^{\otimes k}}\otimes \map{I}_\mathcal{A} \right]\circ \map{E}.
\end{equation}		
	 	That is, in order to perform the output-operation $  \map{\Lambda_\text{out}}=\smap{C}(\map{\Lambda^{\otimes k}})$ on an arbitrary input-state $\rho_{\text{\text{in}}}\in L(\mathcal{I}_0)$, we first perform the encoder operation on $\rho_\text{in}$, then the $k$ uses of $\map{\Lambda}$ on a part of the output of the encoder, and then the decoder $\map{D}$:	
	\begin{equation}
	\left[ \smap{C}\left( \map{\Lambda^{\otimes k}} \right) \right] (\rho_{\text{in}}) = %
	\map{D} \left( \left[ \map{\Lambda^{\otimes k}} \otimes \map{I_\mathcal{A}} \right] \left( \map{E} \left(\rho_{\text{in}}  \right) \right)\right).
 \end{equation}		

		In a delayed input-state protocol the encoder cannot not make use of the input-state $\rho_\text{in}$. Instead of an encoder channel $\map{E}$ we now consider some fixed (potentially entangled) quantum state $\phi_E\in L(\mathcal{I}\otimes \mathcal{A})$. On a delayed input-state protocol, the decoder $\map{D_D}:L(\mathcal{I}_0\otimes \O \otimes \mathcal{A})\to L(\O_0)$ acts directly on input-state $\rho_\text{in}$. We then say that a parallel superchannel $\smap{C_D}$ represents a delayed input-state parallel protocol if can be written as 
	\begin{equation} \label{eq:learn}
	\left[ \smap{C_D}\left( \map{\Lambda^{\otimes k}} \right) \right] (\rho_\text{in}) = %
	\map{D_D} \left(  \left[ \left[ \map{\Lambda^{\otimes k}} \otimes \map{I_\mathcal{A}} \right]  \left( \phi_E \right) \right]  \otimes \rho_\text{in}\right),
 \end{equation}	
 for some decoder channel $\map{D_D}:L(\mathcal{I}_0\otimes \O \otimes \mathcal{A})\to L(\O_0)$ and some state $\phi_E\in L (\mathcal{I}\otimes \mathcal{A})$.
 		If we define a $\psi_{\map{\Lambda^{\otimes k}}}:= \left[ \map{\Lambda^{\otimes k}} \otimes \map{I}_\mathcal{A} \right]   \left( \phi_E \right)$, we can re-rewrite
 	 Eq.\,\eqref{eq:learn} as 
 	\begin{equation}
	\left[ \smap{C_D}\left( \map{\Lambda^{\otimes k}} \right) \right] (\rho_\text{in}) = %
	\map{D}\left( \psi_{\map{\Lambda^{\otimes k}}}\otimes \rho \right).
 \end{equation}

 		 Parallel delayed input-state superchannels $\smap{C_D}$ also have a simple characterisation in terms its Choi operator $C_D\in L(\mathcal{I}_0 \otimes \mathcal{I} \otimes \O \otimes\O_0)$.
		 Since the encoder acts trivially on the space $L(\mathcal{I}_o)$, it follows from the same tools used to characterise standard ordered circuits \cite{chiribella07} that  $C_D$ represents a parallel delayed input-state protocol if and only 
  	\begin{equation} \label{eq:superchannel_PAR_delay} \begin{split} 
	C_D &\geq0 ;\\
	\tr_{\O_0} C_D &= \frac{I_{\mathcal{I}_0}} {d_{\mathcal{I}_0}} \otimes \tr_{\mathcal{I}_0 \O\O_0} C_D \otimes  \frac{I_{\mathcal{O}}}{d_{\mathcal{O}}}  ;  \\
	\tr(C_D) &= d_{\mathcal{I}_0} d_{\mathcal{O}}.  
	\end{split}  \end{equation}  
	Or equivalently, $C_D$ respects the standard parallel supermap restrictions of Eq.\,\eqref{eq:non_adaptive} and also 
	  	\begin{equation}  \begin{split} 
	\tr_{\O \O_0} C_D &= \tr_{\mathcal{I}_0 \O\O_0} C_D \otimes \frac{I_{\mathcal{I}_0}}{d_{\mathcal{I}_0}}.  
	\end{split}  \end{equation}  
	
		The formal definition and a simple Choi characterisation of adaptive  delayed input-state protocols follow straightforwardly from the discussions of the parallel case presented here%
  		\footnote{For adaptive  protocols where the input-operation $\map{\Lambda_\text{in}}$ can be used $k$ times one can also define the notion of $k$-delayed input-state protocol, where the input-state is provided after the $k$th use of the input-operation $\map{\Lambda_\text{in}}$. The characterisation of such protocols also follows from the discussion presented in this section and the methods presented in Sec.\,\ref{sec:intro}.}.
  		The case of  superchannels with indefinite causal order is more subtle. Since they have no encoder/decoder ordered quantum circuit implementation their physical interpretation is not evident. We let the precise definition and the characterisation of non-causally ordered delayed input-state protocols for future research.

		Probabilistic heralded parallel (adaptive) delayed input-state protocols are given by superinstruments whose elements add to a superchannel representing a  parallel (adaptive) delayed input-state protocol. It follows from the circuit realisation of quantum instruments \cite{bisio16} that every parallel (adaptive) delayed input-state protocol can be realised by an encoder ($k-1$ encoders) where the input-state is not required and a decoder, which makes use of the input-state, followed by a projective measurement.
				
				{We will now show that any probabilistic supermap can be implemented via a parallel probabilistic delayed input-state protocol with a smaller, but non-zero success probability.
				That is, if a supermap $\smap{S}$ represents a superinstrument element of some higher order transformation, there exists a delayed input-state parallel superinstrument which, when successful, implements the action of $\smap{S}$ in a probabilistic heralded way. This theorem holds true even if the supermap $\smap{S}$ corresponds to an indefinite causal order protocol.  Intuitively, one can undersand this theorem in terms of state teleportation and probabilistic heralded gate teleportation (see Sec.\,\ref{gate_teleportation} for a review of gate teleportation). In order to ``parallelise'' any superinstrument one can use the gate teleportation to re-arrange the position of all input-operations in parallel. Also, one can always delay the use of the input state by exploiting the state teleportation protocol \cite{bennett93}. Although the teleportation and gate teleportation protocol  may fail, the success probability is strictly positive for any fixed dimension, ensuring that the success probability of the parallel circuit is non-zero.}

				\begin{lemma} \label{parallel_is_OK}
Let $\smap{S}:[L(\bigotimes_{i=1}^{k}\mathcal{I}_i) \to L(\bigotimes_{o=1}^{k}\mathcal{O}_{o})]\to[L(\mathcal{I}_0)\to L(\mathcal{O}_{0}) ]$ be a supermap representing a general (possibly with indefinite causal order) probabilistic protocol that makes $k$ uses of a unitary operation $\map{U_d}$ and transforms to some other unitary operation $f(\map{U_d})$ with probability $p_U$ \ie, $\smap{S}\left(\map{U_d}^{\otimes k}\right)=p_U f(\map{U_d})$ .
There exists a parallel delayed input-state protocol implementing the supermap $\smap{S}$ with a probability greater than or equal to $\frac{p_U}{d_{total}}$, where $d_{total}$ is the product of all linear space dimensions, \ie, $d_{total}=\prod_{i=0}^k d_{\mathcal{I}_i}\ \prod_{i=0}^k d_{\O_i} $
\end{lemma}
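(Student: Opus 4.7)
The plan is to construct a parallel delayed input-state protocol $\smap{S'}$ that simulates the general supermap $\smap{S}$ up to a probability factor of $1/d_{total}$, by combining state teleportation for $\rho_\text{in}$ with $k$ parallel gate teleportations for $\map{U_d}$. First, at time $t_1$ the preparation stage produces $k+1$ maximally entangled pairs: one pair $\ket{\phi^+}_0 \in \mathcal{I}_0 \otimes \mathcal{I}_0^A$ that implements state teleportation of $\rho_\text{in}$, and $k$ further pairs $\ket{\phi^+}_i \in \mathcal{I}_i^A \otimes \mathcal{O}_i$ on which $\map{U_d}$ is applied to the $\mathcal{O}_i$-leg, producing the Choi resource states $(I\otimes U_d)\ket{\phi^+}_i$. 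The $k$ uses of $\map{U_d}$ occur in parallel and none of this preparation depends on $\rho_\text{in}$, so it satisfies the parallel delayed input-state format of Eq.\,\eqref{eq:superchannel_PAR_delay}.

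Second, at time $t_2$ the delayed decoder $\map{D_D}$ receives $\rho_\text{in}$ together with the prepared resources and implements $\smap{S}$ with two substitutions: (i) the $\mathcal{I}_0$-port of $\smap{S}$ is fed the $\mathcal{I}_0$-half of $\ket{\phi^+}_0$, and $\map{D_D}$ performs a Bell measurement between $\rho_\text{in}$ and the $\mathcal{I}_0^A$-half, heralding success on the identity outcome; (ii) for each $i \in \{1,\ldots,k\}$, the $\mathcal{O}_i$-port of the $i$-th slot of $\smap{S}$ is fed the $\mathcal{O}_i$-leg of the $i$-th Choi resource state, and $\map{D_D}$ performs a Bell measurement between the $\mathcal{I}_i$-port of that slot and the $\mathcal{I}_i^A$-leg of the resource state, heralding success on the identity outcome. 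All remaining operations prescribed by $\smap{S}$ are $\rho_\text{in}$- and $U_d$-independent fixed linear maps absorbed into $\map{D_D}$.

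Third, I would compute the success probability. Conditional on all $k+1$ Bell measurements giving the identity outcome, the state- and gate-teleportation identities guarantee that $\rho_\text{in}$ is effectively injected into the $\mathcal{I}_0$-port of $\smap{S}$ and that $\map{U_d}$ is effectively applied at each gate slot, so the output on $\mathcal{O}_0$ coincides with $[\smap{S}(\map{U_d}^{\otimes k})](\rho_\text{in}) = p_U\, f(\map{U_d})(\rho_\text{in})$. Each Bell measurement yields the identity outcome with probability $1/d^2$ because the ancilla leg of each maximally entangled pair is maximally mixed; since all spaces are $d$-dimensional, the joint probability is $(1/d^2)^{k+1}=1/d^{2(k+1)}=1/d_{total}$. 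Multiplying by the conditional success probability $p_U$ of $\smap{S}$ yields the claimed lower bound $p_U/d_{total}$.

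The main obstacle is justifying the construction when $\smap{S}$ has indefinite causal order and hence no circuit realisation, so that ``implementing $\smap{S}$'s internal structure within $\map{D_D}$'' is not a priori meaningful. The cleanest way around this is to work entirely at the level of Choi operators: by Eq.\,\eqref{eq:SDP}, the action of $\smap{S}$ on $\map{U_d}^{\otimes k}$ is the linear contraction $\tr_{\mathcal{IO}}(S\,[I_{\mathcal{I}_0}\otimes \C(\map{U_d}^{\otimes k})^T_{\mathcal{IO}}\otimes I_{\mathcal{O}_0}])$, and inserting an identity-Bell projector on $\mathcal{I}_i \otimes \mathcal{I}_i^A$ against the Choi resource state on $\mathcal{I}_i^A \otimes \mathcal{O}_i$ reproduces exactly this contraction as a physical joint measurement inside the decoder, with the positive semidefinite operator $S$ itself preparable on ancillas up to normalisation. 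Since this manipulation never invokes a sequential execution of $\smap{S}$, the construction applies uniformly to parallel, adaptive, and indefinite-causal-order protocols.
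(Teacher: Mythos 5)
Your teleportation picture is precisely the intuition the paper states before the lemma, and for a \emph{causally ordered} $\smap{S}$ your construction is sound: the decoder runs the circuit of $\smap{S}$ as a subroutine with each gate slot and the input state supplied by heralded teleportation, and the probability $(1/d^2)^{k+1}p_U = p_U/d_{total}$ comes out exactly right. The gap is in the indefinite-causal-order case, which the lemma explicitly covers. There $\smap{S}$ has no ports to ``feed'', and your fallback --- preparing $S/\tr(S)$ on ancillas and Bell-contracting its wires against $\rho_\text{in}$ and the Choi resource states --- is a legitimate physical protocol but does \emph{not} achieve the claimed probability. Tracking the factors: the $2k+1$ Bell projections each contribute $1/d$, the $k$ normalised Choi states contribute $1/d^{k}$, and the normalisation of $S$ contributes $1/\tr(S)$, so the success probability is $p_U/\bigl(d^{3k+1}\tr(S)\bigr)$, which with $\tr(S)\leq d_{\mathcal{I}_0}d_{\mathcal{O}}=d^{k+1}$ only guarantees $p_U/d^{4k+2}$ --- strictly worse than $p_U/d^{2k+2}=p_U/d_{total}$ for every $k\geq 1$. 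Because you never recompute the probability in the Choi picture, this loss goes unnoticed, and the final paragraph asserts rather than proves that the bound survives.

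The paper's proof bypasses the physical construction entirely and is worth internalising: since $\smap{S}$ is a superinstrument element, its Choi operator satisfies $S\geq 0$ and $\tr(S)\leq d_{\mathcal{O}}d_{\mathcal{I}_0}$, hence $0\leq S/d_{total}\leq I/(d_{\mathcal{I}}d_{\mathcal{O}_0})$. The operator $C_P:=I/(d_{\mathcal{I}}d_{\mathcal{O}_0})$ is itself a valid parallel delayed input-state superchannel (it satisfies Eq.~\eqref{eq:superchannel_PAR_delay}), so $S_P:=S/d_{total}$ and $F_P:=C_P-S_P\geq 0$ form a valid parallel delayed input-state superinstrument, and linearity gives $\smap{S_P}\bigl(\map{U_d}^{\otimes k}\bigr)=\frac{p_U}{d_{total}}f(\map{U_d})$. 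If you want to salvage your route, you would need to either restrict the subroutine argument to ordered protocols and handle the general case separately, or redo the Choi-contraction bookkeeping and accept (and then repair) the weaker exponent.
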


\begin{proof}
	
		By assumption, $\smap{S} \left(\map{U_d^{\otimes k}}\right)= p_U f(\map{U_d})$ for all $U_d$ with probability $p_U$.
  Since $\smap{S}$ must be a superinstrument element, the corresponding Choi operator $S$ is positive and respects $\tr(S)\leq d_\O d_{\mathcal{I}_0}$, hence $ 0 \leq  \frac{1}{d_{total}} S\leq \frac{1}{d_\mathcal{I} d_{\mathcal{O}_0}} I$, where $I\in L(\mathcal{I}_0\otimes \mathcal{I}\otimes \O\otimes \O_0)$ is the identity operator. Note that the Choi operator $C_P:=\frac{1}{d_{\mathcal{I}} d_{\mathcal{O}_0}} I$ represents a valid parallel delayed input-state superchannel \ie, it satisfies the parallel delayed input-state superchannel conditions of Eq.\,\ref{eq:superchannel_PAR_delay}.  We thus define the new superinstrument via the Choi of its elements as $S_P:= \frac{1}{d_{total}} S$ and 
		$F_P:=\frac{1}{d_{\mathcal{I}} d_{\mathcal{O}_0}} I- \frac{1}{d_{total}} S$.
		 It follows that $F_P\geq 0$ and $S_P+F_P=C_P=\frac{1}{d_{\mathcal{I}} d_{\mathcal{O}_0}} I$ is a valid parallel delayed input-state superchannel, hence the operators $S_P$ and $F_P$ form a valid delayed-input state parallel superinstrument. By linearity, we can verify that $\smap{S_P} \left(\map{U_d^{\otimes k}}\right)=\frac{1}{d_{total}} \smap{S} \left(\map{U_d^{\otimes k}}\right)=\frac{p_U}{d_{total}}f(\map{U_d})$, ensuring that when the output associated to $S_P$ is obtained, the probabilistic parallel delayed input-state protocol represented by the superinstrument elements $S_P$ and $F_P$ performs the transformation of the supermap $\smap{S}$ with probability $\frac{p_U}{d_{total}}$.
\end{proof}


\section{Universal unitary complex conjugation} \label{sec:conjugation}

	In this section we consider the problem of transforming $k$ uses of an arbitrary $d$-dimensional unitary $\map{U_d}$ into its complex conjugate $\map{U_d^*}$ for some fixed basis. We prove that when $k<d-1$ uses are accessible, any exact unitary complex conjugation quantum protocol, including protocols with indefinite causal order, necessarily have zero success probability. In Ref.\,\cite{miyazaki17} the authors present a deterministic parallel quantum circuit that transforms $k=d-1$ uses of a $d$-dimensional unitary operation $\map{U_d}$ into its complex conjugate\footnote{  Reference \cite{miyazaki_thesis} also proves that when $d>2$, $k>1$ uses of the input-unitary operation $\map{U_d}$  are required for any non-null probabilistic heralded implementation.} $\map{U_d^*}$. Hence, when combined with Ref.\,\cite{miyazaki17}, our result reveals a characteristic threshold property for exact unitary complex conjugation: if $k<d-1$, universal exact unitary complex conjugation is impossible (zero success probability), if $k=d-1$ exact unitary complex conjugation is possible with probability one with a parallel circuit implementation.

\begin{theorem}[Unitary complex conjugation: no-go] \label{theo:conj_nogo}
		Any universal probabilistic heralded quantum protocol (including protocols without definite causal order) transforming $k<d-1$ uses of a $d$-dimensional unitary operation $\map{U_d}$ into its complex conjugate $\map{U_d^*}$ with probability $p$ that does not depend on $\map{U_d}$ necessarily has $p=0$, \ie, null success probability.
	
	\end{theorem}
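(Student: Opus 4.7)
My plan is to combine Lemma~\ref{parallel_is_OK} with a Schur-Weyl representation-theoretic obstruction. First, by Lemma~\ref{parallel_is_OK}, any general (possibly indefinite-causal-order) protocol attaining $\smap{S}(\map{U_d^{\otimes k}})=p\,\map{U_d^*}$ with $p>0$ induces a parallel delayed input-state protocol achieving the same transformation with probability at least $p/d_{\text{total}}>0$. So it suffices to rule out $p>0$ for parallel delayed input-state protocols whenever $k<d-1$.

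For such a protocol, the success condition reads $\map{D}(\psi_{U_d}\otimes\rho_{\text{in}})=p\,U_d^*\rho_{\text{in}}U_d^T$ with $\psi_{U_d}=(U_d^{\otimes k}\otimes I_\mathcal{A})\phi_E(U_d^{\dagger\otimes k}\otimes I_\mathcal{A})$, so the equation is covariant under $U_d\mapsto V U_d$. After twirling the Choi $S$ of the superinstrument element, I can assume $S$ is $U(d)$-covariant under a specific joint representation on $\mathcal{I}_0\otimes\mathcal{I}\otimes\mathcal{O}\otimes\mathcal{O}_0$. Combining $S\geq 0$ with Schur's lemma then forces $S$ to be block-diagonal in the associated isotypic basis, with nonzero blocks only where the Schur-Weyl components on the input side match the target representation on the output side.

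The crucial step is to identify the required Schur-Weyl matching. On the output side, the channel $\map{U_d^*}$ transforms via the conjugate fundamental representation $U_d\mapsto U_d^*$; modulo the $U(1)$-centre (which acts trivially on channels since $|\det U_d|=1$), this corresponds to the $SU(d)$-irreducible with the single-column Young diagram of $d-1$ boxes, via the identity $\Lambda^{d-1}U_d=\det(U_d)\,U_d^*$. On the input side, the probe $\psi_{U_d}$ carries matrix coefficients of $U_d^{\otimes k}$, whose Schur-Weyl irreducible components are labelled by partitions with exactly $k$ boxes and at most $d$ rows. The matching demands that the $(1,1,\ldots,1)$ partition with $d-1$ boxes appear among the partitions of $k$, which requires $k\geq d-1$. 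For $k<d-1$ the required irreducible is absent, the corresponding block of $S$ vanishes, and hence $p=0$. The boundary $k=d-1$ being saturated deterministically by the antisymmetric-projector protocol of Ref.~\cite{miyazaki17} serves as a sharpness check.

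The main obstacle I expect is carefully tracking the $\det U_d$ twist (the $U(1)$-centre freedom) and the fact that the $U(d)$ action splits across four distinct tensor factors, some transforming through $V$, some through $V^*$, and others trivially. I plan to manage this by working on the $PU(d)=U(d)/U(1)$ level from the outset (absorbing the $\det U_d$ twist, which is invisible at the channel level) and applying the Littlewood-Richardson rule to the joint isotypic decomposition on $\mathcal{I}_0\otimes\mathcal{I}\otimes\mathcal{O}\otimes\mathcal{O}_0$; once the bookkeeping is set up, the Schur-Weyl selection rule for the $(d-1)$-box column extracts the obstruction cleanly.
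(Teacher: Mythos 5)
Your first step (invoking Lemma~\ref{parallel_is_OK} to reduce to a parallel delayed input-state protocol) matches the paper, and your instinct that the obstruction is tied to the identity $\Lambda^{d-1}U_d=\det(U_d)\,U_d^*$ is the right one. But as written the argument has a gap I believe is fatal: you set up the Schur--Weyl selection rule at the level of the Choi operator $S$ and the channel-level equation. At that level the input carries the matrix coefficients of $U_d^{\otimes k}\otimes \overline{U_d}^{\otimes k}$ (the Choi operator of the channel $\map{U_d^{\otimes k}}$ involves both $U_d^{\otimes k}$ and its conjugate), and the output $\C(\map{U_d^*})$ carries those of $\overline{U_d}\otimes U_d$. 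As representations of $PU(d)$ --- which is exactly the level you propose to work at, ``absorbing the $\det U_d$ twist'' --- both sides decompose into the trivial plus the adjoint representation for every $k\geq 1$, so the irrep-matching condition is vacuous and yields no bound on $k$. The obstruction lives entirely in the $U(1)$/determinant grading you propose to quotient out: it only becomes visible after descending from the channel-level equation to an amplitude-level equation for individual Kraus operators, $D_a\left[U_d^{\otimes k}\otimes I\right]E = c_{a,U}\, U_d^*$, where the left-hand side is a homogeneous polynomial of degree $k$ in the entries of $U_d$ while $U_d^*$ has degree $\equiv -1 \pmod d$ on $SU(d)$. The paper obtains this amplitude-level equation via a purity-preservation argument (every Kraus element of the decoder must separately implement $U_d^*$ up to a $U_d$-dependent scalar), a step your proposal omits entirely.

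Two further points. First, even at the amplitude level your stated selection rule (``$(1^{d-1})$ must appear among the partitions of $k$, which requires $k\geq d-1$'') is not quite right: a partition of $k$ has exactly $k$ boxes, so naive matching would force $k=d-1$ exactly and wrongly exclude $k\geq d$; the resolution is that the scalar $c_{a,U}$ may itself depend nontrivially on $U_d$ (e.g.\ as $\bra{a}U_d\ket{\psi}$ when a use is ``wasted''), contributing extra boxes, and this freedom has to be introduced and then controlled. The paper does this by fixing $k=d-2$ without loss of generality, restricting to diagonal unitaries, and showing via linear independence of the torus characters $e^{\im\sum_l\gamma_l\theta_l}$ --- indexed by weak compositions of $d-1$ into $d$ parts, each of which has a vanishing entry --- that every coefficient must vanish. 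Second, the twirling of $S$ and its positivity, on which you spend most of your effort, are not where the difficulty lies. If you keep the $U(1)$ grading, work at the amplitude level, and handle the $U_d$-dependence of the proportionality scalar, your global representation-theoretic route becomes a legitimate alternative to the paper's torus-character computation; as proposed, it does not close.
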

	
\begin{proof}
		From Lemma\,\ref{parallel_is_OK} we see that if there exists a superinstrument that transforms $k$ uses of $\map{U_d}$ into its complex conjugate $\map{U_d^*}$ with some possibly smaller but still positive probability, there also exists a parallel superinstrument $\smap{S}$  that transforms $k$ uses of $\map{U_d}$ into its complex conjugate $\map{U_d^*}$ with some positive probability $p$, \ie, $\smap{S}(\map{U_d^{\otimes k}})= p\map{U_d^*}$. From the realisation theorem of the superinstruments (see Eq.\,\eqref{eq:realise_superis} and Ref.\,\cite{bisio10}), there exist an isometry $\map{E}:L(\mathcal{I}_0)\to L(\mathcal{I})\otimes L(\mathcal{A}) $  and an instrument element corresponding to success $\map{D_S}:L(\mathcal{I} \otimes \mathcal{A})\to L(\mathcal{O}_0)$ such that
 \begin{equation}
		\smap{S}( \map{U_d^{\otimes k}} )=  \map{D_S} \circ \left[ \map{U_d^{\otimes k}} \otimes   \map{I_A} \right] \circ \map{E}.
		\end{equation}
		Let $\rho_{\mathcal{IA}}$ 
		By the Naimark dilation, the instrument element $\map{D_S}$ is given by 
		\begin{equation} \label{DS}
		\map{D_S}(\rho_{\mathcal{IA}}) = \tr_{\mathcal{A}} \left( D \rho_{\mathcal{IA}} D^\dagger \right)
		\end{equation}
		for some operator $D\in L(\mathcal{I\otimes A})$.
		Set $\{\ket{a}\}$ as a basis for the auxiliary system $\mathcal{A}$.  The previous equation becomes
		\begin{equation} 
		\map{D_S}(\rho_{\mathcal{IA}}) = \sum_{a} \bra{a} D \rho_{\mathcal{IA}} D^\dagger \ket{a}.
		\end{equation}
		The operators  {$D_a:=\bra{a}D$} form a possible set of operators realizing the instrument $\map{D_S}$.
		 
		Since we assume $\smap{S} \left(\map{U_d^{\otimes k}}\right)= p \map{U_d'}$, $\smap{S} \left(\map{U_d^{\otimes k}}\right)$ must return a pure state whenever the input-state is a pure state.  The instrument element $\map{E}$ is an isometry, hence its output is always a pure state if the input-state is pure.  This forces $\map{D_S}$ to preserve the purity of pure input-states, which in turn implies that $\bra{a} D \rho_{\mathcal{IA}} D^\dagger \ket{a}$ must be the same for all $a$ up to a proportionality constant.   {Let $\map{D_a}$ denote the map given by
		 $\map{D_a}(\rho_{\mathcal{IA}} ):=D_a \rho_{\mathcal{IA}}  D_a^\dagger= \bra{a} D \rho_{\mathcal{IA}} D^\dagger \ket{a}$.  The above argument shows that $\map{D_a} \circ \left[ \map{U_d^{\otimes k}} \otimes   \map{I_A} \right] \circ \map{E}$ is also a valid universal conjugation supermap.}

		Without loss of generality we assume that $k=d-2$, since we may always opt to \textit{not} use any of the input-operations for the remaining cases of $k < d-2$.  The imaginary unit $\sqrt{-1}$ throughout this section will be denoted by the Roman font $\im$.  By hypothesis, every pure state $\ket{\psi}\in \mathcal{I}_0 \cong \mathbb{C}^d$ and unitary operator $U_d \in L(\mathbb{C})$ must respect
		\begin{equation} \label{action}
			D_a\left[ U_d^{\otimes d-2} \otimes I \right] E \ket{\psi} =e^{\im \phi_{\psi, U_d}}  \sqrt{p}U_d^*  \ket{\psi},
		\end{equation}
		where $\phi_{\psi, U_d}$ is a global phase that may depend on $\ket{\psi}$ and $U_d$.
		We see, however, that $\phi_{\psi, U_d}$ must be independent of the input-state.  Set $\{\ket{i}\}_{i=0}^{d-1}$ as the computational basis for $\mathbb{C}^d$ and the phase $\phi_{i, U_d}$ for when the input-state $\ket{\psi}$ is equal to $\ket{i}$.  Take a maximally entangled state $\ket{\phi_d^+} := \frac{1}{\sqrt{d}}\sum_{j=0}^{d-1} \ket{i}\ket{i}$ in $\mathcal{I}_0 \otimes \mathcal{I}_R$, where $\mathcal{I}_R$ is a ``copy'' of $\mathcal{I}_0$, i.e., another $d$-dimensional quantum system left untouched by $\smap{S}$.  We denote the corresponding phase by $\phi_{\phi^+_d,U_d}$.  Let $M_U := D_a\left[U_d^{\otimes d-2} \otimes I \right] E$.  Then, by linearity of $M_U$ and Eq.\,\eqref{action}, we conclude that $\phi_{i,U_d} = \phi_{\phi^+_d,U_d}$, hence no dependence on $i$.  The subscript of $\phi_{\psi,U_d}$ for the input-state shall be omitted as $\phi_{U_d}$
		 
		We now parametrise the operators $E$ and $D_a$ via their action on this basis as
		\begin{equation} \label{EDdef}
		\begin{split}
			E \ket{i}_{\mathcal{I}_0} \ &= \sum_{\vec{i},i,a}	\alpha_{\vec{i},i,a} \ket{i_1,\ldots,i_{d-2}}_\mathcal{I} \otimes \ket{a}_\mathcal{A}; \\
		_{\mathcal{O}_0}  \bra{i}D_a \ &= \sum_{\vec{i},i,a}	\beta_{\vec{i},i,a} \ket{i_1,\ldots,i_{d-2}}_\mathcal{I} \ \otimes \ket{a}_\mathcal{A} ,
		\end{split}
		\end{equation}
where $\vec{i} = [i_1,\ldots,i_{d-2}]$ is a vector such that $i_\lambda \in \{ 0,\ldots, d-1\}$ for any $\lambda = 1,\ldots,d-2$.
		Hereafter, we restrict to unitary operators $U_d$ that are diagonal in the computational basis such that $U_d=\sum_i e^{\im \theta_{i}} \ketbra{i}{i}$ where $\theta_{i}$ is any real number. For such diagonal $U_d$ its complex conjugate can be written as $U_d^*=\sum_i e^{-\im \theta_{i}} \ketbra{i}{i}$.  By Eq.\,\eqref{action},
		\begin{equation}
			\bra{i'}D_a\left[ U_d^{\otimes d-2} \otimes I \right]   E \ket{i'} =e^{\im \phi_{U_d}}  \sqrt{p} \bra{i'} U_d^*   \ket{i'}.
		\end{equation}
Substituting the definition \eqref{EDdef}, we obtain
		\begin{equation} \label{key0}
			\sum_{\vec{i},i,a} 	\alpha_{\vec{i},i,a} 	\beta_{\vec{i},i,a} \; e^{\im \left[\sum_{\lambda=1}^{d-2} \theta_{i_{\lambda}} \right]} = e^{\im \phi_{U_d}}  \sqrt{p} e^{-\im\theta_{i'}},
		\end{equation}
or, equivalently,
		\begin{equation} \label{key}
			\sum_{\vec{i},i,a} 	\alpha_{\vec{i},i,a} 	\beta_{\vec{i},i,a} \; e^{\im \left[\theta_{i'} + \sum_{\lambda=1}^{d-2} \theta_{i_{\lambda}}\right]} = e^{\im \phi_{U_d}}  \sqrt{p},
		\end{equation}
for all ${i,i'\in\{ 0,\ldots,d-1}\} $ and the diagonal $U_d$.  Note that each $U_d$ corresponds to some choice of real numbers $\vec{\theta_\lambda}= [\theta_0, \theta_1,\ldots,\theta_{d-1}]$ and \textit{vice versa}.  Moreover, the left-hand side of Eq.\,\eqref{key} depends on $i'$, but the right-had side does not.

		In combinatorics, a \textit{weak composition} of an integer $n$ is a sequence of non-negative integers that sum to $n$.  The weak compositions that appear in this proof are that of $d-1$ with $d$ elements.  The set of all such weak compositions will be denoted by $\Gamma$ and its elements (\ie,  the individual weak decomposition) by $\vec{\gamma}=[{\gamma}_0, \ldots, \gamma_{d-1}]$, where the subscripts denote the elements of $\vec{\gamma}$.
		
		 In Eq.\,\eqref{key} the summation on $i$ ranges between $0$ and $d-1$ and $\vec{i}$ over all possible combinations of $\vec{i}=[i_1,\ldots,i_{d-2}]$ where each $i_n$ ranges between $0$ and $d-1$.  Let $\nu_l$ denote the number of times an integer $l$ between $0$ and $d-1$ appears in $\vec{i}$ and $i$.  Recall that $\vec{i}$ consists of $d-2$ variables, thus $\vec{i}$ and $i$ in total are $d-1$ variables.  We see that $\sum_{l=0}^{d-1} \nu_l = d-1$.   Clearly, the sequence $[\nu_0,\ldots,\nu_{d-1}]$ belongs to $\Gamma$.  With slight abuse of notation, let us set $[\vec{i},i]=[i_1,\ldots,i_{d-2},i]$.  Each $[\vec{i},i]$ corresponds a $\vec{\gamma} \in \Gamma$.  Each $[\vec{i},i]$ with a given $\vec{\gamma}$ can be differentiated by an additional parameter, say $\kappa$. More specifically, let $K(\vec{\gamma})$ denote the set of all sequences $[\vec{i},i]$ with the weak decomposition $\vec{\gamma}$. This extra parameter $\kappa$ is then a natural number that  enumerates the sequences in $K(\vec{\gamma})$ (\eg, via lexicographic ordering).  Thus the summation $\sum_{\vec{i},i,a}$ in Eq.\,\eqref{key} can be relabelled as 
		$\sum_{\vec{\gamma},\kappa,a}$.		  Introducing $\alpha'_{\vec{\gamma}} := \sum_{\kappa,a} \alpha_{\vec{\gamma},\kappa,a}  \beta_{\vec{\gamma},\kappa,a}$, we have
		\begin{equation} \label{key2}
  			\sum_{\vec{\gamma}} 	\alpha_{\vec{\gamma}}' \; e^{\im \left[\sum_{l=0}^{d-1} \gamma_l \theta_{l}\right]} = e^{\im \phi_{U_d}}  \sqrt{p}.
		\end{equation}
Observe that for different $\vec{\gamma}$, the functions $e^{\im \left[ \sum_{l=0}^{d-1} \gamma_l \theta_{l}\right]}$ are linearly independent since $\theta_{i_\lambda}$ may take any value in the reals.  Each $\vec{\gamma}$ contains $d$ elements and must sum up to $d-1$.  One of the elements, say $\gamma_{l'}$ must be zero because the elements are non-negative.  Set $i' = l'$ in Eq.\,\eqref{key0} and use Eq.\,\eqref{key2} to replace $e^{i \phi_{U_d}} \sqrt{p}$ in the left-hand side of Eq.\,\eqref{key0}.  Then all the terms that appear in the right-hand side contain an exponent with a non zero coefficient in front of  $\theta_{l'}$, while the coefficients of $\theta_{l'}$ are zero on the left-hand side.   This equation can only be satisfied by setting $\alpha'_{\vec{\gamma}} = 0$, because $\exp(\im k \theta)$ and  $\exp(\im k' \theta)$ are linearly independent functions of $\theta$, for any pair of distinct integers $k$ and $k'$.  Thus, $p=0$.
\end{proof}


\section{Universal unitary transposition} \label{sec:trans}

		This section addresses the problem of universal unitary transposition. We consider probabilistic heralded exact universal quantum protocols transforming $k$ uses of a general $d$-dimensional unitary operation $\map{U_d}$ into its transpose $\map{U_d}$ in terms of a fixed basis. When only parallel protocols are considered, we show that the maximal success probability is exactly $p_\textit{s}=1-\frac{d^2-1}{k+d^2-1}$. Aslo, by exploiting ideas of the port-based teleportation \cite{ishizaka08}, one can design a delayed input-state parallel circuit that attains this maximal probability. When adaptive quantum circuits are considered, we present an explicit protocol that attains a success probability of 
			$p_\text{s}=1-\left(1- \frac{1}{d^2}\right)^{\ceil{\frac{k}{d}}}$,
		 which, for any constant dimension $d$, has an exponential improvement over any parallel protocol. We then analyse quantum protocols with indefinite causal order via the SDP approach presented in Sec.\,\ref{sec:SDP} and show that indefinite causal order protocols do have an advantage over causally ordered ones.
		


 \subsection{Gate teleportation and single-use unitary transposition} \label{gate_teleportation}
		Quantum teleportation is a universal protocol that can be used to send an arbitrary $d$-dimensional quantum state via classical communication assisted by quantum entanglement. We are going to describe the protocol for pure states, as the extension to general mixed states follows from linearity. Suppose Alice holds the qudit state $\ket{\psi}\in \mathbb{C}^d$ and shares with Bob a $d$-dimensional maximally entangled state 
		$\ket{\phi^+_d}:=\sum_{i=0}^{d-1} \frac{1}{\sqrt{d}}\ket{ii}$. In order to ``teleport'' her state to Bob, Alice performs a general Bell measurement on $\ket{\psi}$ and her share of the entangled state and then sends the outcome of her measurement to Bob. The generalised Bell measurements have POVM elements given by
	\small
	\begin{equation}  \begin{split}	
		\mathcal{M}:= \left\{  \left[ \left( X_d^i Z_d^j\right)^\dagger \otimes I_d \right] \ketbra{\phi^+_d}{\phi_d^+} \left[ \left( X_d{^i} Z_d{^j}\right) \otimes I_d  \right] \right\}_{i,j=0}^{i,j=d-1},
\end{split}  \end{equation}  
\normalsize		
		 where
	\begin{equation} \begin{split}	  \label{eq:shift}
	 X_d^i:=&\sum_{l=0}^{d-1} \ketbra{l\oplus i}{l} ; \\
	 Z_d^j:=&\sum_{l=0}^{d-1} \omega^{jl} \ketbra{l}{l},  
	\end{split}  \end{equation}  
	$\omega:= e^{\frac{2\pi \sqrt{-1}}{d}}$, and $l \oplus i$ denotes $l+i$ modulo $d$. The operators  $X_d^i$ and $Z_d^j$ are known as the shift and clock operators, respectively, and can be seen as a generalisation of the qubit Pauli operators.	Straightforward calculation shows that, after Alice's measurement, the state held by Bob is given by $X_d^i Z_d^j\ket{\psi}$. 
	
	After the measurement process is complete, Alice sends the measurement outcomes $i$ and $j$ of her joint measurement to Bob.  Bob can then apply the unitary operation  $(Z_d{^j})^{-1} (X_d{^i})^{-1} $  on his state to recover the state $\ket{\psi}$. Remark that, with probability $p=\frac{1}{d^2}$ Alice obtains the outcomes $i=j=0$ and Bob does not need to perform any correction.

    \begin{figure} 
	  \begin{center}
	\includegraphics[scale=.24]{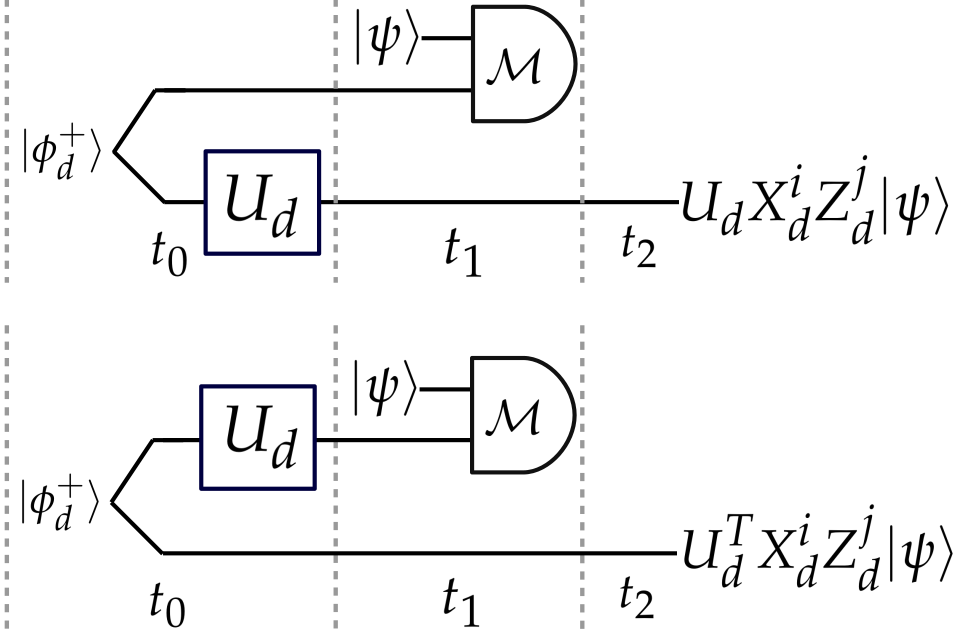} 
	\end{center}
			\caption{ Illustration of gate teleportation (upper circuit) and unitary transposition protocol (lower circuit).} \label{fig:teleportation}
	\end{figure}

	The standard teleportation protocol can be adapted to teleport the use of a unitary operation in a process known as gate teleportation \cite{gottesman99}. The idea here is that if Bob performs a unitary operation $U_d$ on his half of the maximally entangled state before Alice performs the joint Bell measurement, the final state is given by $U_d X_d^i Z_d^j\ket{\psi}$, 
	 see Fig.\,\ref{fig:teleportation}. In this protocol, the operation $U_d$ performed by Bob acts on the state $\ket{\psi}$ held by Alice when the outcomes are $i=j=0$, which happens with probability $p=\frac{1}{d^2}$. Gate teleportation can be represented as a quantum circuit (see Fig.\,\ref{fig:teleportation}) and has applications in fault tolerant quantum computation \cite{gottesman99}.
	
	Our method to transform a single use of a general {$d$-dimensional} unitary operation $\map{U_d}$ into its transpose%
	\footnote{The transposition is taken in the computational basis $\{\ket{i}\}_{i=0}^{d-1}$ in which the maximally entangled state $\ket{\phi^+_d}= \sum_i \frac{1}{\sqrt{d}}\ket{ii}$ is defined.}%
	 $\map{U_d^T}$ is based on the circuit interpretation of gate teleportation. The maximally entangled state respects the property ${I \otimes A \ket{\phi_d^+}= A^T \otimes I \ket{\phi_d^+}} $ for any linear operator ${A\in L(\mathbb{C}^d)}$. If Alice performs a general unitary $U_d$ on her half of the maximally entangled state, the state held by Bob after the protocol is $U^T_dX_d^i Z_d^j\ket{\psi}$. With probability $p=\frac{1}{d^2}$,  the outcome $i=j=0$ is obtained and $U^T_dX_d^i Z_d^j\ket{\psi}$ is equal to $U^T_d \ket{\psi}$, see Fig.\,\ref{fig:teleportation}.


\subsection{Port-based teleportation and parallel unitary transposition} \label{sec:PBT}

		Port-based teleportation \cite{ishizaka08} has the same main goal as the standard state teleportation protocol. Alice wants to ``teleport'' an arbitrary $d$-dimensional state $\ket{\psi}$  to Bob with classical communication assisted by shared entanglement. The original motivation of Port-based teleportation is to perform a teleportation protocol that does not require a correction made via
Pauli operators, but it can be made simply by selecting  some particular ``port''. For that, it allows more general initial resource
state and more general joint measurements. The three main differences of Port-based teleportation when compared to the standard
teleportation protocol presented in the previous section can be summarised by:
\begin{enumerate}
\item In port-based teleportation, instead of sharing a $d$-dimensional maximally entangled state, Alice and Bob may share a general $d^k$-dimensional entangled states $\ket{\phi}\in \left(\mathbb{C}^d \otimes \mathbb{C}^d\right)^{\otimes k }$. This general entangled state $\ket{\phi}$ can be seen as $k$ pairs of qudits, referred to as ``ports''.
\item Instead of performing a generalised Bell measurement, Alice can perform a general joint measurement on $\ket{\psi}$ and her half of the $k$ entangled states shared with Bob.
\item  Instead of performing the Pauli correction, Bob chooses a particular port based on Alice's message and discards the rest of the ports of his system.
\end{enumerate}	
		We note that since no Pauli correction is made, port-based teleportation can only perform the teleportation task approximately or probabilistically. In this paper we only consider the probabilistic exact port-based teleportation where Alice performs a $k+1$ outcome measurement, where $k$ outcomes are associated to the $k$ ports she shares with Bob and another outcome corresponding to failure. If Alice obtains the outcome of failure, she sends the failure flag to Bob and the protocol is aborted. If she obtains an outcome corresponding to some port $l$, she communicates this corresponding outcome to Bob and the state $\ket{\psi}$ is teleported to Bob's port labelled by $l$.
		
		The optimal probabilistic single port ($k=1$) case is essentially the standard state teleportation. Consider the case where Alice and Bob share the $d$-dimensional maximally entangled state $\ket{\phi_d^+}$. If we set the measurement performed by Alice as $M_1 = \ketbra{\phi_d}{\phi_d}$ and $M_{\text{fail}}= I - \ketbra{\phi_d}{\phi_d}$, with probability $p=\frac{1}{d^2}$, the state $\ket{\psi}$ is obtained in the single port $1$, and with probability $p_F=1-\frac{1}{d^2}$ the protocol fails.
		
		Reference\,\cite{studzinski16} shows that the optimal probabilistic port-based gate teleportation protocol for any dimension $d$ and number of states $k$ with success probability $p=1-\frac{d^2-1}{k+d^2-1}$. Reference\,\cite{studzinski16} also characterises the optimal $d^k$-dimensional shared entangled state and the optimal joint measurement Alice must perform. The optimal state resource state is described by exploiting the Schur-Weyl duality %
		\begin{equation}
		{\mathbb{C}^d}^{\otimes k}\cong \bigoplus_{\mu \in \text{irrep}(U^{\otimes k})} \mathbb{C}^{\text{dim}(\mu)}_\mu \otimes \mathbb{C}^{m_\mu},
		\end{equation}
		 where $\text{irrep}(U^{\otimes k})$ is the set of all irreducible representations $\mu$ of the group of special unitary $\text{SU}(U_d)$ contained in the decomposition $U^{\otimes k}$ and $m_\mu$ is the multiplicity of the representation $\mu$. The optimal resource state used for port-based teleportation can be written as
		\begin{equation}\label{eq:phi_pbt}
		 \ket{\phi_\text{PBT}}:=   \bigoplus_{\mu\in \text{irrep}(U^{\otimes k})} \sqrt{p_\mu} \ket{\phi^+(\mu)}\otimes \ket{\psi_{m_\mu}}, 
		\end{equation}
	where  
		\begin{equation}
					 \ket{\phi^+(\mu)}:=   \frac{1}{\sqrt{\text{dim}(\mu)}}\sum_i \ket{i_\mu i_\mu} \in \mathbb{C}^{\text{dim}(\mu)}_\mu \otimes \mathbb{C}^{\text{dim}(\mu)}_\mu  
		\end{equation}
		 is the maximally entangled state on the linear space of the irreducible representation $\mu$, $\{p_\mu\}$ is a probability distribution, and $\ket{\psi_{m_\mu}}\in \mathbb{C}^{m(\mu)}\otimes \mathbb{C}^{m(\mu)}$ is a pure quantum state.

		In Sec.\,\ref{gate_teleportation}  we have exploited the standard state gate teleportation to construct a protocol that can be used to transform a general unitary $U_d$ into its transpose $U_d^T$. We now exploit port-based gate teleportation to construct a parallel protocol that transforms $k$ uses of $U_d$ to obtain its transpose.
		
		    \begin{figure}  
	  \begin{center}
	\includegraphics[scale=.21]{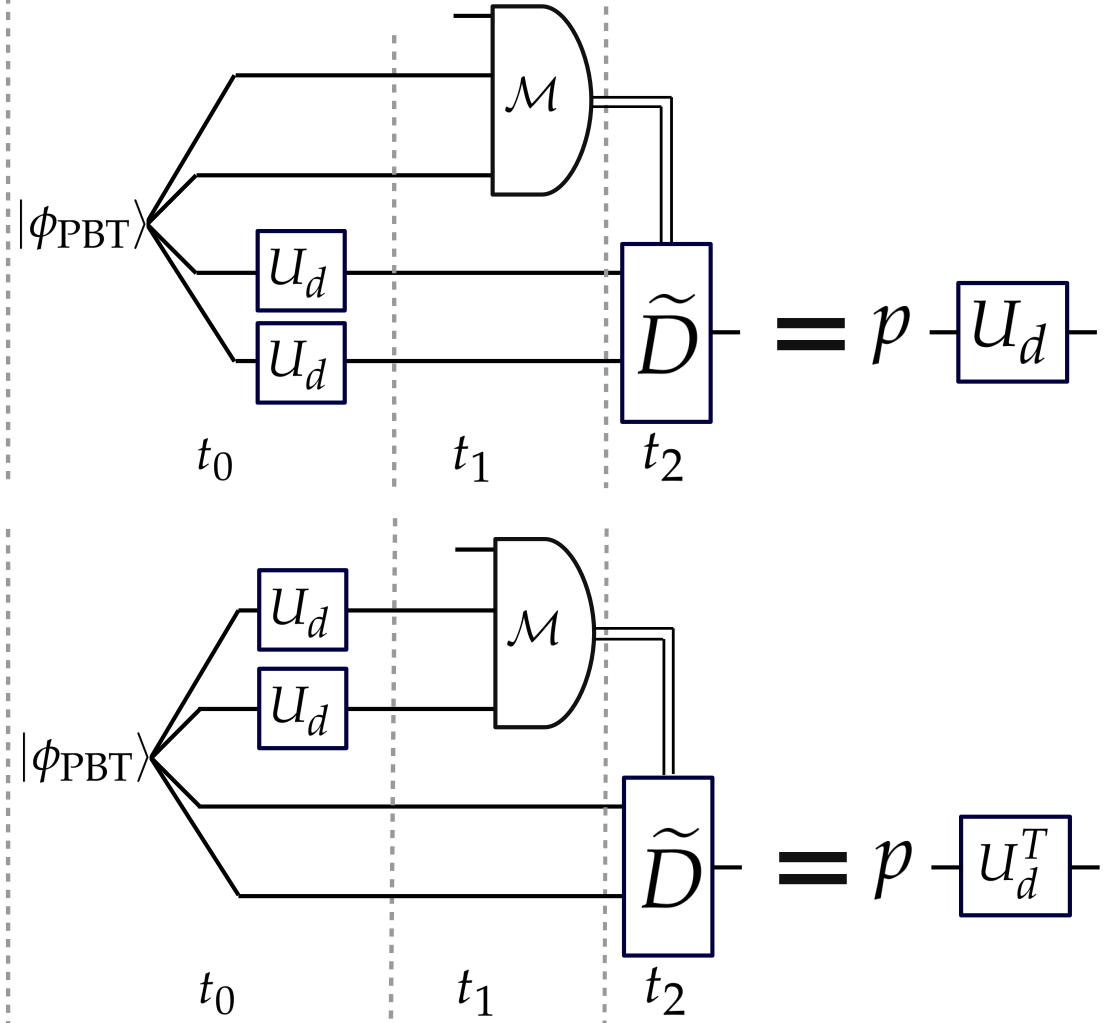} 
	\end{center}
			\caption{Illustration of the modified port-based teleportation protocol that makes $k=2$ uses of an arbitrary $d$-dimensional unitary operation $\map{U_d}$ where the state $\ket{\psi_\text{PBT}}$ is described in Eq.\,\eqref{eq:phi_pbt} and the decoder $\map{D}$ simply selects a particular port accordingly to the outcome of the joint measurement $\mathcal{M}$. The upper circuit exploits port-based gate teleportation to store $k=2$ uses of a input-operation $\map{U_d}$ and returns a single use of it with probability $p$. The lower circuit exploits port-based gate teleportation to transform $k$ uses of $\map{U_d}$ into a single use of its transpose $\map{U_d^T}$. The upper and lower circuits are successful with probability $p=1-\frac{d^2-1}{k+d^2-1}$.} \label{fig:PBT}
	\end{figure}

		The first important observation is that the state $\ket{\phi_\text{PBT}}$ (Eq.\,\eqref{eq:phi_pbt}) respects
		\begin{equation}		
			U_d^{\otimes k} \otimes I  \ket{\phi_\text{PBT}} = I \otimes  U_d^{T^{\otimes k}}  \ket{\phi_\text{PBT}}.
		\end{equation}
		This identity holds true because every tensor product of $k$ unitaries $U_d$ can be decomposed as\footnote{Here the symbol $\cong$ is used to ephasise that the Eq.\,\eqref{eq:SW} is true up to an isometry.}
		\begin{equation} \label{eq:SW}
		U_d^{\otimes k}\cong\bigoplus_{\mu\in \text{irrep}(U^{\otimes k})}   U(\mu) \otimes I_{m_{\mu}}
	\end{equation}	for some unitaries $U(\mu)$ acting on the irreducible representation space $\mathbb{C}^{\text{dim}(j)}_j$ \cite{studzinski16}. 	 Hence, similarly to the case of the single use unitary transposition, we can adapt port-based gate teleportation to obtain a general protocol to transform $k$ uses of a general unitary operation $\map{U_d}$ into its transpose $\map{U_d^T}$. It is enough to perform the operation $\map{U_d}$ on each of her half of entangled qudit states (see Fig.\,\ref{fig:PBT}). We will show in Sec.\,\ref{sec:trans_parallel} that this protocol is also optimal in terms of success probability.
		
\subsection{Review on probabilistic exact unitary learning}

		We make a brief summary of problem known as unitary learning (also known as storage and retrieval of unitary operations) \cite{vidal01,sasaki02,gammelmark09,bisio10,sedlak18}.
		{As we will show in Sec.\,\ref{sec:trans_parallel}, the problem of probabilistic unitary learning is closely connected to the problem of parallel unitary transposition and results related to unitary learning will be useful to prove the optimality of our parallel unitary transposition protocol.}
		Suppose that, until some time $t_1$, Alice has access to $k$ uses of some general $d$-dimensional unitary operation $U_d$ of which the description is not provided. 
		In a later moment $t_2$, where Alice cannot access $U_d$ any more, she wants to implement the action of this unitary on some general quantum state $\rho$ chosen at time $t_2$. 	   A parallel strategy%
		\footnote{In principle, one may also consider adaptive  protocols to perform better in the unitary learning problem. In an adaptive protocol, one can perform different enconder operations in between the use of the unitary to create more general protocols. One may also consider protocols where the unitaries $U_d$ are used without a definite causal order. 
			References\,\cite{bisio10,sedlak18} show that, for the unitary learning problem, the protocol with highest success probability (exact implementation) and highest expected fidelity (deterministic implementation) can always be parallelised.}
		 to succeed in this task is to perform the $k$ uses of $U_d$ on parts of an entangled quantum state $\phi_E$ before $t_1$ to obtain a quantum state $\psi_M:= \left[U_d^{\otimes k}\otimes I\right] \phi_E \left[U_d^{\dagger^\otimes k} \otimes I	 \right]  $. Alice then saves this state  $\psi_M$ until a later time $t_2$ where she performs a global decoder operation $\map{D}$ on the state $\psi_M$ together with the target state $\rho$, which is desired to satisfy%
		\footnote{We note that although the main goal is to obtain a decoder channel $\map{D}$ and entangled state $\phi_E$ such that %
		 ${\map{D}(\psi_M\otimes \rho)=U_d \rho U_d^\dagger}$ where 
		 ${\psi_M:= \left[U_d^{\otimes k}\otimes I \right] \phi_E \left[  U_d^{\dagger^\otimes k}\otimes I\right]  }$,
		  the unitary learning task cannot be realised in a deterministic and exact way for a general unitary $U_d$.
		} %
		 ${\map{D}(\psi_M\otimes \rho)=U_d \rho U_d^\dagger}$. References  \cite{vidal01,sasaki02,gammelmark09,bisio10} consider deterministic non-exact unitary learning protocols and analyse strategies that  simulate the action of $\map{U_d}$ with the maximal average fidelity, while Ref. \cite{sedlak18} considers probabilistic heralded protocols that can be used to retrieve (a single use of) $\map{U_d}$ exactly but may fail with some probability.

	The unitary learning problem described above can be rephrased as the problem of finding delayed input-state protocols that transform $k$ uses of a general unitary operation $\map{U_d}$ into itself. In Sec.\,\ref{sec:trans_parallel} we present a one-to-one connection between probabilistic unitary learning protocols and delayed input-state parallel protocols transforming $k$ uses of a general unitary operation $\map{U_d}$ into its transpose $\map{U_d^T}$. Essentially, we show that any probabilistic unitary learning with success probability $p$ can be translated into a parallel unitary transposition protocol with success probability $p$. This one-to-one connection is related to the fact that the optimal resource state used for unitary learning and the optimal resource state used for parallel delayed input-state unitary transposition can be both chosen as a state {$\ket{\phi}$ which respects the property}
			\begin{equation}		
			U_d^{\otimes k} \otimes I \ket{\phi} = I \otimes  U_d^{T^{\otimes k}}  \ket{\phi},
		\end{equation}
	as shown in next subsection.

		\subsection{Optimal parallel unitary transposition protocols} \label{sec:trans_parallel}

		We show how any parallel protocol that can be used to transform $k$ copies of  a general unitary operation $\map{U_d}$ into its transpose $\map{U_d^{T}}$ can be adapted into a delayed input-state protocol keeping the same success probability.

\begin{lemma} \label{theo:parallel_trans=delay}
		Any parallel probabilistic heralded protocol transforming $k$ copies of a general unitary $\map{U_d}$ into $\map{U_d^{T}}$ with a constant probability $p$ can be converted to a delayed input-state parallel protocol with the same probability $p$.
\end{lemma}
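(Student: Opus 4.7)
The plan is to construct an explicit delayed input-state parallel protocol from the given parallel one by combining a Stinespring realisation of the original protocol with the transpose equivariance of the maximally entangled state $\ket{\phi_d^+}$, namely the identity $(A\otimes I)\ket{\phi_d^+}=(I\otimes A^T)\ket{\phi_d^+}$, which is precisely what underlies the transposition structure of the target operation.

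First, by the realisation theorem for probabilistic parallel supermaps (Eq.~\eqref{eq:realise_superis}), the given protocol $\smap{S}$ can be expressed through an isometric encoder $V:\mathcal{I}_0\to\mathcal{I}\otimes\mathcal{A}$ and a success Kraus operator $K:\mathcal{O}\otimes\mathcal{A}\to\mathcal{O}_0$. Since $\smap{S}(\map{U_d}^{\otimes k})(\ketbra{\psi}{\psi})=p\ketbra{U_d^T\psi}{U_d^T\psi}$ sends every pure input to a pure output, a Schmidt-rank / purity-preservation argument lets me absorb the Kraus-index degree of freedom into the auxiliary register and obtain the single-Kraus identity
\begin{equation*}
K(U_d^{\otimes k}\otimes I_\mathcal{A})V=\sqrt{p}\,e^{\im\phi(U_d)}\,U_d^T
\end{equation*}
for every unitary $U_d$ and some phase $\phi(U_d)$.

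Next I will introduce the input-state-independent resource $\ket{\phi_E}:=(V\otimes I_{\mathcal{I}_0'})\ket{\phi_d^+}_{\mathcal{I}_0\mathcal{I}_0'}$ on $\mathcal{I}\otimes\mathcal{A}\otimes\mathcal{I}_0'$, with $\mathcal{I}_0'$ (a copy of $\mathcal{I}_0$) absorbed into the enlarged auxiliary register. After $\map{U_d}^{\otimes k}$ acts on $\mathcal{I}$, applying $K$ on $\mathcal{O}\otimes\mathcal{A}$ together with the transpose equivariance yields the sub-normalised post-success state $\sqrt{p}\,e^{\im\phi(U_d)}(I\otimes U_d)\ket{\phi_d^+}_{\mathcal{O}_0\mathcal{I}_0'}$, whose squared norm is exactly $p$; up to normalisation this is the Choi representation of the target channel $\map{U_d^T}$, so the ``resource-preparation'' step of the new protocol already succeeds with the correct probability.

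The crux, and the step I expect to be the main obstacle, is to complete the decoder so that $\rho_\text{in}$, delivered on $\mathcal{I}_0$ only at decoding time, combines with this Choi-like resource to produce $\map{U_d^T}(\rho_\text{in})$ on $\mathcal{O}_0$ \emph{without} any further probability loss: a direct Bell-measurement teleportation from $\mathcal{I}_0$ through $\mathcal{I}_0'$ would cost an extra factor $1/d^2$. To close this gap I will exploit the adjoint of the single-Kraus identity, $V^\dagger(U_d^{\otimes k}\otimes I_\mathcal{A})K^\dagger=\sqrt{p}\,e^{-\im\phi(U_d^\dagger)}U_d^T$, which exhibits a symmetry in which $V$ and $K^\dagger$ play interchangeable roles; after replacing $K$ by its restriction $KVV^\dagger$ to the image of the encoder (a change that leaves the original protocol unaffected but makes $K^\dagger/\sqrt{p}$ into a genuine isometry), I can use $K^\dagger$ as an alternative state-preparation subcircuit and $V^\dagger$ as a decoder that ingests $\rho_\text{in}$ and writes the result directly on $\mathcal{O}_0$. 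After performing this rearrangement I would verify the delayed input-state parallel superchannel conditions of Eq.~\eqref{eq:superchannel_PAR_delay} and check by direct substitution that the new supermap still realises $p\map{U_d^T}$ on every unitary input, which by linearity extends to all inputs in the span of unitary Choi operators.
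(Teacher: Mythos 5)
The first half of your construction is sound: the single--Kraus reduction $K(U_d^{\otimes k}\otimes I_{\mathcal{A}})V=\sqrt{p}\,e^{\im\phi(U_d)}U_d^T$ is legitimate (the paper uses the same purity-preservation argument in its Theorem~\ref{theo:conj_nogo}), and feeding half of $\ket{\phi_d^+}$ through $V$ does produce, with probability $p$, the Choi vector $(U_d^T\otimes I)\ket{\phi_d^+}$ of the target. But the step you yourself flag as the crux is where the proof fails, and the proposed repair does not close it. First, replacing $K$ by $KVV^\dagger$ is \emph{not} harmless: in the original protocol $K$ acts on states in the image of $(U_d^{\otimes k}\otimes I_{\mathcal{A}})V$, which is not contained in the image of $V$ for generic $U_d$, so inserting the projector $VV^\dagger$ changes the protocol. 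Second, even granting the adjoint identity $V^\dagger(U_d^{\otimes k}\otimes I_{\mathcal{A}})K^\dagger\propto U_d^T$, the ``role-swapped'' circuit it suggests is not physically realisable as stated: it requires the black-box unitaries to act from $\O$ back to $\I$, whereas they are given as maps $L(\I)\to L(\O)$; and in any relabelling that fixes this, the input state enters through the new encoder $K^\dagger$ at the \emph{beginning} of the circuit, so nothing has been delayed. More fundamentally, retrieving the action of an unknown unitary on an arbitrary later-supplied state from a single copy of its Choi state cannot be done without loss (this is exactly the storage-and-retrieval problem), so your construction as it stands yields success probability $p/d^2$, not $p$.

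The paper avoids this obstacle entirely by not constructing a new circuit at all. It observes that the covariance $U_d\mapsto BU_dA^T$ implies that conjugating $S$ by $A_{\mathcal{I}_0}\otimes B^{*\otimes k}_{\I}\otimes A^{*\otimes k}_{\O}\otimes B_{\O_0}$ preserves both the validity of the superinstrument and the identity $\smap{S}(\map{U_d^{\otimes k}})=p\,\map{U_d^T}$; averaging over the Haar measure then gives a twirled superinstrument $\{S_\tau,F_\tau\}$ with the same success probability, and Schur's lemma applied to the $A$-average forces $\tr_{\O\O_0}S_\tau\propto I_{\mathcal{I}_0}\otimes\tr_{\mathcal{I}_0\O\O_0}S_\tau$, which is precisely the delayed input-state condition of Eq.~\eqref{eq:superchannel_PAR_delay}. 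The explicit group-symmetric resource state you are implicitly reaching for (the one satisfying $U_d^{\otimes k}\otimes I\ket{\phi_E}=I\otimes U_d^{T\otimes k}\ket{\phi_E}$, which is what permits delaying the input without the $1/d^2$ penalty) only appears downstream, in Lemma~\ref{theo:trans=learning}, \emph{after} the delayed-input form has already been secured by the twirling argument. If you want to keep a constructive route, you would need to import that Schur--Weyl structure into your resource state rather than using a bare maximally entangled state and a Bell measurement.
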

\begin{proof}

Let $S$ be the Choi operator of the superinstrument element associated to success and $F$ be the Choi operator of the superinstrument element associated to failure. Superinstrument element $S$ transforms $k$ copies of $\map{U_d}$ into $\map{U_d^{T}}$ with probability $p$, \ie, 
\begin{equation} \label{eq:S}
		\tr_{\mathcal{I}\O} \left( S \left[ I_{\mathcal{I}_0}\otimes  \C \left(\map{U_d^{\otimes k}} \right)^T \otimes I_{\O_0} \right] \right) = p \C(\map{U_d^{T}}) \quad \forall U_d,
\end{equation}
and $S+F$ is a valid parallel superchannel. 

Since $S$ transforms every unitary operator into its transpose, we can make the change of variable $U_d\mapsto BU_dA^T$ where $A$ and $B$ are arbitrary $d$-dimensional unitary operators. With that, unitary transposition can be seen as $\left( BU_dA^T \right)^{\otimes k}\mapsto p (BU_dA^T)^T = p A U_d^T B^T$. Our goal now is to show that if $S$ respects Eq.\,\eqref{eq:S}, any operator $S'$ respecting \begin{equation}  \begin{split} \label{eq:schur}
		S'= \Big[ A_{\mathcal{I}_0} \otimes  B_\mathcal{I}^{* \otimes k}  &\otimes A^{*\otimes k}_\O  \otimes B_{\O_0}\Big] \; S \\
		& \left[ A^\dagger_{\mathcal{I}_0}  \otimes B^{T \otimes k}_\mathcal{I} \otimes A^{T \otimes k}_\O \otimes B^\dagger_{\O_0}  \right]
	\end{split}  \end{equation}  
	satisfies 
	\begin{equation} \label{ABBA}
		\tr_{\mathcal{I}\O} \left( S' \left[ I_{\mathcal{I}_0} \otimes {\C\left( \map{U_d^{\otimes k}} \right)}^T \otimes I_{\O_0} \right] \right) = p \C (\map{U_d^{T}}) \quad \forall U_d.
\end{equation}

	To prove this fact, first note that the identity presented in Eq.\,\eqref{eq:idenitity_choi} implies that ${\C(\map{AU_d^{T}B^T})= \left[ A\otimes B \right] \,\C(\map{U_d^{T}})  \left[A^\dagger\otimes B^\dagger\right]}$, and
	\small
	\begin{equation}
	 \C \left(  \left[\map{B U_d A^T} \right]^{\otimes k} \right) = \left[ B^{\otimes k} \otimes A^{\otimes k}\right]  \C (\map{U_d^{\otimes k}})  \left[ B^{\dagger^{\otimes k}}\otimes A^{\dagger^{\otimes k}} \right],
	 	\end{equation}
	 		\normalsize
	 	which implies
	 	\small
	 		\begin{equation} \label{eq:sym_trans}
	 \C \left(  \left[\map{B U_d A^T }\right]^{\otimes k} \right)^T = \left[ B^{*^{\otimes k}} \otimes A^{*^{\otimes k}} \right]  \C (\map{U_d^{\otimes k}})^T  \left[ B^{T^{\otimes k}}\otimes A^{T^{\otimes k}} \right].
	 	\end{equation}
	 	\normalsize
	 	Substituting Eq.\,\eqref{eq:sym_trans} and ${\C(\map{AU_d^{T}B^T})= \left[ A\otimes B \right] \,\C(\map{U_d^{T}})  \left[A^\dagger\otimes B^\dagger\right]}$ in Eq.\,\eqref{eq:S} we obtain
	 	\footnotesize%
	 	\begin{equation} \label{eq:final}  \begin{split}
		&\tr_{\mathcal{I}\O} \left( S \left[ I_{\mathcal{I}_0}\otimes \left[ B^{*^{\otimes k}} \otimes A^{*^{\otimes k}} \right]  \C (\map{U_d^{\otimes k}})^T  \left[ B^{T^{\otimes k}}\otimes A^{T^{\otimes k}} \right] \otimes I_{\O_0} \right] \right) \\
		&= \left[A_{\mathcal{I}_0}\otimes B_{\mathcal{O}_0}\right]\,\C(\map{U_d^{T}}) \left[A_{\mathcal{I}_0}^\dagger\otimes B_{\mathcal{O}_0}^\dagger\right] .
	\end{split}  \end{equation}
	\normalsize
 	If we apply the operator $A^\dagger_{\mathcal{I}_0} \otimes B^\dagger_{\mathcal{O}_0}$ on the left side and the operator $A_{\mathcal{I}_0} \otimes B_{\mathcal{O}_0}$ on the right side of Eq.\eqref{eq:final} and use the cyclic property of the trace, we find that $S$ can be substituted by
 	 \begin{equation}  \begin{split} 
		S'':= \Big[ A^\dagger_{\mathcal{I}_0} \otimes  B_\mathcal{I}^{T \otimes k}  &\otimes A^{T\otimes k}_\O  \otimes B^\dagger_{\O_0}\Big] \; S \\
		& \left[ A_{\mathcal{I}_0}  \otimes B^{* \otimes k}_\mathcal{I} \otimes A^{* \otimes k}_\O \otimes B_{\O_0}  \right].
	\end{split}  \end{equation}  
 	Since $A$ and $B$ are arbitrary unitary operators, we can take the invertible transformations $A^\dagger\mapsto A$ and $B^\dagger\mapsto B$ to obtain the symmetry of Eq.\,\eqref{eq:schur}.

		The symmetry presented in Eq.\,\eqref{eq:schur} motivates the definition of a Haar measure ``twirled'' map
	\begin{equation}  \begin{split}
	\map{\tau}(S):= \int_{\text{Haar}} & \left[ A_{\mathcal{I}_0}  \otimes B^{*\otimes k}_\mathcal{I} \otimes A^{*\otimes k}_\O \otimes B_{\O_0} \right] S \\
	& \left[ A^\dagger_{\mathcal{I}_0}\otimes B^{T \otimes k}_\mathcal{I} \otimes A^{T\otimes k}_\O   \otimes  B^\dagger_{\O_0}  \right] \text{d}{A}\text{d}B.  
	\end{split}  \end{equation}  
	We now define a twirled version of the superinstrument as $S_\tau:=\map{\tau}(S)$ and $F_\tau:=\map{\tau}(F)$, which respects the conditions of valid superinstruments and 
	$S_\tau$ also transforms $k$ uses of any $\map{U_d}$ into $\map{U_d^{T}}$ with probability $p$. We now notice that both $S_\tau$ and $F_\tau$ respects
	\begin{equation}  \begin{split}
	&\tr_{\O \O_0} S_\tau = \\
	& \int_{\text{Haar}}  \left[ A_{\mathcal{I}_0}\otimes B^{* \otimes k}_\mathcal{I} \right] \tr_{\O\O_0}(S) \ \left[ A^\dagger_{\mathcal{I}_0} \otimes B^{T \otimes k}_\mathcal{I} \right] \text{d}{A}\text{d}B  \\
	&\propto I_{\mathcal{I}_0} \otimes \tr_{\mathcal{I}_0 \O \O_0} S_\tau \,
	\end{split}  \end{equation}  
	since the identity is the only operator that commutes with all unitary operations (Schur's lemma). It follows then that the superchannel $C_\tau:= S_\tau+F_\tau$ respects the conditions of a parallel  delayed input-state protocol.
\end{proof}

\begin{lemma} \label{theo:trans=learning}
For every delayed input-state parallel protocol transforming $k$ uses of a general unitary operation $\map{U_d}$ into its transpose $\map{U_d^T}$ with success probability $p$ that is independent of $\map{U_d}$, there exists a probabilistic unitary learning protocol with a success probability $p$.

Conversely, for every probabilistic unitary learning protocol with a success with probability $p$  that is independent of $\map{U_d}$, there exists a delayed input-state parallel protocol transforming $k$ uses of a general unitary operation $\map{U_d}$ into its transpose $\map{U_d^T}$ with a constant success probability $p$.
\end{lemma}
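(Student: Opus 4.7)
The plan is to exhibit explicit bidirectional constructions between delayed input-state parallel unitary transposition protocols and probabilistic unitary learning protocols, using a SWAP operation on the two halves of the encoder resource state as the conversion tool.

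First, I would invoke an analogue of the twirling argument from the proof of Lemma~\ref{theo:parallel_trans=delay} to assume, without loss of success probability, that the resource state $\phi$ in both protocols satisfies the key symmetry
\begin{equation*}
	\left(U_d^{\otimes k}\otimes I\right)\phi\left(U_d^{\dagger\otimes k}\otimes I\right) = \left(I\otimes U_d^{T\otimes k}\right)\phi\left(I\otimes U_d^{*\otimes k}\right) \quad \forall\, U_d.
\end{equation*}
This is precisely the identity enjoyed by the port-based teleportation state $\ket{\phi_\text{PBT}}$ of Eq.~\eqref{eq:phi_pbt}, which underlies both the optimal learning protocol of Ref.~\cite{sedlak18} and the optimal parallel transposition protocol of Sec.~\ref{sec:PBT}.

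For the transposition-to-learning direction, I would set the learning resource state to $\phi^L:=\mathrm{SWAP}(\phi)$ and define the learning decoder by $\map{D_L}(\rho\otimes\sigma):=\map{D_T}(\rho\otimes\mathrm{SWAP}(\sigma))$, where $\mathrm{SWAP}$ exchanges the two $d^k$-dimensional halves of the encoder. Applying $\map{D_L}$ to $\rho\otimes(U_d^{\otimes k}\otimes I)\phi^L(U_d^{\dagger\otimes k}\otimes I)$, pushing the SWAP through the tensor product via $\mathrm{SWAP}(A\otimes B)\mathrm{SWAP}=B\otimes A$, using $\mathrm{SWAP}(\phi^L)=\phi$, and then applying the resource-state symmetry with the substitution $U_d\to U_d^T$, one obtains
\begin{equation*}
	\map{D_L}\!\left(\rho\otimes(U_d^{\otimes k}\otimes I)\phi^L(U_d^{\dagger\otimes k}\otimes I)\right) = \map{D_T}\!\left(\rho\otimes(U_d^{T\otimes k}\otimes I)\phi(U_d^{*\otimes k}\otimes I)\right) = p\,U_d\rho U_d^\dagger,
\end{equation*}
where the last equality uses the defining property of the transposition decoder with $V:=U_d^T$, noting $(U_d^T)^T=U_d$ and $(U_d^T)^*=U_d^\dagger$. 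The learning-to-transposition direction is entirely analogous: set $\phi:=\mathrm{SWAP}(\phi^L)$ and $\map{D_T}(\rho\otimes\sigma):=\map{D_L}(\rho\otimes\mathrm{SWAP}(\sigma))$, and verify by the symmetric calculation that the output is $p\,U_d^T\rho U_d^*$.

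The main obstacle is rigorously establishing the WLOG symmetrization step: the Haar twirl must be shown to preserve both the parallel delayed input-state structure and the success probability on every unitary, for both problems simultaneously. I would adapt the argument of Lemma~\ref{theo:parallel_trans=delay}, using the natural group action $U_d\mapsto B U_d A^T$ for learning as well, and invoking Schur's lemma on the twirled encoder marginal to reduce to the SWAP-compatible resource state described above.
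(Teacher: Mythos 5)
Your proposal is correct and follows essentially the same route as the paper: both reduce, via twirling and Schur's lemma on the encoder marginal, to a resource state obeying $U_d^{\otimes k}\otimes I\,\ket{\phi}=I\otimes U_d^{T\otimes k}\ket{\phi}$, and then convert between the two tasks by transferring the $k$ input-operations to the other half of that state (your explicit $\mathrm{SWAP}$ conjugation absorbed into the decoder is just a formalisation of the paper's ``apply the unitaries on the other half''). The only cosmetic difference is that for the learning-to-transposition direction the paper cites Ref.~\cite{sedlak18} for the symmetrised form of the learning resource state, whereas you propose to rederive it by the same twirl.
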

\begin{proof}
We start by showing how one can adapt a parallel protocol transforming $k$ uses of a general unitary operation $\map{U_d}$ into its transpose $\map{U_d^T}$ into a unitary learning one with the same success probability.  

	Let $S$ be the Choi operator of the superinstrument element associated to success and $F$ be the Choi operator of the superinstrument element associated to failure. Superinstrument element $S$ transforms $k$ copies of $\map{U_d}$ into $\map{U_d^{T}}$ with probability $p$, \ie, 
	\begin{equation} \label{eq:S2}
		\tr_{\mathcal{I}\O} \left( S \left[ I_{\mathcal{I}_0}\otimes  \C \left(\map{U_d^{\otimes k}} \right)^T \otimes I_{\O_0} \right] \right) = p \C(\map{U_d^{T}}) \quad \forall U_d,
\end{equation}
Lemma \ref{theo:parallel_trans=delay} states that this protocol can be converted to have a delayed input-state and without lost of generality, the superchannel $C=S+F$ respects the commutation relation
	\begin{equation} \label{eq:schur_trans}
	 \left[C,A^*_{\mathcal{I}_0} \otimes B^{\otimes k}_\mathcal{I} \otimes A^{\otimes k}_\O\otimes B^*_{\O_0}\right]=0 
\end{equation}	 for every pair of unitary operations $A,B\in SU(d)$.

	When a Choi operator $C$ represents a delayed input-state protocol, the operator $C_\mathcal{I}:= \tr_{\mathcal{I}_0\O\O_0} C$ is proportional to the reduced state $\tr_\mathcal{A} \left( \phi_E\right)$ of the state $\phi_E\in L(\mathcal{I}\otimes \mathcal{A})$ prepared by Alice before the use of the input-operations%
	\footnote{See Fig.\,\ref{fig:delayed} for a pictorial illustration for the case $k=1$. Let  $\phi_{E} \in L(\mathcal{H}_2\otimes \H_A) $ be the state created by the encoder of the delayed input-state protocol of Fig.\,\ref{fig:delayed}. In this case, $C_2:=\tr_{134}C$ is proportional the reduced state $\tr_\mathcal{A} \phi_{E}$.}. 
	From the commutation relation in Eq.\,\eqref{eq:schur_trans},
	 we see that  $C_\mathcal{I}$ respects
		\begin{equation} \label{eq:schurB}
	 \left[C_\mathcal{I}, B^{\otimes k}_\mathcal{I} \right]=0. 
\end{equation}
%
		 
		 The Schur-Weyl duality states that $k$ identical $d$-dimensional unitaries $B$ can be decomposed as (see Sec.\,\ref{sec:PBT})
\begin{equation}
 B^{\otimes k} \cong \bigoplus_{\mu \in \text{irrep}\left(U_d^{\otimes k}\right) } B(\mu) \otimes I_{m(\mu)},
\end{equation}
where $B(\mu)\in L \left( \mathbb{C}^{\text{dim}(\mu)}_\mu \right)$ is a unitary operator, and $I_{m(\mu)}$ is the identity on the multiplicity space $\mathbb{C}^{m(\mu)}$.
	Since the reduced state $\tr_\mathcal{A}(\phi_E)$ respects the relation $\left[ \tr_\mathcal{A} \left(\phi_E\right) ,B^{\otimes k} \right]=0$, Schur's lemma ensures that the reduced encoder state has the form of 
	\begin{equation} \label{eq:reduced_phi}
	 \tr_\mathcal{A} \left(\phi_E\right) \propto  \bigoplus_\mu I_\mu \otimes \rho_{m_\mu},
	\end{equation}
	where $I_\mu$ is the identity on the the linear space $\mathbb{C}^{\text{dim}(\mu)}_\mu $  and $\rho_{m_\mu}$ is some state on the multiplicity space of $\mu$. 
	Without loss of generality, we can assume that $\phi_E=\ketbra{\phi_E}{\phi_E}$ is a pure state with a reduced state that respects Eq.\,\eqref{eq:reduced_phi}. It follows then that $\ket{\phi_E}$ can be written as 
\begin{equation} 
		 \ket{\phi_E}:= \bigoplus_{\mu\in \text{irrep}(U^{\otimes k})} \sqrt{p_\mu} \ket{\phi^+(\mu)}\otimes \ket{\psi_{m_\mu}}, 
\end{equation}		
	where  
		\begin{equation}
			\ket{\phi^+(\mu)}:=  \frac{1}{\sqrt{\text{dim}(\mu)}}\sum_i \ket{i_\mu i_\mu} \in \mathbb{C}^{\text{dim}(\mu)}_\mu \otimes \mathbb{C}^{\text{dim}(\mu)}_\mu  
		\end{equation}
		 is the maximally entangled state on the linear space of the irreducible representation $\mu$, $\{p_\mu\}$ is a probability distribution, and $\ket{\psi_{m_\mu}}\in \mathbb{C}^{m(\mu)}\otimes \mathbb{C}^{m(\mu)}$ are some purifications of $\rho_{m_\mu}$.

		We now make an important observation. Although the state $\ket{\phi_E}$ is not the maximally entangled state, it respects
		\begin{equation}		
			U_d^{\otimes k} \otimes I \ket{\phi_E} = I \otimes  U_d^{T^{\otimes k}}  \ket{\phi_E}.
		\end{equation}
		This identity holds true because any tensor product of $k$ identical unitaries $U_d$ can be decomposed as 
		\begin{equation}
		U_d^{\otimes k}\cong \bigoplus_{\mu\in \text{irrep}(U^{\otimes k})}   U(\mu) \otimes I_{m_{\mu}}
	\end{equation}	for some unitaries $U(\mu)$ acting on the invariant representation space $\mathbb{C}^{\text{dim}(j)}_j$. Any delayed input-state protocol that can be used for unitary transposition can be used for unitary learning, since it is enough to perform the unitaries $U_d^{\otimes k}$ on the ``other'' half of the entangled state $\ket{\phi_E}$ on which the joint operation is not performed.

		We now show how to transform probabilistic unitary learning protocols to heralded unitary transposition protocols.
		In Ref.\,\cite{sedlak18}, the authors have shown that, without loss of generality, any probabilistic unitary learning protocol can be made parallel and, moreover, with the entangled state $\ket{\phi_E}\in L(\mathcal{I}\otimes \mathcal{A})$ which respects the property
		\begin{equation}
			U_d^{\otimes k} \otimes I \ket{\phi_E} = I \otimes  U_d^{T^{\otimes k}}  \ket{\phi_E}.
		\end{equation}
		Hence, if we perform the unitary operations $U_d^{\otimes k}$ into the half of the entangled state $\ket{\psi}$ on which the joint measurement is performed, the unitary recovered after the learning protocol will be $U_d^T$ instead of $U_d$.
\end{proof}

 We are now in position to prove that the protocol based on port-based gate teleportation presented in Sec.\,\ref{sec:PBT} is optimal.
	
\begin{theorem}[Optimal parallel unitary transposition] \label{theo:trans_parallel}
		The modified port-based gate teleportation protocol can be used to transform $k$ uses of an arbitrary $d$-dimensional unitary operation $\map{U_d}$ into its transpose $\map{U_d^T}$ with success probability $p=1-\frac{d^2-1}{k+d^2-1}$ in a parallel delayed input-state protocol.
Moreover, this protocol attains the optimal success probability among all parallel protocols with probability $p$ that does not depend on $\map{U_d}$.
\end{theorem}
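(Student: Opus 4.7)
The proof splits naturally into an achievability part and a converse. For achievability, I would simply assemble the protocol already described in Section \ref{sec:PBT}: prepare the optimal port-based teleportation resource state $\ket{\phi_{\text{PBT}}}$ from Eq.\,\eqref{eq:phi_pbt}, apply $\map{U_d}$ on each of the $k$ ``Alice'' halves, and on input-state arrival perform the optimal joint Alice-side measurement of Ref.\,\cite{studzinski16}. Because $\ket{\phi_{\text{PBT}}}$ intertwines $U_d^{\otimes k}$ on one side with $U_d^{T\,\otimes k}$ on the other (via the Schur-Weyl identity in Eq.\,\eqref{eq:SW}), the successful branch teleports $\map{U_d^T}$ onto the selected port rather than $\map{U_d}$. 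The success probability is precisely the probabilistic PBT success probability $p=1-\tfrac{d^2-1}{k+d^2-1}$ quoted from Ref.\,\cite{studzinski16}, and by construction the protocol is parallel and delayed input-state.

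For the converse, the plan is to chain the three structural lemmas already proved in this section. Given an arbitrary parallel protocol that transposes $k$ copies of $\map{U_d}$ with constant probability $p$, Lemma \ref{theo:parallel_trans=delay} converts it, without loss of probability, into a delayed input-state parallel protocol. Lemma \ref{theo:trans=learning} then converts this delayed input-state parallel transposition protocol, again with the same probability $p$, into a probabilistic exact unitary learning protocol making $k$ uses of $\map{U_d}$. Thus $p$ cannot exceed the optimal success probability for probabilistic exact unitary learning, which by Ref.\,\cite{sedlak18} equals $1-\tfrac{d^2-1}{k+d^2-1}$. Combining with the achievability bound gives equality.

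I expect the actual writing to be short because all the technical content is packaged into the two reductions and into the known PBT/unitary-learning optimum. The only step that requires any genuine work is verifying that the ``modified PBT'' protocol described in Section \ref{sec:PBT} indeed realises the PBT optimum in the transposition setting, i.e., that moving the $k$ copies of $\map{U_d}$ from Bob's side (store-and-retrieve) to Alice's side (transposition) does not alter the attainable probability; this follows at once from the intertwining relation $U_d^{\otimes k}\otimes I\ket{\phi_{\text{PBT}}}=I\otimes U_d^{T\,\otimes k}\ket{\phi_{\text{PBT}}}$ together with the fact that the joint Alice-side measurement of Ref.\,\cite{studzinski16} is defined independently of the $U_d$ applied on the resource state.

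The main conceptual obstacle is really in the converse direction, and it has already been discharged by Lemmas \ref{theo:parallel_trans=delay} and \ref{theo:trans=learning}: showing that the Haar twirl over $A\otimes B^{*\otimes k}\otimes A^{*\otimes k}\otimes B$ produces a superinstrument whose Choi operator satisfies the delayed input-state condition, and then that Schur-Weyl forces the encoder's reduced state to a block-diagonal form for which unitary learning and unitary transposition are one and the same problem. Given those in hand, the theorem is an immediate corollary of Ref.\,\cite{studzinski16} and Ref.\,\cite{sedlak18}.
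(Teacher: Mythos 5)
Your proposal is correct and follows essentially the same route as the paper: achievability via the intertwining relation $U_d^{\otimes k}\otimes I\ket{\phi_{\text{PBT}}}=I\otimes U_d^{T\,\otimes k}\ket{\phi_{\text{PBT}}}$ applied to the optimal probabilistic PBT scheme of Ref.\,\cite{studzinski16}, and the converse by reducing an arbitrary parallel transposition protocol to a delayed input-state one (Lemma~\ref{theo:parallel_trans=delay}), then to a unitary storage-and-retrieval protocol (Lemma~\ref{theo:trans=learning}), and finally invoking the optimality bound of Ref.\,\cite{sedlak18}. Your version is if anything slightly more careful than the paper's, which cites Lemma~\ref{theo:trans=learning} directly without explicitly noting the intermediate reduction to delayed input-state form.
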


\begin{proof}
		As shown above, the identity $U^{\otimes k} \otimes I  \ket{\phi_\text{PBT}} = I \otimes  U^{T^{\otimes k}}  \ket{\phi_\text{PBT}}$ ensures that port-based gate teleportation can be used to construct a delayed input-state parallel protocol that obtains $U_d^T$ with $k$ uses of $U_d$ with probability  $p=1-\frac{d^2-1}{k+d^2-1}$. 

		Lemma\,\ref{theo:trans=learning} shows that any protocol transforming $k$ uses of $\map{U_d}$ into its transpose $\map{U_d^T}$ in a parallel protocol with probability $p$ can be used to succesfully ``learn'' the input-operation $\map{U_d}$ with probability $p$ and $k$ uses. Reference\,\cite{sedlak18} shows that the optimal protocol for unitary learning a unitary $U_d$ with $k$ uses cannot have constant probability greater than $p=1-\frac{d^2-1}{k+d^2-1}$, which bounds our maximal probability of success and finishes the proof.
		
\end{proof}


	\subsection{Adaptive unitary transposition protocols} \label{sec:trans_sequential}

		In this subsection we present an adaptive circuit that transforms $k$ uses of an arbitrary $d$-dimensional unitary operation $\map{U_d}$ into a single use of its transpose $\map{U_d^T}$ with heralded probability ${p=1-\left(1-\frac{1}{d^2}\right)^{\ceil{\frac{k}{d}}}}$ (see Fig.\,\ref{fig:adaptive_trans}).

  \begin{figure} 
	  \begin{center}
	\includegraphics[scale=.24]{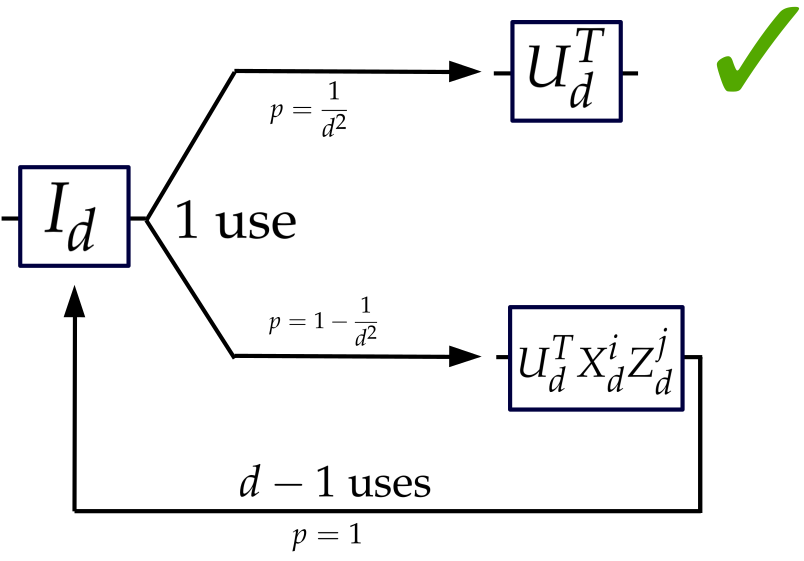} 
	\end{center}
			\caption{A flowchart illustrating the adaptive unitary transpose protocol.} \label{fig:adaptive_trans}
	\end{figure}

\begin{enumerate}
\item We start by making a single use of the input-operation $\map{U_d}$ to implement the probabilistic heralded transposition protocol based on gate teleportation described in Sec.\,\ref{sec:trans_parallel}. When the generalised Bell measurement returns the outcomes $i$ and $j$, the operator $U_d$ is transformed into $V_1=U_d^T X^i_d Z^j_d$, where $X^i_d $ and $Z^j_d $ are the clock and shift operators, respectively (see Eq.\,\eqref{eq:shift}).
\item If both outcomes $i$ and $j$ correspond to the identity operator, \ie, $i=j=0$, we have $V_1=U_d^T$ and we stop the protocol with success. If some other outcome is obtained, we make $d-1$ uses of $U_d$ to implement the unitary complex conjugate protocol  \cite{miyazaki17}  to obtain $U_d^*$. We then apply ${X_d^i}^{-1}{Z_d^j}^{-1} U_d^*$ into $V_1$ to ``cancel'' the transformation of step 1 to obtain identity operator
$\left[ {X_d^i}^{-1}{Z_d^j}^{-1} U_d^*   U_d^T Z_d^j X_d^i=I_d \right]$.
\item Go to step 1. 
\end{enumerate}

We see that step 1 fails returning $\map{U^T_d}$ with probability $\left(1-\frac{1}{d^2}\right)$ and we need in total $d$ uses of the input-operation $\map{U_d}$ to complete steps 1 and 2. These steps may be repeated up to $\ceil{\frac{k}{d}}$ times, hence they lead to a success probability of ${p=1-\left(1-\frac{1}{d^2}\right)^{\ceil{\frac{k}{d}}}}$.

\subsection{Optimal protocols via SDP formulation and indefinite causal order advantage} \label{sec:SDP_trans}

	We apply the SDP methods obtained in Sec.\,\ref{sec:SDP} to the case of unitary transposition and present the optimal success probability in Table\,\ref{table:SDP2}. By checking Table\,\ref{table:SDP2} one observes that the adaptive circuit we have presented in Sec.\,\ref{sec:trans_sequential} is not optimal. {One possible intuitive understanding is that the adaptive protocol we have presented in Sec.\,\ref{sec:trans_sequential} ``wastes'' $d-1$ uses of the input-operation $\map{U_d}$ to recover the input-state.} We also notice that indefinite causal order protocols provide a strictly large success probability when compared to causally ordered ones. It is interesting to observe that although indefinite causal order protocols have been reported useful in tasks such as non-signalling channel discrimination \cite{chiribella11}, quantum computation \cite{araujo14}, and quantum channel capacity activation \cite{ebler18,salek18}, this is the first time that indefinite causal order protocols outperform causally ordered ones when multiple uses of the same unitary input-operation are made.  In those previous examples cited, the advantage of indefinite causal order was obtained by exploiting the \textit{quantum switch} \cite{chiribella09}, a process which is not useful in our task of unitary channel transformation, since the quantum switch would transform $k$ uses of the any unitary operation $\map{U_d}$ into  simple $k$ concatenations of $\map{U_d}$, or equivalently, a single use of $\map{U_d^k}$. Our results for indefinite causal order then reveals the existence of a different class of indefinite causal order protocols, similarly to the one reported for unitary inverse in Ref.\,\cite{PRL}.

	\begin{table}[t!]
\begin{tabular}{|c|c|c|c|}
\hline 
$d=2$ & Parallel & Adaptive & Indefinite causal order\\
\hline 
$k=1$ & \blue{$\frac{1}{4}=0.25$}  &\blue{$\frac{1}{4}=0.25$}  & \blue{$\frac{1}{4}=0.25$}   \\ 
\hline 
$k=2$ & \blue{$\frac{2}{5}=0.4$}  & $ 0.4286\approx \frac{3}{7}$ & $0.4444\approx \frac{4}{9}$  \\ 
\hline 
$k=3$ & \blue{$\frac{1}{2}=0.5$}  & $0.7500\approx\frac{3}{4}$ & $0.9416$  \\ 
\hline 
%
%
\hline 
$d=3$ & Parallel & Adaptive & Indefinite causal order\\
\hline 
$k=1$ & \blue{$\frac{1}{9}\approx0.1111$}  & \blue{$\frac{1}{9}\approx0.1111$} & \blue{$\frac{1}{9}\approx0.1111$}   \\ 
\hline 
$k=2$ & \blue{$\frac{2}{10}=0.2$}  & $0.2222\approx \frac{2}{9}$ & $0.2500\approx \frac{2}{8}$ \\ 
\hline 
\end{tabular} 
\caption{Table with optimal success probability we have obtained for heralded protocols transforming $k$ uses of $\map{U_d}$ into a single use of its transpose $\map{U^{T}_d}$. The values in blue were proved analytically and the values in black were obtained via numerial SDP optimisation.}\label{table:SDP2}
\end{table}


\section{Universal unitary inversion protocols} \label{sec:inverse}

	We now address the problem of transforming $k$ uses of a general $d$-dimensional unitary operation $\map{U_d}$ into a single use of its inverse $\map{U_d^{-1}}$ with probabilistic heralded quantum circuits. We have presented our adaptive circuit in Ref.\,\cite{PRL} and here we present a parallel implementation and provide more details on the adaptive circuit.
	
	Before presenting our protocols we prove that, similarly to the complex conjugation case, any protocol performing exact universal unitary inversion with $k<d-1$ uses of the unitary input-operation $\map{U_d}$ necessarily has null success probability. Also, this no-go result also holds even when protocols with indefinite causal order are considered.
		\begin{theorem}[Unitary inversion: no-go]\label{theo:inv_nogo}
		Any universal probabilistic heralded quantum protocol (including protocols without definite causal order) transforming $k<d-1$ uses of a $d$-dimension unitary operation $\map{U_d}$ into a single use of its 				inverse $\map{U_d^{-1}}$ with success probability $p$ that does not depend on $U_d$ necessarily has $p=0$ \ie, null success probability.
	\end{theorem}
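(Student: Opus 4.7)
The plan is to reduce this statement to Theorem\,\ref{theo:conj_nogo} by restricting attention to unitaries that are diagonal in a fixed basis. For any $U_d = \sum_i e^{\im \theta_i}\ketbra{i}{i}$ diagonal in the computational basis we have $U_d^{-1}=U_d^\dagger = \sum_i e^{-\im \theta_i} \ketbra{i}{i}=U_d^*$, where the complex conjugate is taken in the same basis. Hence any universal protocol that implements $\map{U_d}\mapsto \map{U_d^{-1}}$ on all unitaries with constant probability $p$ automatically implements $\map{U_d}\mapsto \map{U_d^*}$ with the same probability $p$ when restricted to the diagonal subclass. Since unitary inversion is basis-independent, no prior basis choice obstructs this restriction.

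First I would invoke Lemma\,\ref{parallel_is_OK} to convert any general (possibly indefinite-causal-order) protocol with success probability $p>0$ into a parallel delayed-input-state superinstrument whose success element $\smap{S}$ satisfies $\smap{S}(\map{U_d^{\otimes k}}) = p'\,\map{U_d^{-1}}$ for all $U_d$, with $p' = p/d_{\mathrm{total}}>0$. Then, using Eq.\,\eqref{eq:realise_superis} together with Naimark dilation, I would write $\smap{S}(\map{U_d^{\otimes k}}) = \map{D_S}\circ [\map{U_d^{\otimes k}}\otimes \map{I_A}]\circ \map{E}$ and reduce to a single-Kraus decoder $D_a=\bra{a}D$ by the same purity-preservation argument used in the proof of Theorem\,\ref{theo:conj_nogo}: pure inputs must yield pure outputs, forcing the conditional post-selected maps to be proportional.

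From this point the combinatorial calculation of Theorem\,\ref{theo:conj_nogo} transplants verbatim. I would assume $k=d-2$ without loss of generality, substitute diagonal $U_d$, parametrise $E$ and $D_a$ in the computational product basis through coefficients $\alpha_{\vec{i},i,a}$ and $\beta_{\vec{i},i,a}$, and write out the constraint $\bra{i'} D_a [U_d^{\otimes d-2}\otimes I] E \ket{i'} = e^{\im\phi_{U_d}}\sqrt{p'}\bra{i'}U_d^{-1}\ket{i'}$. Because $U_d^{-1}=U_d^*$ on diagonal unitaries, this constraint is literally Eq.\,\eqref{key}. The linear independence of the exponentials $e^{\im \sum_l \gamma_l \theta_l}$ over distinct weak compositions $\vec{\gamma}$ of $d-1$ into $d$ non-negative parts, together with the existence of an index $l'$ with $\gamma_{l'}=0$, forces $\alpha'_{\vec\gamma}=0$ and hence $p'=0$, contradicting $p'>0$.

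I do not anticipate any real obstacle: the whole proof is a one-line reduction (inversion coincides with complex conjugation on the diagonal subgroup) composed with the machinery already established in Lemma\,\ref{parallel_is_OK} and Theorem\,\ref{theo:conj_nogo}. The only sanity check is that the conversion factor $1/d_{\mathrm{total}}$ supplied by Lemma\,\ref{parallel_is_OK} is strictly positive, which preserves the zero-versus-nonzero dichotomy between the original protocol and its parallelised delayed-input-state counterpart, so the contradiction $p'=0$ propagates back to $p=0$ for the original supermap.
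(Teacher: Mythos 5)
Your proposal is correct, but it takes a genuinely different route from the paper. The paper's proof is a two-line reduction at the level of protocols: assuming a protocol that produces a single use of $\map{U_d^{-1}}$ with probability $p>0$ from $k<d-1$ uses of $\map{U_d}$, it post-composes that single output use with the gate-teleportation-based single-use transposition of Sec.\,\ref{gate_teleportation}, obtaining $\map{(U_d^{-1})^T}=\map{U_d^*}$ with probability $p/d^2>0$ and the same $k$ uses, which contradicts Theorem\,\ref{theo:conj_nogo} invoked as a black box (applied to \emph{all} unitaries). You instead observe that the internal machinery of Theorem\,\ref{theo:conj_nogo} --- parallelisation via Lemma\,\ref{parallel_is_OK}, the single-Kraus reduction, and the weak-composition/linear-independence argument of Eq.\,\eqref{key} --- only ever exercises unitaries diagonal in the computational basis, on which $U_d^{-1}=U_d^*$, so the calculation transplants verbatim with $\map{U_d^{-1}}$ in place of $\map{U_d^*}$. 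You are right to re-run the proof rather than cite the theorem's statement: an inversion protocol only conjugates the diagonal subgroup, so it is not a universal conjugation protocol and the theorem does not apply directly. What each approach buys: the paper's reduction is shorter and modular, needing only the statement of the conjugation no-go plus the existence of a nonzero-probability single-use transposition, at the price of verifying that the composition of a (possibly indefinite-causal-order) superinstrument with a subsequent transposition step is again a protocol covered by the no-go; your version avoids that composition step and the $1/d^2$ factor entirely, at the price of depending on the internals of the conjugation proof. Both correctly preserve the zero-versus-nonzero dichotomy, since the only probability conversions involved ($1/d_{\mathrm{total}}$ in your case, $1/d^2$ in the paper's) are strictly positive.
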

	
	\begin{proof}
		Assume that there exists a quantum protocol transforming $k$ uses of a general $d$-dimensional unitary operation $\map{U_d}$ into its inverse $\map{U_d^{-1}}$ with a non-zero success probability $p$. We can then exploit the single-use unitary transposition protocol presented in Sec.\,\ref{gate_teleportation} to obtain $\map{U_d^*}$ with success probability $p/d^2>0$, which contradicts Lemma\,\ref{theo:conj_nogo}.
	\end{proof}

\subsection{Parallel unitary inversion protocols}

We start by showing that, similarly to universal parallel transposition, any universal parallel unitary inversion protocol can be made in a delayed input-state way.

\begin{lemma} \label{theo:learning}
Any parallel probabilistic heralded parallel protocol $k$ uses of a general unitary $\map{U_d}$ into a single use of its inverse $\map{U_d^{-1}}$ with constant probability $p$ can be conversed to a delayed input-state parallel protocol with the same probability $p$.
\end{lemma}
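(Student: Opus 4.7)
The plan is to mirror the twirling argument of Lemma\,\ref{theo:parallel_trans=delay} almost verbatim, replacing the transposition symmetry with the analogous inversion symmetry. The key observation is that $(A U_d B)^{-1} = B^{\dagger}U_d^{-1}A^{\dagger}$, so substituting $U_d \mapsto A U_d B$ on both sides of the SDP success relation introduces two independent free unitaries $A, B \in \mathrm{SU}(d)$, with $A^{\dagger}$ acting on $\mathcal{I}_0$ and $B^{\dagger}$ on $\mathcal{O}_0$ in the output Choi $\C(\map{U_d^{-1}})$. This decoupling between the $\mathcal{I}_0$ unitary and the $\mathcal{O}_0$ unitary is exactly what is needed to later invoke Schur's lemma on $\mathcal{I}_0$ alone.

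The execution proceeds in three steps. First, I apply Eq.\,\eqref{eq:idenitity_choi} to express both $\C(\map{(AU_dB)^{\otimes k}})^T$ and $\C(\map{(AU_dB)^{-1}})$ as conjugations of $\C(\map{U_d^{\otimes k}})^T$ and $\C(\map{U_d^{-1}})$, and then use cyclicity of the partial trace to absorb all $A$'s and $B$'s onto $S$. This yields a symmetry of the form
\begin{equation*}
\begin{split}
    S'' = \bigl[ &A^T_{\mathcal{I}_0} \otimes B^{\otimes k}_{\mathcal{I}} \otimes A^{T \otimes k}_{\O} \otimes B_{\O_0}\bigr]\, S \\
    &\times \bigl[A^*_{\mathcal{I}_0} \otimes B^{\dagger \otimes k}_{\mathcal{I}} \otimes A^{*\otimes k}_{\O} \otimes B^{\dagger}_{\O_0}\bigr],
\end{split}
\end{equation*}
which also solves the success relation for every $A, B \in \mathrm{SU}(d)$. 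Second, I define a Haar-twirled element $S_\tau$ by averaging $S$ under the above conjugation over $\mathrm{d}A\,\mathrm{d}B$, and likewise $F_\tau$; by linearity $S_\tau$ still attains probability $p$ and $C_\tau := S_\tau + F_\tau$ remains a valid parallel superchannel. Third, I compute $\tr_{\O\O_0}(C_\tau)$: the unitaries $A^{T\otimes k}$ on $\O$ and $B$ on $\O_0$ disappear under the partial trace by cyclicity and unitarity, leaving only the $A$-action $A^T (\cdot) A^*$ on $\mathcal{I}_0$ as a free Haar variable coupled to $B^{\otimes k}$ on $\mathcal{I}$. The $A$-twirl on $\mathcal{I}_0$ then forces, by Schur's lemma, $\tr_{\O\O_0}(C_\tau) \propto I_{\mathcal{I}_0} \otimes \tr_{\mathcal{I}_0\O\O_0}(C_\tau)$, which combined with the parallel superchannel constraints of Eq.\,\eqref{eq:non_adaptive} gives exactly the delayed input-state characterisation of Eq.\,\eqref{eq:superchannel_PAR_delay}.

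The main obstacle is Step 1: carefully tracking the transposes, complex conjugates, and daggers introduced by Eq.\,\eqref{eq:idenitity_choi}, and justifying each cyclic move through the partial trace, which does not commute with operators acting on untraced subsystems. In particular, operators on $\mathcal{I}_0$ and $\O_0$ cannot be absorbed via cyclicity through $\tr_{\mathcal{I}\O}$ and must first be moved to both sides of the relation by acting with the output-side conjugations, which then recombine with the $\mathcal{I}\O$ conjugations to give the full symmetry above. Once this book-keeping is settled, the twirling and Schur's lemma argument are structurally identical to the transposition case, and no genuinely new ideas beyond those of Lemma\,\ref{theo:parallel_trans=delay} are required.
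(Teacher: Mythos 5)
Your proposal is correct and follows essentially the same route as the paper: the paper's own proof simply states that the argument of Lemma\,\ref{theo:parallel_trans=delay} carries over with the commutation symmetry changed from $A_{\mathcal{I}_0}\otimes B^{*\otimes k}_{\mathcal{I}}\otimes A^{*\otimes k}_{\O}\otimes B_{\O_0}$ to $A_{\mathcal{I}_0}\otimes B^{\otimes k}_{\mathcal{I}}\otimes A^{\otimes k}_{\O}\otimes B_{\O_0}$, and your derived symmetry (after the relabelling $A^T\mapsto A$) is exactly this group. The only difference is that you spell out the Choi book-keeping and the final Schur's-lemma step that the paper leaves implicit by reference to the transposition case.
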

\begin{proof}
The proof follows the same steps as the one in Theorem \ref{theo:parallel_trans=delay}. The only difference is that for unitary transposition, the superinstrument element $S$ can be chosen as an operator that commutes with unitaries of the form 
			$ A_{\mathcal{I}_0}  \otimes B^{*\otimes k}_\mathcal{I} \otimes A^{*\otimes k}_\O \otimes B_{\O_0}$ and for unitary inversion $S$ can be chosen as an operator which commutes with all unitaries of the form of 
			 $A_{\mathcal{I}_0} \otimes B^{\otimes k}_\mathcal{I} \otimes A^{\otimes k}_\O\otimes B_{\O_0} $.
\end{proof}

	We are now in conditions to present a universal circuit for parallel unitary inversion and also to obtain an upper bound on the maximal success probability. Our protocol makes use of the unitary complex conjugation and unitary transposition and it is proven to be optimal for qubits.

\begin{theorem} [Universal unitary inverse]
	There exists a parallel delayed input-state probabilistic quantum circuit that transforms $k$ uses of an arbitrary $d$-dimensional unitary operation $\map{U_d}$ into a single use of its inverse $\map{U_d^{-1}}$ with success probability $p_\text{S} = 1- {\frac{d^2-1}{k'+d^2-1}}$ where $k':=\floor{\frac{k}{d-1}}$ is the greatest integer that is less than or equal to $\frac{k}{d-1}$. 
	
The maximal success probability transforming $k$ uses of an arbitrary $d$-dimensional unitary operation $\map{U_d}$ into a single use of its inverse $\map{U_d^{-1}}$ in a parallel quantum circuit is upper bounded by ${p_\text{max}\leq 1-\frac{d^2-1}{k(d-1)+d^2-1}}$.
\end{theorem}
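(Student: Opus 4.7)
The construction rests on the identity $(U_d^*)^T = U_d^{-1}$, valid for every unitary $U_d$ since $U_d^{-1}=U_d^\dagger=(U_d^*)^T$. I would compose two primitives already at hand: the deterministic parallel complex-conjugation circuit of Ref.\,\cite{miyazaki17}, which consumes $d-1$ parallel uses of $\map{U_d}$ and outputs $\map{U_d^*}$ with unit probability, and the optimal parallel transposition circuit of Theorem\,\ref{theo:trans_parallel}, which turns $k'$ parallel uses of any $d$-dimensional unitary into its transpose with heralded probability $1-\tfrac{d^2-1}{k'+d^2-1}$. The plan is to partition the $k$ available uses of $\map{U_d}$ into $k'=\floor{k/(d-1)}$ disjoint blocks of size $d-1$ (discarding at most $d-2$ uses), run the conjugation circuit in parallel inside each block to obtain $k'$ deterministic copies of $\map{U_d^*}$, and then feed these $k'$ copies into the transposition protocol of Sec.\,\ref{sec:trans_parallel} to obtain $(\map{U_d^*})^T=\map{U_d^{-1}}$ with the advertised probability. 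Since both primitives are parallel and, by Lemma\,\ref{theo:parallel_trans=delay}, the transposition primitive can be assumed to be in the delayed input-state form, the concatenation is a parallel delayed input-state probabilistic circuit of the required kind.

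\textbf{Upper bound.} For the converse I would exploit the twin identity $(U_d^*)^{-1}=U_d^T$. Suppose $\smap{S}$ is a parallel probabilistic heralded protocol that inverts $\map{U_d}$ using $k$ uses with constant probability $p$. Starting from $k(d-1)$ parallel uses of $\map{U_d}$, partition them into $k$ blocks of size $d-1$, apply the deterministic conjugation circuit of Ref.\,\cite{miyazaki17} independently on each block to obtain $k$ deterministic copies of $\map{U_d^*}$, and feed these $k$ copies into $\smap{S}$. With constant probability $p$ the output is $(\map{U_d^*})^{-1}=\map{U_d^T}$, so the whole procedure is a parallel probabilistic heralded transposition protocol using $k(d-1)$ parallel uses of $\map{U_d}$. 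The optimality part of Theorem\,\ref{theo:trans_parallel} then forces $p\leq 1-\tfrac{d^2-1}{k(d-1)+d^2-1}$, which is exactly the claimed bound.

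\textbf{Anticipated obstacle.} The algebraic identities $(U_d^*)^T=U_d^{-1}$ and $(U_d^*)^{-1}=U_d^T$ are elementary, and both directions of the argument are essentially a reduction to theorems already established in the paper, so I do not foresee a fundamental technical difficulty. The one point that deserves careful verification is that the cascade of a deterministic parallel protocol on $d-1$ uses with a probabilistic parallel protocol on its $k'$ (respectively $k$) outputs is genuinely parallel in terms of the original input-operation $\map{U_d}$, rather than merely adaptive. This should follow from the Choi-level characterisation of parallel superchannels in Eq.\,\eqref{eq:non_adaptive}, with the auxiliary system large enough to absorb the post-conjugation systems, but it is worth spelling out explicitly so that the resulting circuit is unambiguously recognised as parallel (and, in the construction, as delayed input-state).
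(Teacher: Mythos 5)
Your proposal is correct and follows essentially the same route as the paper's proof: the lower bound by composing the deterministic parallel conjugation circuit of Ref.\,\cite{miyazaki17} with the optimal parallel transposition protocol via $(U_d^*)^T=U_d^{-1}$, and the upper bound by the reverse reduction $(U_d^*)^{-1}=U_d^T$ from a hypothetical inversion protocol on $k$ uses to a transposition protocol on $k(d-1)$ uses, bounded by Theorem\,\ref{theo:trans_parallel}. The parallelisability concern you flag is legitimate but unproblematic for exactly the reason you give, and the paper does not dwell on it either.
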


\begin{proof}

		We construct our protocol by concatenating the protocol for unitary complex conjugation of Ref.\,\cite{miyazaki17} with the unitary transposition one presented in Sec.\,\ref{sec:PBT}. First we divide the $k$ uses of the input-operation $\map{U_d}$ into $k'=\floor*{\frac{k}{d-1}}$ groups containing $d-1$ uses of $\map{U_d}$ and discard possible extra uses. We then exploit the unitary conjugation protocol to obtain $k'$ uses of $U_d^*$. After, we implement the unitary transposition protocol of Sec.\,\ref{sec:PBT} on $k'$ uses of $\map{U_d^*}$ to obtain a single use of $\map{U^{-1}_d}$ with probability of success given by
	$p=1-\frac{d^2-1}{k'+d^2-1}$.
	
	Next, we prove the upper bound. Let $p_\text{inv}(d,k)$ be the success probability of transforming $k$ uses of an arbitrary unitary input-operation $\map{U_d}$ into a single use of its inverse $\map{U_d^{-1}}$  with a parallel circuit. Suppose one has access to $l=k(d-1)$ uses of an input-operation $\map{U_d}$. One possible protocol to transform these $l$ uses of $\map{U_d}$ into its transpose with a parallel circuit is the following, first we perform the deterministic parallel complex conjugation protocol on $l$ uses to obtain $k$ uses of  $\map{U_d^{*}}$. We then perform the parallel unitary inversion on $k$ uses of  $\map{U_d^{*}}$ to obtain  $\map{U_d^{*^{-1}}}=\map{U_d^T}$ with probability $p_\text{inv}(d,k)$. This parallel unitary transposition protocol has then success probability of  $q_\text{T}(d,l)=p_\text{inv}(d,k)$. Theorem\,\ref{theo:trans_parallel} states that any parallel circuit that transforms $l$ uses of an arbitrary unitary into its transpose respects $q_\text{T}(d,l)\leq 1-\frac{d^2-1}{l+d^2-1}$, which implies 
	\begin{equation}
	 p_\text{inv}(d,k)\leq 1-\frac{d^2-1}{l+d^2-1}=1-\frac{d^2-1}{k(d-1)+d^2-1},
	 \end{equation}	
	  what completes the proof.

\end{proof}

\subsection{Adaptive unitary inversion circuit} \label{sec:inv_sequential}

		For completeness we now summarise the protocol for adaptive unitary inversion presented in Ref.\,\cite{PRL}. 
		Our protocol to obtain $\map{U_d^{-1}}$ follows similar steps of the protocol to implement $\map{U^{-1}_d}$ presented in the previous section goes as follow (see Fig.\,\ref{fig:inv_sequential}).
  \begin{figure}  
	  \begin{center}
	\includegraphics[scale=.24]{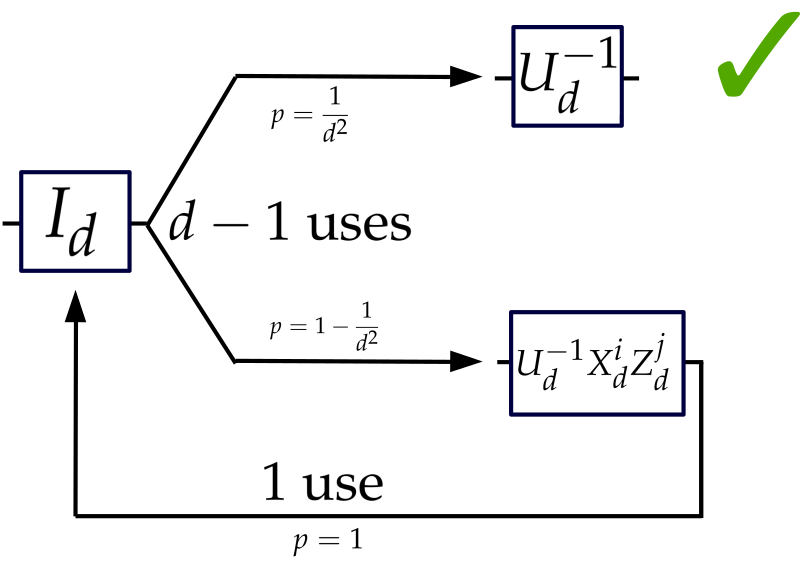} 
	\end{center}
			\caption{Flowchart illustrating the adaptive unitary inverse protocol.} \label{fig:inv_sequential}
	\end{figure}
\begin{enumerate}
\item We start by making a $d-1$ uses of the unitary $U_d$ to implement the probabilistic heralded transposition protocol for unitary inverse used in Theorem\,4 in the main text. When the generalised Bell measurements return the outcomes $i$ and $j$, the operation $U_d$ is transformed into $V_1=U_d^{-1} X^i_d Z^j_d$, where $X^i_d $ and $Z^j_d $ are the clock and shift operators, respectively (see Eq.\,\eqref{eq:shift}).
\item If both outcomes $i$ and $j$ correspond to the identity operator, \ie, $i=j=0$, we have $V_1=U_d^{-1}$ and we stop the protocol with success. If some other outcome is obtained, we make a single use of $\map{U_d}$ to apply ${X_d^i}^{-1}{Z_d^j}^{-1} U_d$ into $V_1$ and invert the transformation of step 1 to obtain identity operator
$\left[ {X_d^i}^{-1}{Z_d^j}^{-1} U_d   U_d^{-1}  Z_d^j X_d^i=I_d \right]$.
\item Go to step 1. 
\end{enumerate}

We see that step 1 requires $d-1$ uses and returns $\map{U^{-1}_d}$ with probability $\left(1-\frac{1}{d^2}\right)$. We need in total $d$ uses of $\map{U_d}$ to complete steps 1 and 2. Iteration of this protocol leads to a success probability of $p=1-\left(1-\frac{1}{d^2}\right)^{\floor{\frac{k+1}{d}}}$.

\subsection{Optimal protocols via SDP formulation and indefinite causal order advantage} \label{sec:SDP_inv}

We now apply the SDP methods of Sec.\,\ref{sec:SDP} to the case of unitary inversion and reproduce Table 1 of Ref.\,\cite{PRL} in in Table\,\ref{table:SDP_inv}. For qubits ($d=2$) {we note that, with the Pauli qubit unitary operator $Y$, $YU_2Y=U_2^*$ for every $U_2$ with determinant one. Hence, any protocol for transforming a single use of a qubit unitary operation into a single use of its transposition can be converted into a qubit unitary inversion protocol, \textit{vice versa}. Hence the results and conclusions for qubits are equivalent to the ones presented in Sec.\,\ref{sec:SDP_trans}.}

	For qutrits ($d=3$), Theorem \ref{theo:inv_nogo} ensures that circuits with a single use necessarily have success probability equal zero. For the case $d=3$ and $k=2$, parallel, adaptive, and indefinite causal order protocols have attained the same success probability, suggesting that our parallel unitary inversion protocol may be optimal when $d=k-1$.

\begin{table} 
\begin{tabular}{|c|c|c|c|}
\hline 
$d=2$ & Parallel & {Adaptive} & Indefinite causal order\\
\hline 
$k=1$ & \blue{$\frac{1}{4}=0.25$}  &\blue{$\frac{1}{4}=0.25$}  & \blue{$\frac{1}{4}=0.25$}   \\ 
\hline 
$k=2$ & \blue{$\frac{2}{5}=0.4$}  & {$0.4286\approx \frac{3}{7}$} & $0.4444\approx\frac{4}{9}$  \\ 
\hline 
$k=3$ & \blue{$\frac{1}{2}=0.5$}  & $0.7500\approx\frac{3}{4}$ & 0.9416  \\ 
\hline 
%
%
\hline 
$d=3$  & Parallel & {Adaptive} & Indefinite causal order\\
\hline 
$k=1$ & \blue{$0$}  & \phantom{aaaa} \blue{$0$} \phantom{aaaa}  & \blue{$0$}  \\ 
\hline 
$k=2$ & {$0.1111 \approx \frac{1}{9}$}  & $0.1111 \approx \frac{1}{9}$ & $0.1111 \approx \frac{1}{9}$  \\ 
\hline 
\end{tabular} 
\caption{Maximum success probabilities for {universally inverting} $k$ uses of $U_d$, by parallel quantum circuit, adaptive quantum circuit, and protocols with indefinite causal orders. Values in blue are analytical and in black via numerical SDP optimisation. This table is extracted from Table\,1 of Ref.\,\cite{PRL}.}\label{table:SDP_inv}
\end{table}

\section{Conclusions} \label{sec:conclusions}

		We have addressed the problem of designing probabilistic heralded universal quantum protocols that transform $k$ uses of an arbitrary (possibly unknown) $d$-dimensional unitary quantum operation $\map{U_d}$ to exactly implement a single use of some other operation given by $f(\map{U_d})$. For the cases where $f$ is a linear supermap, we have provided a SDP algorithm that  can be used to analyse parallel, adaptive, and indefinite causal order protocols. {For the parallel and adaptive cases, our algorithm finds a quantum circuit that universally implements the desired transformation with the optimal probability of success for any $k$ and $d$. For the indefinite causal order case, the algorithm finds a quantum process that obtains the desired transformation with the optimal probability of success for any $k$ and $d$.}
		
		For the particular case of unitary complex conjugation, \ie,  $f(\map{U_d})=\map{U_d^*}$ we have proved that when $k<d-1$ the success probability is necessarily zero, even when indefinite causal order protocols are considered. Since a deterministic parallel quantum circuit to transform $k=d-1$ uses of a general unitary operation $\map{U_d}$ { into a single use of its complex conjugation was presented in Ref.\,\cite{miyazaki17}, we can argue that the theoretical possibility of implementing universal exact unitary complex conjugation is completely solved.}
		
		For the particular case of unitary transposition, \ie $f(\map{U_d})=\map{U_d^T}$, we have shown that the optimal success probability with parallel circuits is exactly  ${p=1-\frac{d^2-1}{k+d^2-1}}$. When adaptive circuits are considered, we have presented an explicit protocol that has success probability ${p=1-\left(1-\frac{1}{d^2}\right)^{\ceil{\frac{k}{d}}}}$, which has an exponential improvement over any parallel protocol. We have also shown that indefinite causal order protocols outperforms causally ordered ones by tackling the cases $d=2,k\leq 3$ and $d=3,k=2$ numerically.
		
For the particular case of unitary inversion, \ie $f(\map{U_d})=\map{U_d^{-1}}$, we have proved that when $k<d-1$ the success probability is necessarily zero, even when indefinite causal order protocols are considered. When $k\geq d-1$ we have presented parallel and adaptive circuits to succeed in this task and proved that the success probability of our adaptive protocol presented in Ref.\,\cite{PRL} has probability of success given by ${p=1-\left(1-\frac{1}{d^2}\right)^{\floor{\frac{k+1}{d}}}}$ and we prove it to be exponentially higher than any success probability obtained by parallel circuits.

\textit{Acknowledgements: --} %
 We are indebted to A. Abbott, M. Araújo, J. Bavaresco, D. Gross, and P. Guerin for valuable discussions.

{This work was supported by MEXT Quantum Leap Flagship Program (MEXT Q-LEAP) Grant Number JPMXS0118069605, Japan Society for the Promotion of Science (JSPS) by KAKENHI grant No. 15H01677, 16F16769, 16H01050, 17H01694, 18H04286, 18K13467 and Advanced Leading Graduate
Course for Photon Science (ALPS).}

\providecommand{\href}[2]{#2}\begingroup\raggedright\endgroup


\begin{thebibliography}{10}

\bibitem{krausbook}
K.~Kraus, A.~Bohm, J.~Dollard, and W.~Wootters, ``{States, effects, and
  operations: fundamental notions of quantum theory : lectures in mathematical
  physics at the University of Texas at Austin},''.

\bibitem{chuang}
M.~Nielsen and I.~Chuang, {\em Quantum Computation and Quantum Information}.
\newblock Cambridge Series on Information and the Natural Sciences. Cambridge
  University Press, 2000.

\bibitem{chiribella07}
G.~{Chiribella}, G.~M. {D'Ariano}, and P.~{Perinotti}, ``{Quantum Circuit
  Architecture},'' \href{http://dx.doi.org/10.1103/PhysRevLett.101.060401}{{\em
  Phys. Rev. Lett.} {\bfseries 101}, 060401 (2008)},
  \href{http://arxiv.org/abs/0712.1325}{{\ttfamily arXiv:0712.1325
  [quant-ph]}}.

\bibitem{chiribella08}
G.~{Chiribella}, G.~M. {D'Ariano}, and P.~{Perinotti}, ``{Transforming quantum
  operations: Quantum supermaps},''
  \href{http://dx.doi.org/10.1209/0295-5075/83/30004}{{\em EPL (Europhysics
  Letters)} {\bfseries 83}, 30004 (2008)},
  \href{http://arxiv.org/abs/0804.0180}{{\ttfamily arXiv:0804.0180
  [quant-ph]}}.

\bibitem{bisio08}
A.~{Bisio}, G.~{Chiribella}, G.~M. {D'Ariano}, S.~{Facchini}, and
  P.~{Perinotti}, ``{Optimal Quantum Tomography of States, Measurements, and
  Transformations},''
  \href{http://dx.doi.org/10.1103/PhysRevLett.102.010404}{{\em Phys. Rev.
  Lett.} {\bfseries 102}, 010404 (2009)},
  \href{http://arxiv.org/abs/0806.1172}{{\ttfamily arXiv:0806.1172
  [quant-ph]}}.

\bibitem{chiribella18}
G.~{Chiribella} and D.~{Ebler}, ``{Quantum speedup in the identification of
  cause-effect relations},''
  \href{http://dx.doi.org/10.1038/s41467-019-09383-8}{{\em Nature
  Communications} {\bfseries 10}, 1472 (2019)},
  \href{http://arxiv.org/abs/1806.06459}{{\ttfamily arXiv:1806.06459
  [quant-ph]}}.

\bibitem{chiribella08C}
G.~{Chiribella}, G.~M. {D'Ariano}, and P.~{Perinotti}, ``{Memory Effects in
  Quantum Channel Discrimination},''
  \href{http://dx.doi.org/10.1103/PhysRevLett.101.180501}{{\em Phys. Rev.
  Lett.} {\bfseries 101}, 180501 (2008)},
  \href{http://arxiv.org/abs/0803.3237}{{\ttfamily arXiv:0803.3237
  [quant-ph]}}.

\bibitem{bartlett08}
S.~D. {Bartlett}, T.~{Rudolph}, R.~W. {Spekkens}, and P.~S. {Turner},
  ``{Quantum communication using a bounded-size quantum reference frame},''
  \href{http://dx.doi.org/10.1088/1367-2630/11/6/063013}{{\em New Journal of
  Physics} {\bfseries 11}, 063013 (2009)},
  \href{http://arxiv.org/abs/0812.5040}{{\ttfamily arXiv:0812.5040
  [quant-ph]}}.

\bibitem{bisio10}
A.~{Bisio}, G.~{Chiribella}, G.~M. {D'Ariano}, S.~{Facchini}, and
  P.~{Perinotti}, ``{Optimal quantum learning of a unitary transformation},''
  \href{http://dx.doi.org/10.1103/PhysRevA.81.032324}{{\em Phys. Rev.~A}
  {\bfseries 81}, 032324 (2010)},
  \href{http://arxiv.org/abs/0903.0543}{{\ttfamily arXiv:0903.0543
  [quant-ph]}}.

\bibitem{chiribella09}
G.~{Chiribella}, G.~M. {D'Ariano}, P.~{Perinotti}, and B.~{Valiron}, ``{Quantum
  computations without definite causal structure},''
  \href{http://dx.doi.org/10.1103/PhysRevA.88.022318}{{\em Phys. Rev.~A}
  {\bfseries 88}, 022318 (2013)},
  \href{http://arxiv.org/abs/0912.0195}{{\ttfamily arXiv:0912.0195
  [quant-ph]}}.

\bibitem{oreshkov11}
O.~{Oreshkov}, F.~{Costa}, and {\v C}.~{Brukner}, ``{Quantum correlations with
  no causal order},'' \href{http://dx.doi.org/10.1038/ncomms2076}{{\em Nature
  Communications} {\bfseries 3}, 1092 (2012)},
  \href{http://arxiv.org/abs/1105.4464}{{\ttfamily arXiv:1105.4464
  [quant-ph]}}.

\bibitem{wolfbook_coro}
M.~{Wolf}, ``{Quantum Channels and Operations Guided Tour (Corolarry 6,2)}.''
\newblock
  \url{http://www-m5.ma.tum.de/foswiki/pub/M5/Allgemeines/MichaelWolf/QChannelLecture.pdf}.

\bibitem{cariello12}
D.~{Cariello}, ``{An Elementary Description of the Positive Maps with Positive
  Inverse},'' {\em CNMAC12} .
  \url{http://sbmac.org.br/eventos/cnmac/xxxiv_cnmac/pdf/541.pdf}.

\bibitem{wootters82}
W.~K. Wootters and W.~H. Zurek, ``A single quantum cannot be cloned,''
  \href{http://dx.doi.org/10.1038/299802a0}{{\em Nature} {\bfseries 299}, 802
  (1982)}.

\bibitem{buzek99}
V.~Bu\ifmmode~\check{z}\else \v{z}\fi{}ek, M.~Hillery, and R.~F. Werner,
  ``Optimal manipulations with qubits: Universal-NOT gate,''
  \href{http://dx.doi.org/10.1103/PhysRevA.60.R2626}{{\em Phys. Rev. A}
  {\bfseries 60}, R2626--R2629 (1999)},
  \href{http://arxiv.org/abs/quant-ph/9901053}{{\ttfamily
  arXiv:quant-ph/9901053 [quant-ph]}}.

\bibitem{acin01}
A.~{Ac{\'\i}n}, ``{Statistical Distinguishability between Unitary
  Operations},'' \href{http://dx.doi.org/10.1103/PhysRevLett.87.177901}{{\em
  Phys. Rev. Lett.} {\bfseries 87}, 177901 (2001)},
  \href{http://arxiv.org/abs/quant-ph/0102064}{{\ttfamily
  arXiv:quant-ph/0102064 [quant-ph]}}.

\bibitem{dariano02}
G.~{Mauro D'Ariano}, P.~L. {Presti}, and M.~G.~A. {Paris}, ``{Improved
  discrimination of unitary transformations by entangled probes},''
  \href{http://dx.doi.org/10.1088/1464-4266/4/4/304}{{\em Journal of Optics B:
  Quantum and Semiclassical Optics} {\bfseries 4}, S273--S276 (2002)},
  \href{http://arxiv.org/abs/quant-ph/0204050}{{\ttfamily
  arXiv:quant-ph/0204050 [quant-ph]}}.

\bibitem{chiribella08-2}
G.~{Chiribella}, G.~M. {D'Ariano}, and P.~{Perinotti}, ``{Optimal Cloning of
  Unitary Transformation},''
  \href{http://dx.doi.org/10.1103/PhysRevLett.101.180504}{{\em Phys. Rev.
  Lett.} {\bfseries 101}, 180504 (2008)},
  \href{http://arxiv.org/abs/0804.0129}{{\ttfamily arXiv:0804.0129
  [quant-ph]}}.

\bibitem{sardharwalla16}
I.~S.~B. {Sardharwalla}, T.~S. {Cubitt}, A.~W. {Harrow}, and N.~{Linden},
  ``{Universal Refocusing of Systematic Quantum Noise},'' {\em ArXiv e-prints}
  (2016), \href{http://arxiv.org/abs/1602.07963}{{\ttfamily arXiv:1602.07963
  [quant-ph]}}.

\bibitem{navascues17}
M.~{Navascu{\'e}s}, ``{Resetting Uncontrolled Quantum Systems},''
  \href{http://dx.doi.org/10.1103/PhysRevX.8.031008}{{\em Physical Review X}
  {\bfseries 8}, 031008 (2018)},
  \href{http://arxiv.org/abs/1710.02470}{{\ttfamily arXiv:1710.02470
  [quant-ph]}}.

\bibitem{vidal01}
G.~{Vidal}, L.~{Masanes}, and J.~I. {Cirac}, ``{Storing Quantum Dynamics in
  Quantum States: A Stochastic Programmable Gate},''
  \href{http://dx.doi.org/10.1103/PhysRevLett.88.047905}{{\em Phys. Rev. Lett.}
  {\bfseries 88}, 047905 (2002)},
  \href{http://arxiv.org/abs/quant-ph/0102037}{{\ttfamily quant-ph/0102037}}.

\bibitem{sasaki02}
M.~{Sasaki} and A.~{Carlini}, ``{Quantum learning and universal quantum
  matching machine},'' \href{http://dx.doi.org/10.1103/PhysRevA.66.022303}{{\em
  Phys. Rev.~A} {\bfseries 66}, 022303 (2002)},
  \href{http://arxiv.org/abs/quant-ph/0202173}{{\ttfamily quant-ph/0202173}}.

\bibitem{gammelmark09}
S.~{Gammelmark} and K.~{M{\o}lmer}, ``{Quantum learning by measurement and
  feedback},'' \href{http://dx.doi.org/10.1088/1367-2630/11/3/033017}{{\em New
  Journal of Physics} {\bfseries 11}, 033017 (2009)},
  \href{http://arxiv.org/abs/0803.1418}{{\ttfamily arXiv:0803.1418
  [quant-ph]}}.

\bibitem{sedlak18}
M.~{Sedl{\'a}k}, A.~{Bisio}, and M.~{Ziman}, ``{Optimal Probabilistic Storage
  and Retrieval of Unitary Channels},''
  \href{http://dx.doi.org/10.1103/PhysRevLett.122.170502}{{\em Phys. Rev.
  Lett.} {\bfseries 122}, 170502 (2019)},
  \href{http://arxiv.org/abs/1809.04552}{{\ttfamily arXiv:1809.04552
  [quant-ph]}}.

\bibitem{miyazaki17}
J.~Miyazaki, A.~Soeda, and M.~Murao, ``Complex conjugation supermap of unitary
  quantum maps and its universal implementation protocol,''
  \href{http://dx.doi.org/10.1103/PhysRevResearch.1.013007}{{\em Phys. Rev.
  Research} {\bfseries 1}, 013007 (2019)},
  \href{http://arxiv.org/abs/1706.03481}{{\ttfamily arXiv:1706.03481
  [quant-ph]}}.

\bibitem{araujo15}
M.~{Ara{\'u}jo}, C.~{Branciard}, F.~{Costa}, A.~{Feix}, C.~{Giarmatzi}, and {\v
  C}.~{Brukner}, ``{Witnessing causal nonseparability},''
  \href{http://dx.doi.org/10.1088/1367-2630/17/10/102001}{{\em New Journal of
  Physics} {\bfseries 17}, 102001 (2015)},
  \href{http://arxiv.org/abs/1506.03776}{{\ttfamily arXiv:1506.03776
  [quant-ph]}}.

\bibitem{ebler18}
D.~{Ebler}, S.~{Salek}, and G.~{Chiribella}, ``{Enhanced Communication with the
  Assistance of Indefinite Causal Order},''
  \href{http://dx.doi.org/10.1103/PhysRevLett.120.120502}{{\em Phys. Rev.
  Lett.} {\bfseries 120}, 120502 (2018)},
  \href{http://arxiv.org/abs/1711.10165}{{\ttfamily arXiv:1711.10165
  [quant-ph]}}.

\bibitem{salek18}
S.~{Salek}, D.~{Ebler}, and G.~{Chiribella}, ``{Quantum communication in a
  superposition of causal orders},'' {\em arXiv e-prints} arXiv:1809.06655
  (2018), \href{http://arxiv.org/abs/1809.06655}{{\ttfamily arXiv:1809.06655
  [quant-ph]}}.

\bibitem{PRL}
M.~{T{\'u}lio Quintino}, Q.~{Dong}, A.~{Shimbo}, A.~{Soeda}, and M.~{Murao},
  ``{Reversing unknown quantum transformations: A universal quantum circuit for
  inverting general unitary operations},'' {\em arXiv e-prints}
  arXiv:1810.06944 (2018), \href{http://arxiv.org/abs/1810.06944}{{\ttfamily
  arXiv:1810.06944 [quant-ph]}}.

\bibitem{bisio13}
A.~{Bisio}, G.~M. {D'Ariano}, P.~{Perinotti}, and M.~{Sedl{\'a}k}, ``{Optimal
  processing of reversible quantum channels},''
  \href{http://dx.doi.org/10.1016/j.physleta.2014.04.042}{{\em Physics Letters
  A} {\bfseries 378}, 1797--1808 (2014)},
  \href{http://arxiv.org/abs/1308.3254}{{\ttfamily arXiv:1308.3254
  [quant-ph]}}.

\bibitem{trillo19}
D.~{Trillo}, B.~{Dive}, and M.~{Navascues}, ``{Remote Time Manipulation},''
  {\em arXiv e-prints} arXiv:1903.10568 (2019),
  \href{http://arxiv.org/abs/1903.10568}{{\ttfamily arXiv:1903.10568
  [quant-ph]}}.

\bibitem{araujo14}
M.~{Ara{\'u}jo}, F.~{Costa}, and {\v{C}}.~{Brukner}, ``{Computational Advantage
  from Quantum-Controlled Ordering of Gates},''
  \href{http://dx.doi.org/10.1103/PhysRevLett.113.250402}{{\em Phys. Rev.
  Lett.} {\bfseries 113}, 250402 (2014)},
  \href{http://arxiv.org/abs/1401.8127}{{\ttfamily arXiv:1401.8127
  [quant-ph]}}.

\bibitem{pills67}
J.~de~Pillis, ``Linear transformations which preserve Hermitian and positive
  semidefinite operators,'' {\em Pacific Journal of Mathematics} {\bfseries
  23}, 129--137 (1967).

\bibitem{jamiolkowski72}
A.~Jamio{\l}kowski, ``Linear transformations which preserve trace and positive
  semidefiniteness of operators,'' {\em Reports on Mathematical Physics}
  {\bfseries 3}, 275--278 (1972).

\bibitem{choi75}
M.-D. Choi, ``Completely positive linear maps on complex matrices,''
  \href{http://dx.doi.org/https://doi.org/10.1016/0024-3795(75)90075-0}{{\em
  Linear Algebra and its Applications} {\bfseries 10}, 285 -- 290 (1975)}.
  \url{http://www.sciencedirect.com/science/article/pii/0024379575900750}.

\bibitem{stinespring55}
W.~F. Stinespring, ``Positive Functions on C*-Algebras,'' {\em Proceedings of
  the American Mathematical Society} {\bfseries 6}, 211--216 (1955).
  \url{http://www.jstor.org/stable/2032342}.

\bibitem{naimark40}
M.~Neumark, ``Spectral functions of a symmetric operator,'' {\em Izvestiya
  Rossiiskoi Akademii Nauk. Seriya Matematicheskaya} {\bfseries 4}, 277--318
  (1940).

\bibitem{bisio16}
A.~{Bisio}, G.~{Chiribella}, G.~M. {D'Ariano}, and P.~{Perinotti}, ``{Quantum
  networks: General theory and applications},''
  \href{http://dx.doi.org/10.2478/v10155-011-0003-9}{{\em Acta Physica Slovaca}
  {\bfseries 61}, 273--390 (2011)},
  \href{http://arxiv.org/abs/1601.04864}{{\ttfamily arXiv:1601.04864
  [quant-ph]}}.

\bibitem{kretschmann05}
D.~{Kretschmann} and R.~F. {Werner}, ``{Quantum channels with memory},''
  \href{http://dx.doi.org/10.1103/PhysRevA.72.062323}{{\em Phys. Rev.~A}
  {\bfseries 72}, 062323 (2005)},
  \href{http://arxiv.org/abs/quant-ph/0502106}{{\ttfamily quant-ph/0502106}}.

\bibitem{araujo16}
M.~Ara{\'{u}}jo, A.~Feix, M.~Navascu{\'{e}}s, and {\v{C}}.~Brukner, ``A
  purification postulate for quantum mechanics with indefinite causal order,''
  \href{http://dx.doi.org/10.22331/q-2017-04-26-10}{{\em {Quantum}} {\bfseries
  1}, 10 (2017)}. \url{https://doi.org/10.22331/q-2017-04-26-10}.

\bibitem{chiribella16}
G.~{Chiribella} and D.~{Ebler}, ``{Optimal quantum networks and one-shot
  entropies},'' \href{http://dx.doi.org/10.1088/1367-2630/18/9/093053}{{\em New
  Journal of Physics} {\bfseries 18}, 093053 (2016)},
  \href{http://arxiv.org/abs/1606.02394}{{\ttfamily arXiv:1606.02394
  [quant-ph]}}.

\bibitem{chefles03}
A.~{Chefles}, R.~{Jozsa}, and A.~{Winter}, ``{On the existence of physical
  transformations between sets of quantum states},''
  \href{http://dx.doi.org/10.1142/S0219749904000031}{{\em International Journal
  of Quantum Information} {\bfseries 2}, 11--21 (2004)},
  \href{http://arxiv.org/abs/quant-ph/0307227}{{\ttfamily quant-ph/0307227
  [quant-ph]}}.

\bibitem{heinosaari12}
T.~{Heinosaari}, M.~A. {Jivulescu}, D.~{Reeb}, and M.~M. {Wolf}, ``{Extending
  quantum operations},'' \href{http://dx.doi.org/10.1063/1.4755845}{{\em
  Journal of Mathematical Physics} {\bfseries 53}, 102208--102208 (2012)},
  \href{http://arxiv.org/abs/1205.0641}{{\ttfamily arxiv:1205.0641 [math-ph]}}.

\bibitem{dong18}
Q.~Dong, M.~T. Quintino, A.~Soeda, and M.~Murao, ``Implementing positive maps
  with multiple copies of an input state,''
  \href{http://dx.doi.org/10.1103/PhysRevA.99.052352}{{\em Phys. Rev. A}
  {\bfseries 99}, 052352 (2019)},
  \href{http://arxiv.org/abs/1808.05788}{{\ttfamily arXiv:1808.05788
  [quant-ph]}}.

\bibitem{miyazaki_thesis}
J.~{Miyazaki}, {\em {On the difference between positivity and complete
  positivity of maps in quantum theory}}.
\newblock PhD thesis, 2017.

\bibitem{MATLAB}
MATLAB.
\newblock The MathWorks Inc., Natick, Massachusetts.

\bibitem{cvx}
\url{http://cvxr.com/cvx}.

\bibitem{MOSEK}
\url{https://www.mosek.com/}.

\bibitem{SEDUMI}
\url{http://sedumi.ie.lehigh.edu/}.

\bibitem{SDPT3}
\url{http://www.math.nus.edu.sg/~mattohkc/sdpt3.html}.

\bibitem{MTQ_github_unitary}
\url{https://github.com/mtcq/unitary_inverse}.

\bibitem{MIT_license}
\url{https://opensource.org/licenses/MIT}.

\bibitem{QETLAB}
N.~Johnston, ``{QETLAB}: A {MATLAB} toolbox for quantum entanglement, version
  0.9.'' \url{http://qetlab.com}, Jan., 2016.

\bibitem{bennett93}
C.~H. Bennett, G.~Brassard, C.~Cr\'epeau, R.~Jozsa, A.~Peres, and W.~K.
  Wootters, ``Teleporting an unknown quantum state via dual classical and
  Einstein-Podolsky-Rosen channels,''
  \href{http://dx.doi.org/10.1103/PhysRevLett.70.1895}{{\em Phys. Rev. Lett.}
  {\bfseries 70}, 1895--1899 (1993)}.
  \url{https://link.aps.org/doi/10.1103/PhysRevLett.70.1895}.

\bibitem{ishizaka08}
S.~{Ishizaka} and T.~{Hiroshima}, ``{Asymptotic Teleportation Scheme as a
  Universal Programmable Quantum Processor},''
  \href{http://dx.doi.org/10.1103/PhysRevLett.101.240501}{{\em Phys. Rev.
  Lett.} {\bfseries 101}, 240501 (2008)},
  \href{http://arxiv.org/abs/0807.4568}{{\ttfamily arXiv:0807.4568
  [quant-ph]}}.

\bibitem{gottesman99}
D.~{Gottesman} and I.~L. {Chuang}, ``{Demonstrating the viability of universal
  quantum computation using teleportation and single-qubit operations},''
  \href{http://dx.doi.org/10.1038/46503}{{\em Nature} {\bfseries 402}, 390--393
  (1999)}, \href{http://arxiv.org/abs/quant-ph/9908010}{{\ttfamily
  quant-ph/9908010}}.

\bibitem{studzinski16}
M.~{Studzi{\'n}ski}, S.~{Strelchuk}, M.~{Mozrzymas}, and M.~{Horodecki},
  ``{Port-based teleportation in arbitrary dimension},''
  \href{http://dx.doi.org/10.1038/s41598-017-10051-4}{{\em Scientific Reports}
  {\bfseries 7}, 10871 (2017)},
  \href{http://arxiv.org/abs/1612.09260}{{\ttfamily arXiv:1612.09260
  [quant-ph]}}.

\bibitem{chiribella11}
G.~{Chiribella}, ``{Perfect discrimination of no-signalling channels via
  quantum superposition of causal structures},''
  \href{http://dx.doi.org/10.1103/PhysRevA.86.040301}{{\em Phys. Rev.~A}
  {\bfseries 86}, 040301 (2012)},
  \href{http://arxiv.org/abs/1109.5154}{{\ttfamily arXiv:1109.5154
  [quant-ph]}}.

\end{thebibliography}
\end{document}